\newcommand{\vdot}[2]{\bm{#1} \cdot \bm{#2}}
\newcommand{\ketbra}[2][\relax]{%
	\ifx\relax#1%
	\ket{#2}\bra{#2}
	\else
	\ket{#1}\bra{#2}
	\fi
}
\newcommand{\supp}{\operatorname{supp}}
\newcommand{\wt}{\operatorname{wt}}
\begin{document}
	\newtheorem{definition}{Definition}
	\newtheorem{lemma}{Lemma}
	\newtheorem{corollary}{Corollary}
	\newtheorem{theorem}{Theorem}
	
	\title{A Unified Framework for Optimizing Uniformly Controlled Structures in Quantum Circuits}
	
	\author{
		Chengzhuo~Xu,
		Xiao~Chen,
		Xi~Li,
		Zhihao~Liu
		and~Zhigang~Li
		\thanks{Manuscript received December xx, 2025;}
		\thanks{Chengzhuo Xu is with School of Computer Science and Engineering, Southeast University, Nanjing, 211189, China  (e-mail: 230228516@seu.edu.cn).}
		\thanks{Xiao Chen is with College of Information Engineering, China Jiliang University, Hangzhou, 310018, China (e-mail: 24a0305209@cjlu.edu.cn).}
		\thanks{Xi Li is with the School of Software, Henan University, Kaifeng 475004, China (e-mail: Lx\_nuli@yeah.net).}
		\thanks{Zhihao Liu is with School of Computer Science and Engineering, Southeast University, Nanjing, 211189, China and Key Laboratory of Computer Network and Information Integration (Southeast University), Ministry of Education, Nanjing, 211189, Jiangsu, China (email: liuzhtopic@163.com)}
		\thanks{Zhigang Li is with Nanjing Meteorology Bereau, Nanjing, 210019, Jiangsu, China (email: zhigangyangquan@163.com)}
		\thanks{(Corresponding author: Zhihao Liu.)}%
	}
	
	\markboth{Journal of \LaTeX\ Class Files,~Vol.~14, No.~8, August~2021}%
	{Shell \MakeLowercase{\textit{et al.}}: A Sample Article Using IEEEtran.cls for IEEE Journals}
	
	\IEEEpubid{0000--0000/00\$00.00~\copyright~2025 IEEE}
	
	\maketitle
	
	\begin{abstract}
		Quantum unitaries of the form ${\Sigma_{c}\ket{c}\bra{c}\otimes U_{c}}$ are ubiquitous in quantum algorithms. This class encompasses not only standard uniformly controlled gates (UCGs) but also a wide range of circuits with uniformly controlled structures. However, their circuit-depth and gate-count complexities have not been systematically analyzed within a unified framework. 
		In this work, we study the general decomposition problem for UCG and UCG-like structure. We then introduce the restricted Uniformly Controlled Gates (rUCGs) as a unified algebraic model, defined by a 2-divisible Abelian group that models the controlled gate set. This model captures uniformly controlled rotations, multi-qubit uniformly controlled gates, and diagonal unitaries.
		Furthermore, this model also naturally incorporates k-sparse version (k-rUCGs), where only a subset of control qubits participate in each multi-qubit gate.
		Building on this algebraic model, we develop a general framework. For an n-control rUCG, the framework reduce the gate complexity from ${O(n2^n)}$ to ${O(2^n})$ and the circuit depth from ${O(2^n\log n)}$ to ${O(2^n\log n/n)}$.
		The framework further provides systematic size and depth bounds for k-rUCGs by exploiting sparsity in the control space, with same optimization coefficient as rUCG, respectively.
		Empirical evaluations on representative QAOA circuits
		confirm reductions in depth and size
		which 
		highlight that the rUCG model and its associated decomposition framework unify circuits previously considered structurally distinct under a single, asymptotically optimal synthesis paradigm.
	\end{abstract}

	\begin{IEEEkeywords}
		Quantum circuit synthesis, uniformly controlled gate, circuit depth, 
		diagonal unitary operator
	\end{IEEEkeywords}
	
	\section{\label{sec:level1}Introduction}
	\IEEEPARstart{Q}{uantum} multi-controlled gates appear pervasively in quantum algorithms and subroutines. 
	Such arrays of multi-controlled gates arise whenever a computational-basis register routes execution to different program fragments: address-dependent memory access (QRAM) \cite{QRAM2008GiovannettiV,QRAM2019ParkDK,QRAM2021VerasTMLd}, oracle selection in quantum walk \cite{kUCR2025GonzalesA} or search algorithms \cite{Search2020SatohT}, block-encoding constructions in QSVT \cite{QSVT2019GilyénA,QSVT2025LiuD,QSVT2025PuramV,QSVT2025LiZ}, and many state-preparation routines \cite{QSP2019SandersYR,QSP2022ZhangXM,QSP2023MelnikovAA,QSP2024AraujoIF,QSP2024IaconisJ}. 
	These constructions all instantiate a common \emph{uniformly controlled structure}, in which a control register partitions the Hilbert space and selects target operations accordingly. 
	In practice, such uniformly controlled structures are a dominant source of circuit complexity: naively implementing an arbitrary selection over $2^n$ target operators typically requires resources that scale exponentially with the number of control qubits. 
	This observation identifies the natural technical focus of this paper: how to model and exploit algebraic restrictions of target sets in multiplexed, multi-qubit uniformly controlled structures to reduce size and depth.
	
	A convenient canonical form for multiplexed control is UCG, introduced by Möttönen \emph{et~al.}~\cite{UCR2004MöttönenM}. In its basic instantiation a UCG uses $n$ control qubits to select one of $2^n$ single-qubit target gates; its well-known decomposition attains the optimal asymptotic count for multiplexed single-qubit rotations, using $2^n-1$ CNOTs and $2^n$ single-qubit rotations. Because of this clean combinatorial structure, UCGs have been widely used as a building block in recursive synthesis and state-preparation procedures: they align naturally with binary-branching decomposition trees and provide a compact representation for control-dependent single-qubit actions~\cite{UC1G2024AllcockJ,UC1G2005BergholmV}. To conclude, the UCG may be viewed as a fundamental instance of a uniformly controlled structure. 
	
	\IEEEpubidadjcol
	Crucially, although traditional UCGs target on single-qubit gate, many practically relevant controlled-arrays are \emph{structurally} analogous to UCGs but differ in the algebraic type or dimension of the target unitaries. Representative examples include: diagonal unitary operator \cite{DUO2014NakataY,DUO2023SunX,DUO2024ZhangS}, partially controlled gatec\cite{kUCR2024LiL,kUCR2025GonzalesA}, uniformly controlled multi-qubit gate \cite{UCmG2025Bee-LindgrenM}. All these examples share the same \emph{computational-basis–conditioned selection graph} as UCGs: the control register partitions the Hilbert space into regions, and a target unitary is associated with each region.
	
	Despite this shared similar structure, synthesis and optimization techniques have largely developed in isolation for each family. Diagonal unitaries admit specialized constructions that exploit phase commutativity to reduce depth by $O(n)$~\cite{DUO2023SunX}; single-qubit UCGs admit CNOT-optimal decomposition tailored to multiplexed rotations; block-encoding and QSVT constructions use projector-based identities and spectral techniques; non-uniformly controlled operators are typically handled by bespoke conditional constructions. Because these methods rely on different normal forms and algebraic properties, simplifications obtained for one class (e.g., depth reductions for diagonal operators) do not readily transfer to other UCG-like constructions, despite sharing the same control graph. This fragmentation prevents systematic reuse of optimizations and leaves substantial parallelism and gate-count reductions unexploited.
	
	Motivated by this gap, we introduce restricted UCGs (rUCGs)—a generalized UCG family in which the target gate is restricted to one certain type. By making both the control graph and the target algebra explicit, rUCGs provide a single modeling language in which one can (i) compare decomposition complexity across families, (ii) transfer algebraic simplifications (such as commutativity-based depth reductions) from one family to another, and (iii) design parallelization strategies that exploit shared regularities. figure \ref{fig:rucg-model} summarizes how rUCGs subsume traditional UCGs and the diverse class of UCG-like structure.
	
	Building on this model, the contributions of this paper are: (1) a formal definition of rUCGs and algebraic criteria that characterize useful target restrictions; (2) decomposition methods that reduce circuit size and depth; and (3) application-driven benchmarks demonstrate that optimizations known for diagonal or commuting targets. 
	
	\begin{figure}[!t]
		\centering
		\includegraphics[width=\columnwidth]{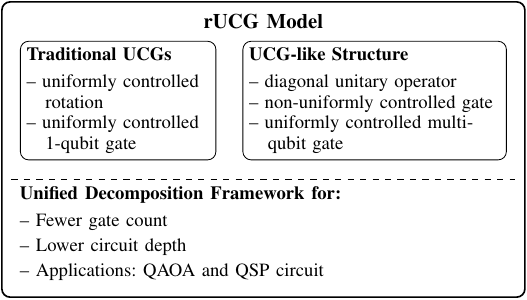}
		\caption{
			Overview of the proposed rUCG model that unifies traditional UCGs and UCG-like operator families within a single model, together with the goal of unified decomposition framework for size and depth.
		}
		\label{fig:rucg-model}
	\end{figure}
	
	The paper is organized as follows. Section~\ref{sec:level2} introduces definitions and preliminaries. Section~\ref{sec:level3} formalizes the rUCG model. Sections~\ref{sec:level} and~\ref{sec:level4} present decomposition of optimized size and depth. Section~\ref{sec:level5} benchmarks representative applications, and Section~\ref{sec:level7} concludes.
	
	\section{\label{sec:level2}Preliminaries}
	
	\subsection{Uniformly Controlled Gate (UCG)}
	
	UCGs constitute a class of multi-control operators determined entirely by the state of control qubits, referred to as the \emph{control state}. For an n-qubit control register, the computational basis $\{|c\rangle\}_{c=0}^{N-1}$ (with $N=2^n$) enumerates all possible control states, each of which selects a corresponding unitary operator on the $m$-qubit target system. As illustrated in figure \ref{fig:ucg}, a UCG $L_n[\mathbf{U}]$ is decomposed as
	\begin{equation}
		L_n[\mathbf{U}] = \prod_{i=0}^{N-1} C_n^i[U_i],
		\label{eq:ucg}
	\end{equation}
	where $\mathbf{U} = \{U_i\}$ is a sequence of target gates. Each factor $C_n^i[U_i]$ is a conditioned unitary activated only when the control register is in state $|i\rangle$. We refer to $U_i$ as the \emph{target operator} under control state $|i\rangle$.
	
	\begin{figure}[!t]
		\centering
		\includegraphics[width=\linewidth]{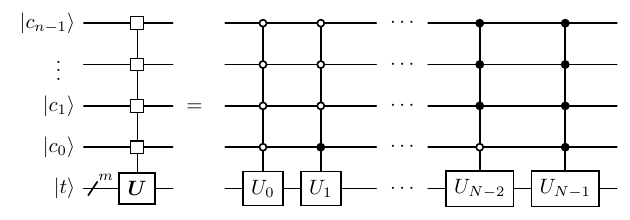}
		\caption{Uniformly controlled gate (UCG) $L_n[\mathbf{U}]$ implemented as a sequence of $N=2^n$ conditioned unitaries $C_n^i[U_i]$. Each $U_i$ acts on the $m$-qubit target register $|t\rangle$ when the control register is in control state $|i\rangle$.}
		\label{fig:ucg}
	\end{figure}
	
	A representation especially suitable for analysis expresses the UCG as a control-state–indexed block-diagonal operator:
	\begin{equation}
		L_n[\bm{U}] = \sum_{c=0}^{N-1} |c\rangle\langle c| \otimes U_c .
		\label{eq:ucg2}
	\end{equation}
	
	\subsection{Optimization of UCGs}
	
	Optimization becomes difficult when the targets are completely arbitrary. However, many relevant real-world examples demonstrate that an optimizable UCG may meet the following conditions:
	
	\begin{itemize}
		\item \textbf{Commutability of the target.}  
		All target operators belong to the same operator family, often forming a commuting group, ensuring simplified decomposition and synthesis.
		
		\item \textbf{Density of controls.} 
		Multi-controlled gate count is large enough for optimization to be effective.
		\begin{itemize}
			
			\item \textbf{Global density.}  
			Dense over full control-state space from $0$ to $2^n\!-\!1$.
			
			\item \textbf{Localized density.}
			Dense over a control-state subspace that follows a certain pattern.
		\end{itemize}
	\end{itemize}
	
	To ensure generality and abstraction, we treat $U$ as a symbolic placeholder for an elementary gate whose precise structure is not fixed during decomposition.
	
	\begin{definition}
		\label{def:Uprimitive}
		We work in the \emph{U-primitive} model, together with basic gate set (including CNOT and single-qubit gates), constituting indivisible the unit of UCG decomposition.
		Given the target operator family $\mathcal{U}$, U-primitive consists of
		\begin{equation*}
			\mathcal{G}_{CU} \;=\; \{ \, \text{controlled-}U \;|\; U\in\mathcal{U} \,\} \;\cup\; \mathcal{U}.
		\end{equation*}
		Under this model we measure circuit cost by the size and depth of occurrences of elements in $\mathcal{G}_{CU}$ and basic gates. 
		
		Occasionally, gates algebraically derived from $\mathcal{G}_{CU}$ 
		through diagonalization are needed.
		We refer to this enlarged family as the U-closed family.
	\end{definition}
	
	For simplicity, we will use U and CU to indicate which category of gate is used for decomposition.

	\subsection{Special Cases of Uniformly Controlled Structures}
	
	UCGs encompass several representative configurations that illustrate both the classical scope of UCGs and the boundary cases that motivate our generalized rUCG model. This discussion focuses on three representative configurations of UCGs:
	
	\begin{itemize}
		\item Uniformly controlled rotations, a classical simple case.
		\item Diagonal unitary operators, a case extending target operators.
		\item k-sparse UCGs, a sparse controlled case.
	\end{itemize}
	
	\paragraph{Uniformly controlled rotations}
	The familiar uniformly controlled rotations corresponds to the case where each $U_c$ is a single-qubit rotation whose angle depends on $c$. In this regime, the CNOT cascade \ provides an $O(2^n)$-gate decomposition driven purely by uniformity \cite{UCR2004MöttönenM}. This instance reflects the simplest and most commonly used UCG structure:
	\begin{equation}
		L_n[\mathbf{R}]
		= \sum_{c=0}^{N-1} |c\rangle\langle c| \otimes R(\theta_c),
	\end{equation}
	where $\mathbf{R}$ denotes a sequence of rotations, and $R(\theta_c)$ denotes a single-qubit rotation about a fixed axis, typically chosen from $\{R_y, R_z\}$.
	
	\paragraph{Diagonal unitary operators}
	Diagonal unitaries can be viewed as uniformly controlled phases, with the target operator degenerated into scalars:
	\begin{equation}
		L_n(\mathbf{\Lambda})
		= \sum_{c=0}^{N-1} e^{j\chi_c}\ket{c}\bra{c}.
	\end{equation}
	
	\paragraph{k-sparse UCGs}
	A further extension is the \emph{k-sparse UCG}, where each target operator depends only on a k-size subset of control qubits:
	\begin{equation}
		L_{n,\le k}[\mathbf{U}]
		= 
		\mathrm{LU}_{n,k}=\prod_{i} C_{Q_i}[U_i],
	\end{equation}
	where $Q_i, U_i$ are control qubits and target gate of $i$-th gate, and $|Q_i|\le k$.
	This structure effectively generalizes UCGs to allow for non-uniform controlled pattern. 
	
	As a specific case, we use
	$\mathrm{LU}_{n,k}^{=k}$
	to denote the k-weight UCG, where the number of control qubits is fixed at $k$.
	
	\section{\label{sec:level3}Restricted Uniformly Controlled Gates}
	
	The goal of this section is to introduce a general algebraic abstraction that captures all of these optimizable cases under a single model.
	
	\subsection{Definition and Algebraic Model}
	
	Consider a UCG acting on a n control qubits, with target operators restricted to a fixed set of types, denoted as $\mathcal{U}$.
	
	\begin{definition}[Restricted UCG, rUCG]
		\label{def:rucg}
		A UCG is a \emph{rUCG} if the target operator family $\mathcal{U}$  forms a subgroup satisfying:
		\begin{enumerate}
			\item[(R1)] \textbf{Abelian closure:} $U_a U_b = U_b U_a$ for all $U_a,U_b\in\mathcal{U}$;
			\item[(R2)] \textbf{2-divisibility:} for every $U\in\mathcal{U}$ there exists $V\in\mathcal{U}$ such that $V^2 = U$.
		\end{enumerate}
	\end{definition}
	
	These conditions deliberately encompass a broad class of optimizable UCG instances, such as:
	\begin{itemize}
		\item uniformly controlled rotations \cite{UCR2004MöttönenM};
		\item diagonal-unitary operators\cite{DUO2014NakataY};
		\item uniformly controlled commuting Hamiltonian exponentials $e^{j\theta H}$ with mutually commuting $H$\cite{UCH2019LowGH};
		\item uniformly controlled modular adders (mod an odd number) \cite{UCADD2004MeterRV}.
	\end{itemize}
	
	\subsection{\label{sec:rucg-represent}Representation and Vector Embedding}
	
	\begin{definition}
		Let $(D_2,+)$ be a 2-divisible Abelian group.  
		A gate family $\mathcal{U}$ is said to be \emph{defined over $D_2$} if there exists a group homomorphism 
		\(
		U(\,\cdot\,):D_2 \to \mathcal{U}
		\)
		(still sometimes we use the single $U$ to denote a standalone unitary matrix for simplicity).
	\end{definition}
	
	\begin{corollary}
		For any $a,b\in D_2$, the homomorphism property gives
		\[
		U(a+b)=U(a)U(b)
		\quad\text{and}\quad
		U(a/2)=\sqrt{U(a)}.
		\]
	\end{corollary}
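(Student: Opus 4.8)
The plan is to read both identities straight off the defining property of the homomorphism $U(\,\cdot\,):D_2\to\mathcal{U}$, the only genuine care being to assign a meaning to the symbol $a/2$. As a preliminary I would record the standard homomorphism consequences $U(0)=I$ (from $U(0)=U(0+0)=U(0)^2$ and invertibility) and $U(-x)=U(x)^{-1}$, since the square-root claim uses that $U$ sends the additive identity of $D_2$ to the multiplicative identity of $\mathcal{U}$.

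For the first identity, a group homomorphism from $(D_2,+)$ to the multiplicative group $\mathcal{U}$ is by definition the assertion $U(x+y)=U(x)U(y)$ for all $x,y\in D_2$; specializing $x=a$, $y=b$ gives $U(a+b)=U(a)U(b)$ with nothing further to prove. For the second, I would first invoke $2$-divisibility of $D_2$: there exists $x\in D_2$ with $x+x=a$, and I fix once and for all a halving map $a\mapsto a/2$ selecting such an $x$. Applying the homomorphism to $a/2+a/2=a$ yields $U(a/2)\,U(a/2)=U(a)$, so $U(a/2)$ is a square root of $U(a)$ inside $\mathcal{U}$; declaring $\sqrt{U(a)}:=U(a/2)$ then renders the stated equality definitional, and one checks it is the same convention already used for $\sqrt{U(a)}$ elsewhere (e.g.\ the $U$-closed family).

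The one subtlety — and the only place a reader might object — is the well-definedness of "division by two" and hence of $\sqrt{\,\cdot\,}$: $D_2$ may carry $2$-torsion, so $a/2$ is a priori a coset of the $2$-torsion subgroup rather than a single element. I would dispatch this by observing that any two halvings $x,x'$ of $a$ satisfy $2(x-x')=0$, whence $U(x)$ and $U(x')$ differ by the factor $U(x-x')$, which is an involution since $U(x-x')^2=U\bigl(2(x-x')\bigr)=U(0)=I$; thus every halving of $a$ maps to a legitimate square root of $U(a)$, and fixing the halving map (automatic, e.g., when $D_2$ is torsion-free as in the rotation and diagonal-unitary instances) pins down a canonical $\sqrt{U(a)}$. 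Everything else is a one-line application of the homomorphism axiom, so I expect no real obstacle beyond stating this convention cleanly.
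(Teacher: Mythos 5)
Your proposal is correct and follows essentially the same route as the paper, which simply invokes the homomorphism property for the first identity and $2$-divisibility plus the homomorphism applied to $a/2 + a/2 = a$ for the second. Your additional remarks on $U(0)=I$ and on the well-definedness of $a/2$ (and hence of $\sqrt{U(a)}$) in the presence of $2$-torsion go beyond the paper's one-line argument but are consistent with it and only make the convention explicit.
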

	
	\begin{proof}
		Both identities follow immediately from $U$ being a homomorphism and $D_2$ being 2-divisible. 
	\end{proof}
	
	Once the relationship between the unitary gates and the group elements is established, a UCG is uniquely determined by a vector. Simplify the gate denotation $\mathrm{LU}(\bm{\chi}) := L_n \left[ \{U(\bm{\chi})\} \right]$, where $\chi$ is termed the \emph{target vector}, an $N$-dimension vector with each element $\chi_c$ being from $D_2$. Thus, Eq.~\eqref{eq:ucg2} can be reformulated as:
	\begin{equation}
		\text{LU}\left( \bm{\chi } \right) =\sum\limits_{c=0}^{N-1}{\left| c \right\rangle \left\langle  c \right|\otimes U\left( {{\chi }_{c}} \right)},  
	\end{equation}
	as is shown in figure \ref{fig:rucg}.
	
	When the target operator is a phase, that is, for the diagonal unitary operator, it can be specifically expressed as:
	
	\begin{equation}
		\Lambda\left( \bm{\chi} \right) =\sum\limits_{c=0}^{N-1} e^{ {j{\chi}_{c}} } \ketbra{c}.
		\label{eq:duo}
	\end{equation}
	
	\begin{figure}[!t]
		\centering
		\includegraphics[width=\linewidth]{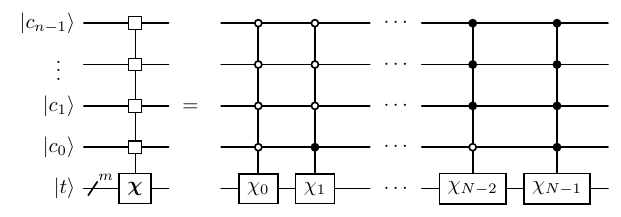}
		\caption{Restricted uniformly controlled gate(rUCG) $\mathrm{LU}(\bm{\chi})$, where $\bm{\chi}$ is called a target vector determining the target operators. For brevity, $\chi_i$ denotes $U(\chi_i)$ in a gate.}
		\label{fig:rucg}		
	\end{figure}
	
	\subsection{Local-Control Structure and k-sparse rUCG}
	Next, we append the control locality on rUCGs.  
	Throughout this subsection, let n control qubits labeled by the index set $[n]=\{1,\dots,n\}$.
	
	\begin{definition}
		\label{def:weight-sparse}
		A controlled gate $C_Q(U)$ is said to be \emph{k-weight control} if its set of control qubits
		\(
		Q\subseteq[n]
		\)
		satisfies $|Q|= k$, and all controls on the qubits in $Q$ are \emph{positive controls} (black dots).
		Additionally, if $|Q|\leq k$, $C_Q(U)$ is said to be \emph{k-sparse control}.
	\end{definition}
	
	\begin{definition}[Standard k-rUCG]
		\label{def:standard-krucg}
		An operator is a \emph{standard k-rUCG} if it can be written as a product of controlled gates
		\begin{equation*}
			\mathrm{LU}_{n,k}=\prod_{i} C_{Q_i}[U(\mu_i)],
		\end{equation*}
		where $Q_i$ and $U(\mu_i)$ is control qubits and target gate of $i$-th controlled gate $C_{Q_i}[U(\mu_i)]$, which satisfies:
		\begin{enumerate}
			\item $U(\cdot)$ maps to the $2$-divisible Abelian subgroup $\mathcal{U}$;	
			\item each control is k-sparse, which satisfies $|Q_i|\le k$;
			\item all controls in $C_{Q_i}[U(\mu_i)]$ are positive (black dots).
		\end{enumerate}
	\end{definition}
	
	\begin{definition}[General k-rUCG]
		\label{def:general-krucg}
		A operator is a \emph{general k-rUCG} if it can be written in the form of Definition~\ref{def:standard-krucg}, except that some controls are allowed to be negative controls (white dots).
	\end{definition}
	
	We now show that white dots add no expressive power.
	
	\begin{lemma}
		\label{lem:white-to-black}
		Any general k-rUCG containing white controls can be standardize as a product of controlled gates having only positive controls or zero controls.
	\end{lemma}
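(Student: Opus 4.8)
The plan is to reduce the statement to a single controlled gate and then peel off white controls one at a time by an inclusion--exclusion identity that never leaves the U-primitive gate set. Since a general k-rUCG is a product $\prod_i C_{Q_i}[U(\mu_i)]$, it suffices to rewrite each factor, in place and without changing the order of the product, as a product of controlled gates whose control sets carry only black dots (or are empty); concatenating these rewritings then yields the desired standard form.

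So I would fix one factor and write its control set as $Q=Q_{+}\sqcup Q_{-}$, with $Q_{+}$ the black-dot controls and $Q_{-}$ the white-dot controls; assume $Q_{-}\neq\emptyset$ and pick $q\in Q_{-}$. Let $A$ be the controlled gate on control set $Q\setminus\{q\}$ that keeps the original polarities there and applies $U(\mu)$, and let $B$ be the controlled gate on control set $Q$ obtained from the original factor by turning the white dot on $q$ into a black dot and applying $U(-\mu)$ in place of $U(\mu)$. I claim
\[
C_Q[U(\mu)] \;=\; A\,B .
\]
The proof is a three-way case analysis on the computational-basis value of the control register: if the qubits in $Q\setminus\{q\}$ violate their required polarities, none of the three gates acts; if they satisfy them and $q=0$, only $A$ acts and the right-hand side applies $U(\mu)$, matching the left; and if they satisfy them and $q=1$, both $A$ and $B$ act and the right-hand side applies $U(\mu)U(-\mu)=I$, matching the (inactive) left-hand side. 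Here $U(-\mu)=U(\mu)^{-1}$ is exactly the homomorphism property recorded in the Corollary of Section~\ref{sec:rucg-represent}, and $A$ and $B$ commute (all the relevant target operators do, since $\mathcal U$ is Abelian), so the order in $A\,B$ is immaterial.

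Two bookkeeping invariants then close the argument. First, both $A$ and $B$ have strictly fewer white dots than the original factor, so iterating the identity terminates: a factor with $w$ white controls becomes $2^{w}$ factors with none, and $w=0$ is precisely the standard form. Second, the new control sets have sizes $|Q|-1$ and $|Q|$, so every produced gate still has at most $k$ controls, and every produced target operator is $U(\pm\mu)$ with $\pm\mu\in D_2$, hence still lies in $\mathcal U$; the output is therefore again a standard k-rUCG. I expect the only delicate point to be this bookkeeping rather than any algebra: one must verify that repeated substitution never silently enlarges a control set (it does not, by the size count), and, if a quantitative version is wanted, record the at most $2^{k}$-fold increase in the number of controlled gates per original factor — a blow-up that is harmless for the asymptotic size and depth bounds proved later but should be stated. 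An equivalent route is to conjugate the white-controlled qubits by $X$, using $C^{\bar q}[\cdot]=X_q\,C^{q}[\cdot]\,X_q$; I prefer the identity above because it produces no stray single-qubit $X$ gates lying outside $\mathcal G_{CU}$.
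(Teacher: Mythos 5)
Your proposal is correct and is essentially the paper's own argument: the identity $C_Q[U(\mu)]=A\,B$ you prove is precisely the multi-control generalization of the paper's single-control relation $C_{\bar q}[U]=\bigl(I_q\otimes U\bigr)\,C_q[U^\dagger]$ (take $Q=\{q\}$ and note $U(-\mu)=U(\mu)^\dagger$), applied one white dot at a time. You simply spell out the case analysis, termination, the $\le 2^{k}$-fold gate blow-up, and closure in $\mathcal U$, details the paper leaves implicit.
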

	
	\begin{proof}
		Let $C_{\bar q}(U)$ denote the single-qubit-controlled gate that applies $U$ to the target iff control qubit $q$ is in state $\ket{0}$ (white-dot), and let $C_{q}(U)$ denote the positive-controlled (black-dot) gate applying $U$ when $q$ is $\ket{1}$. Then for any zero-controlled-$U$ there holds
		\begin{equation*}
			C_{\bar q}[U] \;=\; \big(I_{q}\otimes U\big)\; C_{q}\big[U^\dagger\big].
		\end{equation*}
	\end{proof}
	
	The following observation connects the structural sparsity of k-sparse controls with the vector representation of rUCGs.
	
	\begin{theorem}
		\label{thm:krucg-subclass}
		If an operator is a k-rUCG, then it is also an rUCG.
	\end{theorem}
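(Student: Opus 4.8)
The plan is to show that any $k$-rUCG, once expanded in the computational basis of the control register, is block-diagonal with every block lying in the prescribed $2$-divisible Abelian subgroup $\mathcal{U}$. That is exactly the canonical form of Eq.~\eqref{eq:ucg2} with targets drawn from $\mathcal{U}$, so the operator is an rUCG by Definition~\ref{def:rucg}.

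First I would reduce to the standard form. If the given $k$-rUCG is a general $k$-rUCG, apply Lemma~\ref{lem:white-to-black} to rewrite it as a product of gates $C_{Q_i}[U(\mu_i)]$ in which every control is positive or the control set is empty. The auxiliary zero-control factors $I_q\otimes U(\nu)$ produced by the lemma cause no difficulty: since $U(\cdot)$ is a homomorphism and $D_2$ is a group, $U(\mu_i)^{\dagger}=U(-\mu_i)$ with $-\mu_i\in D_2$, so after standardization all factors still have targets indexed by elements of $D_2$.

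Next, write each factor in block-diagonal form over the control register,
\[
C_{Q_i}[U(\mu_i)] \;=\; \sum_{c=0}^{N-1} |c\rangle\langle c|\otimes U(\mu_i)^{\,b_i(c)},
\]
where $b_i(c)=\prod_{q\in Q_i} c_q\in\{0,1\}$ is the logical AND of the bits of $c$ at the positions in $Q_i$ (with $b_i(c)\equiv 1$ when $Q_i=\emptyset$). By the homomorphism property, $U(\mu_i)^{\,b_i(c)}=U\!\big(b_i(c)\,\mu_i\big)$. Because $\mathcal{U}$ is Abelian (R1), the block-diagonal factors commute blockwise, so the product passes inside the sum and
\[
\mathrm{LU}_{n,k}
= \sum_{c=0}^{N-1} |c\rangle\langle c|\otimes \prod_i U\!\big(b_i(c)\,\mu_i\big)
= \sum_{c=0}^{N-1} |c\rangle\langle c|\otimes U\!\Big(\textstyle\sum_i b_i(c)\,\mu_i\Big),
\]
where the last equality again uses that $U$ is a homomorphism and that $\sum_i b_i(c)\,\mu_i\in D_2$ by closure of the group.

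Finally, setting $\chi_c:=\sum_i b_i(c)\,\mu_i\in D_2$ exhibits the operator as $\mathrm{LU}(\bm{\chi})=\sum_{c}|c\rangle\langle c|\otimes U(\chi_c)$, a UCG whose target family is the $2$-divisible Abelian subgroup $\mathcal{U}$; hence (R1) and (R2) hold and it is an rUCG. The only points needing care are the bookkeeping in the standardization step and the observation that commutativity legitimately lets the product move inside the block-diagonal sum; neither is a genuine obstacle once every factor is in block-diagonal form. The substantive content of the theorem is simply that ``$k$-sparse positive controls'' is a structured special case of ``arbitrary block-diagonal selection over $\mathcal{U}$.''
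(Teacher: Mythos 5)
Your proposal is correct and follows essentially the same route as the paper: rewrite each $k$-sparse positively-controlled factor in block-diagonal ($n$-controlled) form over the control basis and combine the blocks, with Lemma~\ref{lem:white-to-black} handling negative controls; your $b_i(c)$ is exactly the paper's $\delta_{Q_i}(c)$. You simply carry the argument one step further by explicitly summing $\chi_c=\sum_i b_i(c)\,\mu_i$, which is the content the paper defers to Corollary~\ref{cor:krucg-vector}.
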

	
	\begin{proof}
		Let $Q\subseteq[n]$ be a control subset with $|Q|\le k$, and let
		\begin{equation*}
			\label{eq:delta_qc}
			\delta_Q(c)=
			\begin{cases}
				1, & \text{if $Q \subset \supp(c)$},\\[2mm]
				0, & \text{otherwise}.
			\end{cases}
		\end{equation*}
		Here $\supp(c)=\{q~|~c_q=1\}$ represents the set of bits with value 1 in binary.	
		Then any $Q$-controlled unitary can be written as
		\begin{equation*}
			C_Q(U)=\sum_{c=0}^{N-1} \ketbra{c}\otimes \big(\delta_Q(c) U + \overline{\delta_Q(c)} I\big),
		\end{equation*}
		namely, as a product of n–controlled gates.
		Therefore a k-rUCG—which is by definition a product of such
		$C_{Q_i}(U_i)$ with $|Q_i|\le k$—is an rUCG.
	\end{proof}
	
	\begin{corollary}
		\label{cor:krucg-vector}
		Every k-rUCG admits an rUCG-style vector representation
		$\mathrm{LU}(\boldsymbol{\chi})$.
		
		\begin{proof}
			By Theorem~\ref{thm:krucg-subclass}, any k-rUCG is a product of
			n-controlled gate, each $C_{Q_i}[U(\mu_i)]$ corresponding to a target vector
			$\boldsymbol{\chi}^{(i)}$ whose only nonzero entry encodes a target gate and whose other entries are the identity operator.
			
			\begin{equation}
				\label{eq:target-vector-kg}
				\boldsymbol{\chi}^{(i)} = \mu_i \cdot \big(\delta_{Q_i}(0), \delta_{Q_i}(1), \ldots, \delta_{Q_i}(N-1)\big)^T.			
			\end{equation}
			
			Since multiplication of commuting target operators corresponds to
			addition in the target-vector space, the product of all these factors is
			exactly
			\begin{equation}
				\label{eq:target-vector-com}
				\boldsymbol{\chi}
				= \sum_{i} \boldsymbol{\chi}^{(i)} ,
			\end{equation}
			which uniquely determines the resulting rUCG in the form
			$\mathrm{LU}(\boldsymbol{\chi})$.
			Thus every k-rUCG possesses a well-defined vector representation.
		\end{proof}
	\end{corollary}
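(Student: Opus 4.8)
The plan is to build the target vector factor-by-factor and then collapse the product using the Abelian homomorphism structure. First I would invoke Theorem~\ref{thm:krucg-subclass} to rewrite each elementary factor $C_{Q_i}[U(\mu_i)]$ of the k-rUCG as a full $n$-controlled block-diagonal operator $\sum_{c} \ketbra{c} \otimes (\delta_{Q_i}(c) U(\mu_i) + \overline{\delta_{Q_i}(c)} I)$. Since $U(\cdot)$ is a group homomorphism from the $2$-divisible Abelian group $D_2$, the identity equals $U(0)$, so the block at control state $c$ is exactly $U(\mu_i \delta_{Q_i}(c))$ with $\mu_i \delta_{Q_i}(c) \in \{0,\mu_i\} \subseteq D_2$. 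This identifies the $i$-th factor with the rUCG $\mathrm{LU}(\boldsymbol{\chi}^{(i)})$ whose target vector is given by Eq.~\eqref{eq:target-vector-kg}. The degenerate cases $|Q_i|=0$ (a global gate, all $\delta$-entries equal $1$) and $|Q_i|<k$ are covered by the same formula without modification, and general k-rUCGs carrying white controls are first brought into this form via Lemma~\ref{lem:white-to-black}.

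Next I would establish the key algebraic fact that composition of rUCGs corresponds to addition of target vectors: for any $\boldsymbol{\alpha},\boldsymbol{\beta}$ with entries in $D_2$ one has $\mathrm{LU}(\boldsymbol{\alpha})\,\mathrm{LU}(\boldsymbol{\beta}) = \mathrm{LU}(\boldsymbol{\alpha}+\boldsymbol{\beta})$, because both operators are block-diagonal in the control register and at each control state $c$ the composed block is $U(\alpha_c) U(\beta_c) = U(\alpha_c + \beta_c)$ by the homomorphism property; moreover the order of multiplication is irrelevant precisely because $\mathcal{U}$ is Abelian, condition~(R1). Applying this identity inductively across the finitely many factors of the k-rUCG yields $\mathrm{LU}_{n,k} = \prod_i \mathrm{LU}(\boldsymbol{\chi}^{(i)}) = \mathrm{LU}\!\big(\sum_i \boldsymbol{\chi}^{(i)}\big)$, where the sum $\boldsymbol{\chi} = \sum_i \boldsymbol{\chi}^{(i)}$ is well-defined entrywise in the Abelian group $D_2$. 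This is exactly Eq.~\eqref{eq:target-vector-com}.

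Finally I would observe that this $\boldsymbol{\chi}$ furnishes the claimed rUCG-style representation $\mathrm{LU}(\boldsymbol{\chi})$, settling existence; if uniqueness of $\boldsymbol{\chi}$ is wanted, one additionally notes that distinct target vectors give distinct operators whenever $U(\cdot)$ is injective on $D_2$, which may be assumed or arranged by passing to $D_2/\ker U(\cdot)$. I expect the only genuine subtlety to be the bookkeeping in the first step—verifying the block-diagonal rewriting for every $Q_i$ including the degenerate sizes, and that ``identity $=U(0)$'' is legitimate in the chosen group $D_2$—while the remainder is a routine consequence of the homomorphism property and the commutativity axiom~(R1).
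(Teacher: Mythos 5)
Your proposal is correct and follows essentially the same route as the paper's own proof: each factor $C_{Q_i}[U(\mu_i)]$ is assigned the target vector of Eq.~\eqref{eq:target-vector-kg} via Theorem~\ref{thm:krucg-subclass}, and the product collapses to $\mathrm{LU}\bigl(\sum_i \boldsymbol{\chi}^{(i)}\bigr)$ because composition of commuting targets corresponds to addition of target vectors. Your extra care --- writing the identity block as $U(0)$, invoking Lemma~\ref{lem:white-to-black} to absorb white controls, and noting that uniqueness of $\boldsymbol{\chi}$ requires injectivity of $U(\cdot)$ --- merely makes explicit what the paper leaves implicit.
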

	
	\begin{figure}[!t]
		\centering
		\begin{minipage}{\linewidth}
			\centering
			\includegraphics[width=\linewidth]{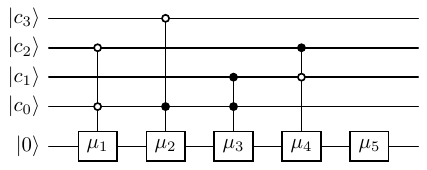}
			\caption*{(a)}
			\label{fig:k-rucg1}
		\end{minipage}
		
		\vspace{1em} 
		
		\begin{minipage}{\linewidth}
			\centering
			\includegraphics[width=\linewidth]{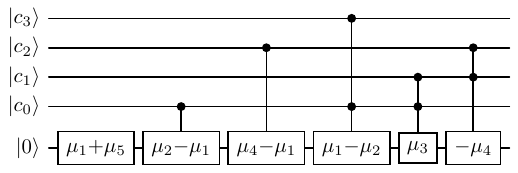}
			\caption*{(b)}
			\label{fig:k-rucg2}
		\end{minipage}
		
		\caption{Example of a k-rUCG for $n=4, k=2$ (a) general form. (b) standard form. Finally converted into vector form LU\((\mathbf\chi)\), where  \(\mathbf{\chi} = (\mu_1+\mu_5, \mu_2+\mu_5, \mu_1+\mu_5, \cdots, \mu_3+\mu_5)\).}
		\label{fig:k-rucg}
	\end{figure}

	\section{\label{sec:level}Decomposition of rUCG with Fewer Gates}
	\subsection{\label{sec:size-trans}Frequency-Domain Transformation}
	The circuit complexity of the rUCG primarily stems from multi-controlled gates. An n-controlled gate can be decomposed into $O(n)$ size and $O(log n)$ depth\cite{nCtrlDepth2022daSilvaAJ,nCtrlDepth2024NieJ}. The main reason for this is that the control conditions for each qubit are connected via ``AND'' operations. To address this issue, we propose a frequency-domain transformation for the rUCG.
	
	Let \(\mathrm{LU}(\boldsymbol{\chi})\) denote an rUCG with target unitaries \(U(\chi_i)\) associated with control states \(\ket{i}\).  
	Apply the unnormalized Walsh--Hadamard transform \cite{Walsh1969ShanksJL}  to \(\boldsymbol{\chi}\),
	\begin{equation}
		\label{eq:walsh-hadamard}
		X_{\omega}=\sum_{i=0}^{N-1}(-1)^{\boldsymbol{\omega}\cdot\mathbf{i}}\,\chi_i,
	\end{equation}
	and introduce a new vector \(\mathbf{Y}\):
	\begin{equation}
		\label{eq:frequecy-vector}
		Y_0=\frac{1}{N}\sum_{\omega=0}^{N-1}X_\omega,\qquad
		Y_\omega=-\frac{2}{N}X_\omega\quad (\omega\neq 0),
	\end{equation}
	
	referred as \emph{frequency vector}.
	
	\begin{lemma}
		The rUCG admits the decomposition:
		\begin{equation}
			\label{eq:freq-decomp}
			\begin{aligned}
				\mathrm{LU}(\bm{\chi})& =  \left( I^{\otimes n} \otimes U(Y_{0}) \right) \cdot \\
				& \prod_{\omega = 1}^{N-1} \Biggl( \sum_{c = 0}^{N-1} \ketbra{c} \otimes 
				\left( \delta_{\bm{c} \cdot \bm{\omega}, 1} U(Y_{\omega}) + \overline{\delta_{\bm{c} \cdot \bm{\omega}, 1}} I \right) \Biggr),
			\end{aligned}
		\end{equation}
		where ${\delta }_{c,i}$ is 1 when $c=i$, and 0 otherwise. 
		Consequently, the original condition \(c=i\), or \(
		(c_0 = i_0) \land (c_1 = i_1) \land \cdots \land (c_{n-1} = i_{n-1})
		\) is replaced by the condition  
		\(
		\mathbf{c}\cdot\boldsymbol{\omega}
		= c_0\omega_0\oplus c_1\omega_1\oplus\cdots\oplus c_{n-1}\omega_{n-1}
		\). 
		
		Here, two equivalent notations are used for a control basis state.  
		One is the numerical form $c$, whose binary expansion is  
		$c=c_{n-1}\cdots c_1 c_0$;  
		the other is the vector form $\boldsymbol{c}=(c_0,c_1,\dots,c_{n-1})$.  
		These two representations are interchangeable except when clarity is needed in calculations or as subscripts.
	\end{lemma}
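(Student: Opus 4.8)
\emph{Proof proposal.}
The plan is to reduce the operator identity to a single scalar identity in the group $D_2$ and then lift it through the homomorphism $U(\cdot)$. First I would observe that every factor on the right-hand side of \eqref{eq:freq-decomp} — including $I^{\otimes n}\otimes U(Y_0)$ — is block-diagonal in the computational basis of the $n$-qubit control register, so their ordered product is again block-diagonal with the $c$-th block equal to the ordered product of the $c$-th blocks, namely $U(Y_0)\cdot\prod_{\omega=1}^{N-1}\bigl(\delta_{\bm c\cdot\bm\omega,1}U(Y_\omega)+\overline{\delta_{\bm c\cdot\bm\omega,1}}I\bigr)$. Discarding the trivial factors, this block equals $U(Y_0)\prod_{\omega:\,\bm c\cdot\bm\omega=1}U(Y_\omega)$; by the Abelian property (R1) the order is immaterial, and by the homomorphism property this equals $U\!\bigl(Y_0+\sum_{\omega:\,\bm c\cdot\bm\omega=1}Y_\omega\bigr)$. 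Comparing with $\mathrm{LU}(\bm\chi)=\sum_c\ketbra{c}\otimes U(\chi_c)$, the lemma follows once we establish, for every $c$,
\[
\chi_c \;=\; Y_0 + \sum_{\omega=1}^{N-1}\delta_{\bm c\cdot\bm\omega,1}\,Y_\omega .
\]

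To prove this scalar identity I would write the parity indicator as $\delta_{\bm c\cdot\bm\omega,1}=\tfrac12\bigl(1-(-1)^{\bm c\cdot\bm\omega}\bigr)$, substitute the definitions \eqref{eq:frequecy-vector} of $Y_0$ and $Y_\omega$ in terms of $\mathbf X$, and simplify: the constant ``$1$'' contributions cancel the $\omega\neq0$ part of $\tfrac1N\sum_\omega X_\omega$, leaving $\tfrac1N X_0 + \tfrac1N\sum_{\omega=1}^{N-1}(-1)^{\bm c\cdot\bm\omega}X_\omega = \tfrac1N\sum_{\omega=0}^{N-1}(-1)^{\bm c\cdot\bm\omega}X_\omega$. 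This is exactly the inverse (unnormalized) Walsh--Hadamard transform of $\mathbf X$ evaluated at $c$, which returns $\chi_c$ because $\sum_{\omega}(-1)^{\bm\omega\cdot(\bm c\oplus\bm i)}=N\,\delta_{c,i}$. All arithmetic here (integer linear combinations and repeated halving by factors of $2$ up to $N=2^n$) stays inside $D_2$ by $2$-divisibility, and carrying consistent representatives through the whole computation makes the identity exact.

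The bulk of the argument is thus elementary bookkeeping with the Walsh--Hadamard transform. \textbf{The only point that needs care} is the interplay between the divisions defining $\mathbf Y$ and the structure of $D_2$: one must check that $Y_0$ and the $Y_\omega$ denote well-defined elements of $D_2$ (or fix a halving map once and for all — equivalently treat $D_2$ as a $\mathbb Z[\tfrac12]$-module) and that the scalar identity does not depend on that choice, which holds because it is valid already at the level of consistently chosen representatives before any reduction. Once the scalar identity is in hand, the lifting to operators via (R1)–(R2) and the block-diagonal product structure is immediate, and the asserted replacement of the equality test $c=i$ by the single parity condition $\bm c\cdot\bm\omega=1$ is just a restatement of which factors in the product are nontrivial.
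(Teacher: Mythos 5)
Your proposal is correct and follows essentially the same route as the paper: both rest on the inverse Walsh--Hadamard expansion of $\chi_c$, the separation of the $\omega=0$ term absorbed into $Y_0$, and the homomorphism/Abelian properties to merge the commuting factors — you simply verify the identity block-by-block (reducing it to the scalar relation $\chi_c = Y_0+\sum_{\omega\neq 0}\delta_{\bm c\cdot\bm\omega,1}Y_\omega$) rather than deriving the right-hand side from the standard form, and you fill in the arithmetic the paper's four-step sketch leaves implicit. Your remark about fixing a consistent halving map in $D_2$ is a fair point of care that the paper itself does not address, but it does not change the substance of the argument.
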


	\begin{proof}
		We outline the derivation for completeness.
		
		\begin{enumerate}
			\item The standard form of an rUCG can also be written as
			\begin{equation}
				\label{eq:rucg-standard}
				\mathrm{LU}(\boldsymbol{\chi})
				=
				\prod_{i=0}^{N-1}
				\sum_{c=0}^{N-1}
				\ket{c}\!\bra{c}\otimes
				\left(
				\delta_{c,i}U(\chi_i)+\overline{\delta_{c,i}}I
				\right).
			\end{equation}
			
			\item Apply the inverse Walsh--Hadamard transform
			\begin{equation}
				\chi_i=\frac{1}{N}\sum_{\omega=0}^{N-1}
				(-1)^{\boldsymbol{\omega}\cdot\mathbf{i}}X_\omega.
			\end{equation}
			
			\item Substitute this expression into Eq.~\eqref{eq:rucg-standard} and separate the \(\omega=0\) and \(\omega\neq 0\) contributions.
			
			\item Introduce frequency vector \(\mathbf{Y}\) to simplify coefficients, yielding Eq.~\eqref{eq:freq-decomp}.
		\end{enumerate}
	\end{proof}
	
	To summarize, the transformation presented in this section converts the rUCG from target vector under original control states to the \emph{frequency vector} within the frequency domain. This shift in the control state simplifies subsequent operations performed upon it. For convenience, all \emph{later references to the control states} in this article will pertain to the transformed ones in the frequency domain.
	
	To implement the circuit of $\mathrm{LU}(\bm{\chi})$, the remaining issue is how to efficiently compute, store, and transfer the control state $\ket{\bm{c} \cdot\boldsymbol{\omega}}$.
	
	\subsection{Traversal of Control States and Activation Mechanism}
	\label{sec:size-traversal}
	This subsection develops a traversal-based mechanism that uses only CNOT gates and a single designated control qubit to sequentially activate each control condition.
	
	\paragraph*{Traversal Sequence.}
	Let $S=\{S_0,S_1,\dots,S_{N-1}\}$ be a sequence of circuits acting on the
	control register.
	Each $S_i$ consists solely of CNOT gates (with an optional ancilla) and is
	designed to \emph{traverse} a predefined ordering of the  control states
	$\{\bm{c}\cdot\boldsymbol{\omega}\}_{\omega=0}^{N-1}$.
	
	Starting from an initial state
	\(
	\ket{s^{(0)}}=\ket{0}^{\otimes n},
	\)
	the sequence evolves as
	\begin{equation}
		\ket{s^{(i+1)}} = S_i \ket{s^{(i)}}, 
	\end{equation}
	such that each $\ket{s^{(i)}}$ corresponds to a distinct traversal step.
	After completing all $N$ steps, the sequence returns to the initial state:
	\(
	\ket{s^{(N)}} = \ket{s^{(0)}}.
	\)
	
	\paragraph*{Activation Qubit}
	A crucial property of the traversal sequence is its \emph{activation mechanism}.
	For each intermediate state $\ket{s^{(i)}}$, a pre-designated
	\emph{activation qubit}—denoted by $\ket{s^{(i)}_{\mathrm{a}}}$—stores the value of
	the current control condition:
	\begin{equation}
		s^{(i)}_{\mathrm{a}} =
		\bm{c}\cdot\boldsymbol{\omega}^{(i)} =
		c_0\omega^{(i)}_0 \oplus c_1\omega^{(i)}_1 \oplus \dots \oplus c_{n-1}\omega^{(i)}_{n-1}.
	\end{equation}
	
	The sequence is designed such that
	\begin{equation}
		s^{(i+1)}_{\mathrm{a}}=1
		\quad\Longleftrightarrow\quad
		\bm{c}\cdot\boldsymbol{\omega}^{(i+1)}=1.
	\end{equation}
	which allows us to activate any controlled operator using only a \emph{single}
	control qubit, instead of relying on a multi-qubit AND.
	
	\paragraph*{Composite Step Operators}
	At each traversal step $i$, we apply a composite operation $\mathrm{CU}_i$
	consisting of:
	
	1. A single-control operation on the target register, triggered by the
	activation qubit.
	
	2. Followed by the traversal update.
	
	\begin{equation}
		\label{eq:rucg-step}
		\begin{aligned}
			\mathrm{CU}_i &: ~
			\ket{s^{(i)}}\ket t \to 
			\ket{s^{(i)}}\!\left(
			\delta_{s^{(i)}_a,1} U(Y_{\omega^{(i)}})
			+ \overline{\delta_{s^{(i)}_a,1}} I
			\right)\ket t,
			\\
			S_i  &: ~
			\ket{s^{(i)}} \to \ket{s^{(i+1)}}.
		\end{aligned}
	\end{equation}
	
	Thus each $\mathrm{CU}_i$ is a lightweight operation driven only by the
	single activation qubit.
	
	\paragraph*{Final Circuit Form}
	Combining all steps of traversal and activation, the rUCG admits the following
	circuit decomposition:
	\begin{equation}
		\label{eq:rucg-circuit}
		\begin{aligned}
			\mathrm{LU}(\bm{\chi})
			&=
			\biggl(
			\prod_{i=N-1}^{1}
			\bigl(S_i\otimes I^{\otimes m}\bigr)\,
			CU_{s^{(i)}_{\mathrm{a}}}(Y_{\omega^{(i)}})
			\biggr)
			\\
			&\quad\cdot
			\bigl(S_0\otimes I^{\otimes m}\bigr)
			\bigl(I^{\otimes n}\otimes U(Y_0)\bigr),
		\end{aligned}
	\end{equation}
	where each $CU_{s^{(i)}_{\mathrm{a}}}$ is a \emph{single-control} gate by activation qubit and the
	sequence $\{S_i\}$ contains only CNOT gates.  
	This circuit is equivalent to inserting each controlled operator at precisely
	the traversal step corresponding to its control state, as illustrated in
	figure \ref{fig:GP}.
	
	\subsection{Implementation for rUCGs}
	\label{sec:size-rucg}
	
	Selecting a suitable traversal algorithm can realizes the sequence \(S=\{S_0,\dots,S_{N-1}\}\)
	using only CNOTs.
	Substituting $S$ into Eq.\eqref{eq:rucg-circuit} yields the final decomposition.
	
	\begin{figure}[!t]
		\centering
		
		\begin{minipage}{\linewidth}
			\centering
			\includegraphics[width=\linewidth]{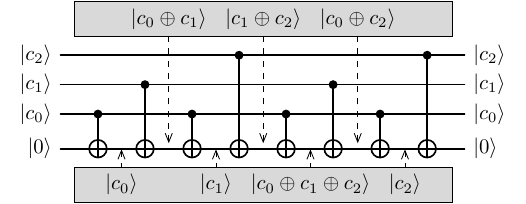}
			\caption*{(a)}\label{fig:GP1}
		\end{minipage}
		
		\vspace{1.2em}
		
		\begin{minipage}{\linewidth}
			\centering
			\includegraphics[width=\linewidth]{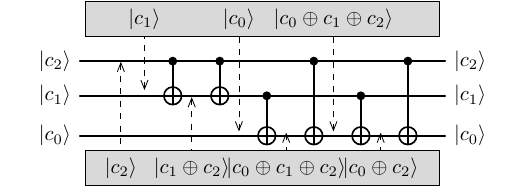}
			\caption*{(b)}\label{fig:GP2}
		\end{minipage}
		
		\vspace{1.2em}
		
		\begin{minipage}{\linewidth}
			\centering
			\includegraphics[width=\linewidth]{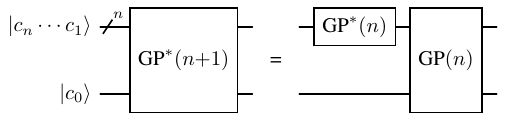}
			\caption*{(c)}\label{fig:GP3}
		\end{minipage}
		
		\caption{Gray Path GP(n) for traversing all control states.  
			(a) GP($3$) with an ancilla.  
			(b) GP$^*$($3$) without ancilla.  
			(c) Recursive construction for GP$^*$(n).}
		\label{fig:GP}
	\end{figure}
	
	\begin{theorem}[Gate count of rUCGs]
		\label{thm:rucg-gatecount}
		A rUCG can be decomposed to
		$2^n-1$ CU,
		a single U
		and $2^n-2$ CNOTs.
	\end{theorem}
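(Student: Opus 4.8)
The plan is to specialize the general circuit form of Eq.~\eqref{eq:rucg-circuit} by supplying an explicit traversal sequence $S=\{S_0,\dots,S_{N-1}\}$ and then count gates directly. First I would exhibit a traversal — call it the Gray Path $\mathrm{GP}(n)$ of figure~\ref{fig:GP} — that enumerates the $N=2^n$ parity conditions $\bm c\cdot\bm\omega^{(i)}$ over all $\omega\in\{0,\dots,N-1\}$ by choosing an ordering $\omega^{(0)},\omega^{(1)},\dots,\omega^{(N-1)}$ in which successive frequencies differ in a way that lets the activation qubit be updated by a \emph{single} CNOT per step. Concretely, I would use the standard binary-reflected Gray code on the index $i$: if $g(i)$ denotes the $i$-th Gray codeword, then $g(i)\oplus g(i+1)$ is a single bit, so $\bm c\cdot g(i+1)$ differs from $\bm c\cdot g(i)$ by XOR-ing in exactly one coordinate $c_j$; hence each $S_i$ is a single CNOT from control qubit $j$ into the activation qubit. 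Setting $\omega^{(i)}$ to be (a fixed linear relabeling of) $g(i)$ ensures every nonzero frequency $\omega$ is visited exactly once, and the Gray-code property $g(0)=g(N)=0$ gives the closure $\ket{s^{(N)}}=\ket{s^{(0)}}$ required in Section~\ref{sec:size-traversal}.

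Next I would count. The $\omega=0$ term contributes the single uncontrolled $U(Y_0)$ factor $I^{\otimes n}\otimes U(Y_0)$. Each of the remaining $N-1=2^n-1$ frequencies $\omega^{(i)}$ contributes exactly one composite step $\mathrm{CU}_i$, whose target part is a single-control gate $CU_{s^{(i)}_{\mathrm a}}(Y_{\omega^{(i)}})$ — that is $2^n-1$ CU gates total. For the CNOTs: the traversal pieces $S_0,\dots,S_{N-1}$ are $N$ single-CNOT updates, but $S_0$ acts on $\ket{0}^{\otimes n}$ where the first Gray-code transition flips one control bit into the activation qubit while that control bit is $c_j$ — I would check that $S_0$ and the final return step can be merged/cancelled so that only $2^n-2$ CNOTs survive (this is the familiar ``one CNOT is free at the ends of the cascade'' phenomenon from the Möttönen \emph{et al.}\ construction~\cite{UCR2004MöttönenM}, and appears here because $g(0)=g(N)$ means the composition $S_{N-1}\cdots S_1 S_0$ is the identity on the control register, so one of the $N$ CNOTs is redundant, and a second is absorbed because the activation qubit starts and ends in $\ket{0}$). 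That leaves $N-2=2^n-2$ CNOTs, $2^n-1$ CU gates, and one U, matching the claim.

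The main obstacle I expect is making the CNOT bookkeeping rigorous: I must verify that the naive cascade uses $2^n$ CNOTs and that precisely two of them are eliminable without disturbing any $\mathrm{CU}_i$, and in particular that the cancellations do not interfere with the positions at which the single-control $U(Y_\omega)$ gates are inserted. Concretely, the argument is that $\prod_{i} S_i = \mathbf{1}$ on the control register (telescoping around the closed Gray cycle), so we may delete the last update $S_{N-1}$ entirely — saving one CNOT — and, because $\ket{s^{(0)}_{\mathrm a}}=\ket{s^{(N-1)}_{\mathrm a}}=0$ before any nontrivial parity is loaded, the step $S_0$ that first writes the activation qubit can be fused with the initialization, saving a second. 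I would present the cascade picture (figure~\ref{fig:GP}(b)--(c)), note that the ancilla-free recursive $\mathrm{GP}^*(n)$ construction achieves the same count, and then conclude by substituting into Eq.~\eqref{eq:rucg-circuit}. A secondary point to handle carefully is that the CU gates here act on an $m$-qubit target, so each ``CU'' is one element of $\mathcal{G}_{CU}$ in the sense of Definition~\ref{def:Uprimitive} — the count is in U-primitives, not in basic gates — and I would state this explicitly so the theorem's cost model is unambiguous.
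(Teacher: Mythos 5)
Your count of $2^n-1$ CU gates (one per nonzero frequency) plus one uncontrolled $U(Y_0)$ is fine and matches the paper, but the CNOT bookkeeping — which is the only nontrivial part of the theorem — does not go through as you argue. With a single fixed activation qubit driven around a full closed binary-reflected Gray cycle, neither of your two claimed savings is valid. Deleting the final update $S_{N-1}$ is not permitted: the fact that $S_{N-1}\cdots S_1S_0=\mathbf{1}$ is exactly the property that restores the activation qubit, so no individual CNOT is ``redundant''; dropping the last one leaves the activation qubit holding the data-dependent parity $\bm c\cdot\bm\omega^{(N-1)}$, i.e.\ an ancilla entangled with the control register (or an unrestored control register), and the circuit no longer equals $\mathrm{LU}(\bm\chi)$. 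Likewise $S_0$ cannot be ``fused with the initialization'': it writes the parity $c_l$ into the activation qubit, which cannot be prepared without a gate. In fact a single-fixed-activation-qubit scheme provably cannot reach $2^n-2$: the parities visited form a closed walk on the $n$-cube that must cover all $2^n-1$ nonzero vertices, and since the cube is bipartite such a closed walk has even length $\ge 2^n$; this is precisely why the paper's ancilla-assisted $\mathrm{GP}(n)$ costs $2^n$ CNOTs, not $2^n-2$.

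The paper obtains $2^n-2$ by a different mechanism, which you mention only in passing and do not derive: the ancilla-free $\mathrm{GP}^*(n)$, in which the activation role is \emph{not} fixed but migrates among the control qubits. Frequencies with the last bit set are traversed by $\mathrm{GP}(n-1)$ using the last control qubit itself as the activation qubit (so its initial value $c_{n-1}$ is obtained for free), costing $2^{n-1}$ CNOTs, while the remaining frequencies are handled recursively by $\mathrm{GP}^*(n-1)$ on the first $n-1$ qubits; with the base case $T(1)=0$ (a single control qubit directly controls its CU, no CNOTs), the recursion $T(n)=T(n-1)+2^{n-1}$ gives $T(n)=2^n-2$. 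To repair your proof you should replace the cancellation argument by this recursive construction (or any equivalent multi-activation-qubit scheme) and then substitute into Eq.~\eqref{eq:rucg-circuit} as you planned; your closing remark that the count is in U-primitives in the sense of Definition~\ref{def:Uprimitive} is correct and worth keeping.
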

	
	\begin{proof}
		We designed two variants for implement $S$ based on Gray code: GP$(n)$ with an ancilla, and GP$^*(n)$ without ancilla.
		
		\paragraph{GP$(n)$ with an ancilla}
		A binary reflected Gray code
		\(\{\boldsymbol{\omega}^{(0)},\boldsymbol{\omega}^{(1)},\dots,\boldsymbol{\omega}^{(N-1)}\}\)
		enumerates all \(n\)-bit vectors such that consecutive vectors differ in
		exactly one bit~\cite{CombGray1997SavageC}. 
		
		Allocate one ancilla \(a\) initialized to \(\ket{0}_a\) as the activation qubit. For each step \(i\),
		\begin{equation}
			s^{(i)}_{\mathrm{a}} \coloneqq \bm{c}\cdot\boldsymbol{\omega}^{(i)}
			= \bigoplus_{l:\,\omega^{(i)}_l=1} c_l,
		\end{equation}
		successive \(\boldsymbol{\omega}^{(i)}\) differ
		by a single bit flip at some index \(l\), updating the ancilla from
		\(s^{(i)}_{\mathrm{a}}\) to \(s^{(i+1)}_{\mathrm{a}}\) requires a single
		CNOT controlled on \(c_l\) and targeting \(a\):
		\begin{equation}
			S_i^{(\text{anc})} = \mathrm{CNOT}(c_{l}\to a).
		\end{equation}
		
		Due to the properties of Gray code, GP$(n)$ uses $2^n$ CNOT gates.
		
		\paragraph{GP$^*(n)$ without ancilla}
		GP$^*(n)$ is designed based on GP$(n)$. Divide all control states into two parts: those that do not contain the last qubit and those that do contain the last qubit. Then apply the recursive method to the former and apply $GP(n)$ to the latter (treating the last qubit as an ancilla).
		The recursive decomposition of \(\mathrm{GP}^*(n)\) is shown schematically in figure \ref{fig:GP}(b)--(c).
		
		Due to the recursive construction based on GP$(n)$, GP$^*(n)$ uses $2^n-2$ CNOT gates.
		
		\paragraph{Integration}
		By combining \(\mathrm{GP}^*(n)\) with the sequence of single-control operations
		\(CU_{s^{(i)}_{\mathrm{a}}}(Y_{\omega^{(i)}})\), we obtain the full
		circuit decomposition Eq.~\eqref{eq:rucg-circuit}. figure \ref{fig:rucg-final} shows the integrated circuit, which proves Theorem~\ref{thm:rucg-gatecount}.
	\end{proof}
	
	\begin{figure*}[t]
		\centering
		\includegraphics[width=\linewidth]{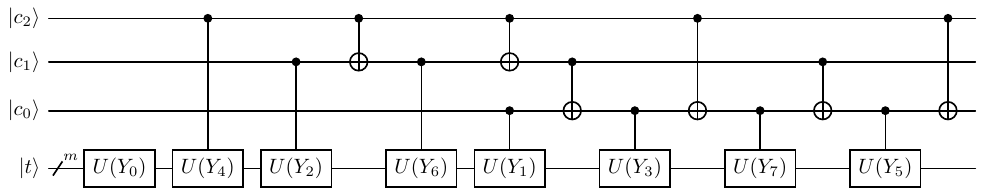}
		\caption{
			Decomposition of rUCG with optimal number of gates for $n=3$, within GP$^*(3)$.
		}
		\label{fig:rucg-final}
	\end{figure*}
	
	\subsection{Implementation for \(k\)-sparse rUCGs}
	\label{sec:size-krucg}
	
	For \emph{\(k\)-sparse rUCG}, not all control states need to be traversed. In Eq.~\eqref{eq:rucg-standard}, if a controlled gate is an identity $I$, then the corresponding control state does not need to be traversed. 
	
	A critical clarification is necessary for the following concept: while it resembles Definition~\ref{def:weight-sparse}, the two are subtly different but distinct:
	
	\begin{definition}
		\label{weight-sparse-state}
		For each traversal step of control state $s^{(i)}_{\mathrm{a}}$ obtained in Section~\ref{sec:size-traversal}, a state is k-weight if its the Hamming weight $\wt(s^{(i)}_{\mathrm{a}})=k$, while it is k-sparse if $\wt(s^{(i)}_{\mathrm{a}})\leq k$.
	\end{definition}
	
	Namely, the weight of control is the number of control qubits, and the weight of a control state is the number of 1s (black dot) in its binary value.
	
	\begin{lemma}
		\label{lem:k-sparse}
		The control states derived for the k-rUCG via transformation in Section~\ref{sec:size-trans} are all k-sparse. That is, for each $\wt(s^{(i)}_{\mathrm{a}}) \leq k$. 
		
		This unifies k-sparse control and k-sparse control state, with the standard k-rUCG form using no more than k control qubits being equivalent to the rUCG representation of the frequency domain control state with weights no greater than k.
		
		Consequently, the length of the $S$ is
		\begin{equation*}
			|S| \leq \sum_{i=0}^k \binom{n}{i}.
		\end{equation*}
	\end{lemma}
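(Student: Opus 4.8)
The plan is to reduce the whole statement to a single Walsh--Hadamard computation — the transform of a \emph{subcube indicator} — and then read off both the weight bound and the counting bound from its support. First I would invoke Corollary~\ref{cor:krucg-vector}: a standard $k$-rUCG has target vector $\boldsymbol{\chi}=\sum_i\boldsymbol{\chi}^{(i)}$ with $\boldsymbol{\chi}^{(i)}=\mu_i\cdot\bigl(\delta_{Q_i}(0),\dots,\delta_{Q_i}(N-1)\bigr)^{T}$ and $|Q_i|\le k$. Because the transform~\eqref{eq:walsh-hadamard} and the passage~\eqref{eq:frequecy-vector} to $\mathbf{Y}$ are both linear (with $Y_\omega=-\tfrac{2}{N}X_\omega$ for $\omega\neq0$), it suffices to determine, for a fixed control subset $Q$ with $|Q|=j\le k$, the support of the Walsh--Hadamard transform of the indicator $c\mapsto\delta_Q(c)$.

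The key step is to write $\delta_Q(c)=\prod_{q\in Q}c_q$, substitute into $X_\omega=\sum_{c}(-1)^{\boldsymbol{\omega}\cdot\mathbf{c}}\delta_Q(c)$, and factor the sum over the $n$ bit positions. For $q\notin Q$ the factor is $\sum_{c_q\in\{0,1\}}(-1)^{\omega_q c_q}=1+(-1)^{\omega_q}$, which vanishes unless $\omega_q=0$; for $q\in Q$ the factor is $\sum_{c_q\in\{0,1\}}(-1)^{\omega_q c_q}c_q=(-1)^{\omega_q}$. Hence $X_\omega=\pm\,2^{\,n-j}$ when $\supp(\omega)\subseteq Q$ and $X_\omega=0$ otherwise, so the transform of $\boldsymbol{\chi}^{(i)}$ is supported only on frequencies with $\supp(\omega)\subseteq Q_i$, and in particular $\wt(\omega)\le|Q_i|\le k$.

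By linearity $X_\omega=\sum_i\mu_i X^{(Q_i)}_\omega$, so $X_\omega=0$ — hence $Y_\omega=0$ and $U(Y_\omega)=U(0)=I$, since $U(\cdot)$ is a homomorphism — whenever $\wt(\omega)>k$. Therefore in the frequency decomposition~\eqref{eq:freq-decomp} every factor indexed by $\omega$ with $\wt(\omega)>k$ collapses to the identity and may be discarded: only the global factor $U(Y_0)$ (weight $0$) and the factors with $1\le\wt(\omega)\le k$ survive, which establishes $\wt(s^{(i)}_{\mathrm a})\le k$. The converse inclusion is routine: expanding a surviving parity-$\omega$ factor with $\wt(\omega)=j\le k$ by inclusion--exclusion over subsets of $\supp(\omega)$ rewrites it as a product of positive-controlled gates with at most $j$ controls each (using 2-divisibility and closure of $\mathcal{U}$ to keep the targets in $\mathcal{U}$), giving the claimed correspondence between standard $k$-rUCGs with at most $k$ positive controls and rUCGs whose frequency vector is supported on weights $\le k$. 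Finally, since the traversal $S$ of Section~\ref{sec:size-traversal} only needs to visit the surviving frequency control states, and there are at most $\sum_{i=0}^{k}\binom{n}{i}$ vectors $\omega\in\{0,1\}^n$ with $\wt(\omega)\le k$, we conclude $|S|\le\sum_{i=0}^{k}\binom{n}{i}$.

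There is no deep obstacle here; the one place to be careful is the coordinatewise factorization of the Walsh--Hadamard sum of the monomial $\prod_{q\in Q}c_q$ together with its degenerate cases — $Q=\emptyset$ (for which $\delta_Q\equiv1$ and the transform concentrates on $\omega=0$) and the $\omega=0$ term itself — so that the weight-$0$ contribution is correctly placed in both the decomposition and in the $i=0$ term of the count $\sum_{i=0}^{k}\binom{n}{i}$. The remaining effort is pure bookkeeping: reconciling the two notions of weight in Definitions~\ref{def:weight-sparse} and~\ref{weight-sparse-state}, which the containment $\supp(\omega)\subseteq Q_i$ does directly, and confirming that the traversal ordering can be truncated to exactly the surviving states so that $|S|$ indeed counts them.
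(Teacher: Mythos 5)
Your proposal is correct and follows essentially the same route as the paper's Appendix~A proof: both reduce to the Walsh--Hadamard transform of the subcube indicator $\delta_{Q}(\cdot)$ for a single $k$-controlled factor, factor the character sum coordinate-wise so that the product $\prod_{q\notin Q}\bigl(1+(-1)^{\omega_q}\bigr)$ forces $\supp(\omega)\subseteq Q$ (hence $\wt(\omega)\le k$), and then extend to the full $k$-rUCG by linearity of the transform over the sum $\boldsymbol{\chi}=\sum_i\boldsymbol{\chi}^{(i)}$, with the bound $|S|\le\sum_{i=0}^{k}\binom{n}{i}$ read off from counting the surviving low-weight frequencies. Your added inclusion--exclusion sketch for the converse direction is a small bonus beyond what the paper writes down, but it does not change the underlying argument.
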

	
	\begin{proof}
		See Appendix~A in the supplementary material.
	\end{proof}

	For a different scale of $S$, another traversal method is needed for the sparse control states. By definition a \(k\)-rUCG acts nontrivially only for those
	\(\boldsymbol{\omega}\) with $\wt(\omega) \leq k$. In the study of related weight-range Gray codes\cite{RangeGray2018GregorP, RangeGray2022MützeT}), this type of problem is called \(Q_{n,[0,k]}\), which means traversing only binary codes with weights from 0 to $k$, or we call \emph{k-sparse Code} according to Definition~\ref{def:weight-sparse}.
	
	\begin{theorem}[Gate count of \(k\)-rUCGs]
		\label{thm:krucg-gatecount}
		A k-rUCG can be decomposed to $\sum_{i=1}^{k} \binom{n}{i}$ CU,
		a single U and less than $2\sum_{i=0}^{k} \binom{n}{i}$ CNOTs.
	\end{theorem}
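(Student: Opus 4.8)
The plan is to mirror the proof of Theorem~\ref{thm:rucg-gatecount}, but replace the full Gray‑code traversal GP$^*(n)$ by a traversal that visits only the $k$-sparse control states, i.e.\ a solution to $Q_{n,[0,k]}$. First I would invoke Lemma~\ref{lem:k-sparse}: after the frequency‑domain transformation of Section~\ref{sec:size-trans}, a $k$-rUCG has frequency vector $\mathbf{Y}$ supported only on indices $\boldsymbol{\omega}$ with $\wt(\boldsymbol\omega)\le k$, and there are exactly $\sum_{i=1}^{k}\binom{n}{i}$ such nonzero frequencies (plus the all‑zero frequency, which contributes the single uncontrolled $U(Y_0)$). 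Each nonzero frequency $\boldsymbol\omega$ with $\wt(\boldsymbol\omega)\le k$ contributes one single‑control $\mathrm{CU}$ gate in the decomposition Eq.~\eqref{eq:rucg-circuit}; this immediately gives the claimed $\sum_{i=1}^{k}\binom{n}{i}$ CU gates and one U.

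The substantive part is the CNOT count, which is governed by the length and structure of the traversal sequence $S$ needed to walk through all weight‑$\le k$ vectors, updating the activation qubit by a single CNOT at each step. Here I would appeal to the existence of a (cyclic) Gray‑code‑type listing of $Q_{n,[0,k]}$ from the weight‑range Gray code literature~\cite{RangeGray2018GregorP,RangeGray2022MützeT}: one can enumerate all $M:=\sum_{i=0}^{k}\binom{n}{i}$ binary strings of weight at most $k$ so that consecutive strings differ in a bounded number of bit positions. Ideally consecutive strings differ in a single bit (a true Gray code), in which case each traversal step $S_i$ is one CNOT and $|S|=M$, closing the cycle needs $M$ CNOTs, matching the $\le 2M$ bound with room to spare; if the relevant combinatorial result only guarantees that consecutive codewords differ in at most two positions, then each $S_i$ costs at most two CNOTs and the total is at most $2M=2\sum_{i=0}^{k}\binom{n}{i}$, still giving the stated bound. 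I would also need the ancilla‑free variant analogous to GP$^*(n)$: split the weight‑$\le k$ strings by whether they use the last qubit (those that do have weight $\le k-1$ on the remaining $n-1$ qubits and can use that last qubit as activation ancilla à la GP$(n-1)$; the rest are handled recursively as a $Q_{n-1,[0,k]}$ instance), and argue the recursion removes the two boundary CNOTs exactly as in the GP$\to$GP$^*$ step, so the bound is strict ($<2\sum_{i=0}^k\binom{n}{i}$).

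The main obstacle is the combinatorial input: I need a clean, citable statement that $Q_{n,[0,k]}$ admits a listing whose successive elements differ by at most one (or two) transpositions/flips, together with enough control over where the differing bit sits to implement the update with a CNOT targeting the activation qubit. The results of~\cite{RangeGray2018GregorP,RangeGray2022MützeT} are phrased for specific weight ranges and may require a small adaptation (e.g.\ handling the low‑weight tail $\wt=0,1$ separately, or concatenating a Gray path for each weight layer with single‑bit bridges between layers) to get a genuinely CNOT‑cheap traversal. Once that traversal is fixed, everything else is bookkeeping: plug $S$ into Eq.~\eqref{eq:rucg-circuit}, count one CU per nonzero $\mathbf Y$‑entry, count at most two CNOTs per step of $S$ minus the boundary savings from the ancilla‑free recursion, and read off the bounds in the theorem statement.
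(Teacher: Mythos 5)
Your proposal follows essentially the same route as the paper's proof: the CU count and single $U$ come from the $k$-sparse support of the frequency vector (Lemma~\ref{lem:k-sparse}), the CNOT bound comes from a weight-range Gray-code traversal of $Q_{n,[0,k]}$ with one or two bit flips per step (citing \cite{RangeGray2022MützeT}) stored in an activation ancilla, and the ancilla-free variant uses the same recursion splitting off the last qubit into a GP$^*(n-1,k)$ part and a GP$(n-1,k-1)$ part. The combinatorial caveat you flag about the exact form of the weight-range Gray code is also left at the same level of detail in the paper itself, so your argument matches both the approach and the rigor of the original.
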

	
	\begin{proof}
		We designed two variants for implement $S$ based on k-sparse Gray code: GP$(n, k)$ with an ancilla, and GP$^*(n, k)$ without ancilla.
		
		\paragraph{GP$(n, k)$ with an ancilla}
		\cite{RangeGray2022MützeT}) guarantee that for the set of weights
		\([0,k]\) there exists a near-optimal closed walk (a cycle or near-cycle)
		that visits every vertex in \(Q_{n,[0,k]}\) with successive vertices
		differing by one or a two constant number of bit flips. We therefore
		can construct a sequence \( \{\boldsymbol{\omega}^{(i)}\} \subset
		Q_{n,[0,k]}\) such that consecutive elements are (nearly) adjacent in \(Q_n\).
		
		Similar to the full-rUCG case we store
		\(s^{(i)}_{\mathrm{a}}=\bm{c}\cdot\boldsymbol{\omega}^{(i)}\) in an ancilla
		\(a\), with each transition needs one or two CNOTs, yields the following upper bound on CNOT count:
		
		\begin{equation}
			\label{eq:ksparse-cost}
			\mathrm{CNOTs}\left(\mathrm{GP}(n,k)\right) \; \le \; 2\sum_{i=0}^{k} \binom{n}{i}.
		\end{equation}
		
		\paragraph{GP$^*(n, k)$ without ancilla}
		Ancilla-free schemes for the \(k\)-sparse case can also be obtained by recursive construction, A \(\mathrm{GP}^*(n,k)\) can be divided into a \(\mathrm{GP}^*(n-1,k)\) and \(\mathrm{GP}(n-1,k-1)\). It is easy to prove that the number of CNOT gates required by \(\mathrm{GP}^*(n,k)\) also satisfies Eq.~\eqref{eq:ksparse-cost}.
		
		Both GP$(n, k)$ and GP$^*(n, k)$ proves the CNOT count in Theorem~\ref{thm:krucg-gatecount}, with CU and U count equals the count of control states.
		
	\end{proof}
	
	\subsection{Section Discussion}
	\label{sec:gate-count-summary}
	This work have presented an optimized decomposition framework for rUCG. The main contributes are:
	(i) a frequency domain reformulation via the
	Walsh--Hadamard transform (Section~\ref{sec:size-trans});
	and (ii) an efficient traversal mechanism for these control states
	(Section~\ref{sec:size-rucg}), which implements the decomposition in
	circuit form Eq.~\eqref{eq:rucg-circuit} using only CNOTs and
	single-controlled gates; finally (iii) Integrate into circuit construction logic of traversal activation and controlled gate insertions (Section~\ref{sec:size-traversal}).
	
	The proposed framework opens several practical directions: concrete
	routing algorithms for \(\mathrm{GP}^*(n)\) with provable CNOT counts;
	explicit synthesis methods for single-control target gates within common
	hardware gate sets.
	
	Beyond the general rUCG model, we further explore k-rUCG model (Section~\ref{sec:size-krucg}), retains the essential structure of rUCGs.
	
	\section{\label{sec:level4}Decomposition of rUCG with Lower Depth}
	The previous section presented a gate-count-optimized decomposition of the rUCG and k-rUCG, resulting in an average depth of 2 and 4 per multi-controlled gate, respectively. However, circuits built using this method exhibit excessive depth. This depth is already superior to that achievable by decomposing each multi-controlled gate individually. Specifically, if a decomposition with $O(log n)$ depth is adopted for a multi-controlled gate, the rUCG approach achieves a reduction by a factor of $O(log n)$. Given that an n-qubit multi-controlled gate cannot be optimized to constant depth, the rUCG method is unambiguously superior.
	
	However, from the perspective of the rUCG's native decomposition structure, where each gate occupies a dedicated circuit layer, the depth remains sub-optimal. To achieve a lower-depth decomposition, it is necessary to apply multiple gates within the same layer. This can be accomplished by modifying the traversal through the control state space, enabling each step of the traversal to activate distinct control states at different positions simultaneously.
	
	Similar to Section \ref{sec:size-trans} , define a new sequence of circuit $S = \{S_0, S_1, \dots, S_{D-1}\}$ involving CNOT gates. To handle the case where each layer yields multiple control states, we introduce for each step $i$ a set of $P_i$ activation qubits 
	\begin{equation}	
		\bigl\{\ket{s_{p}^{(i)}} : p \in P_i \subseteq [n]\bigr\},
	\end{equation}
	with each $\ket{s_{p}^{(i)}}$ carries $\bm{c} \cdot \boldsymbol{\omega}_p^{(i)}$, and modify Eq.~\eqref{eq:rucg-step} to multiple CU
	\begin{equation}\label{eq:rucgd}
		\begin{aligned}
			{\text{multi}}\text{-CU}_i &: \ket{s^{(i)}}\ket t \to 
			\prod_{p \in P_i} 
			CU_{s_p^{(i)}}(Y_{\omega^{(i)}_p})
			\left(
			\ket{s^{(i)}} 
			\ket t
			\right),\\ 
			S_i &: \ket{s^{(i)}} \to \ket{s^{(i+1)}},
		\end{aligned}
	\end{equation}
	with the initial state \(\ket{s^{(0)}}=\ket{0^{\otimes n}}\) and final state \(\ket{s^{(N-1)}}=\ket{0^{\otimes n}}\) ensuring a closed traversal for each branch. Here, $s_p^{(i)}$ represents a activation qubit $q$ during $i$-th step $CU_{s_p^{(i)}}(Y_{\omega^{(i)}_p})$ is controlled gate of corresponding target operator.
	
	Control states at the same depth are called a \emph{control state layer}. When we reduce the number of control state layers, the circuit depth can be optimized.
	
	This section adapts the depth optimization technique of unitary diagonal operators \cite{DUO2023SunX} to achieve significant depth reduction for rUCG circuits.
	
	\subsection{{\label{sec:depth-duo}}Depth bound for Diagonal Unitary Operator}
	
	A diagonal unitary operator, referred to as $\Lambda_n(\bm\chi)$ in Eq.~\eqref{eq:duo}, can be decomposed into a sequence of \{CNOT, $R_z$\} gates. The asymptotically optimal depth has been established in \cite{DUO2023SunX}, and \cite{DUO2024ZhangS} also provides several general optimization methods. The following discussion will adopt a more concise description and reformulate the problem within rUCG.
	
	For any decomposition of $\Lambda_n(\bm\chi)$, the CNOT-based sub-circuit implements the $S$ sequence, while the $R_z$-based sub-circuit corresponds to the multi-CU sequence. Eq.~\eqref{eq:rucgd} can thus be rewritten in the form of alternating CU layers and control state traversal layers:
	\begin{equation}\label{eq:duod}
		\begin{aligned}
			{\text{multi}}\text{-Rz}_i &: \ket{s^{(i)}} \to 
			\prod_{p \in P_i} 
			\exp{(jY_{\omega^{(i)}_p})}
			\ket{s^{(i)}}, \\ 
			S_i &: \ket{s^{(i)}} \to \ket{s^{(i+1)}}.
		\end{aligned}
	\end{equation}
	
	For a general decomposition of $\Lambda_n(\bm\chi)$ within \{CNOT, $R_z$\}, as indicated by Eq.~\eqref{eq:duod}, the CNOT gate sequence traverse through all possible control states, and the Rz gate introduces a phase to each corresponding control state.
	That is, rewrite the circuit as
	\begin{equation}
		\Phi_0 \; S_0 \; \Phi_1 \; S_1 \; \Phi_2 \; S_2 \; \cdots
	\end{equation}
	where each \(S_j\) is a CNOT circuit, and each \(\Phi_i\) is the a layer of Rz rotations. Let $D_{\mathrm{CNOT}}$ and $D_z$ represent the total depths of the two sequences, respectively. Here, $D_z$ is the number of $\{\Phi\}$ sequences, and $D_{\mathrm{CNOT}}$ is given by total depth of all $S$ sequences.
	
	\begin{lemma}
		\label{lem:duo'}
		For $\Lambda_n(\bm\chi)$, there exists a decomposition such that the depth consist of CNOT and Rz layers satisfies:
		\begin{equation}
			\label{eq:duo'}
			D'_{\mathrm{CNOT}}(n)
			= D'_{z}(n) O(\frac{n} {\log n}), \quad
			D'_{z}(n) \leq \frac{2^n+2}{n+1}-1,
		\end{equation}
		without ancilla.
	\end{lemma}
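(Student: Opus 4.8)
The plan is to reduce the diagonal-unitary depth problem to a counting/scheduling problem on the frequency-domain control states $\bm{c}\cdot\boldsymbol\omega$, and then apply the block-wise construction of \cite{DUO2023SunX} expressed in the rUCG language of Eq.~\eqref{eq:duod}. Recall that after the Walsh--Hadamard transformation, implementing $\Lambda_n(\bm\chi)$ amounts to: (i) realizing, over the course of the circuit, every linear functional $\ell_\omega(\bm c)=\bm c\cdot\boldsymbol\omega$ on some qubit so that an $R_z(Y_\omega)$ can be applied there, and (ii) using CNOT layers to move between these functionals. The key observation is that the $n$ wires can simultaneously carry $n$ \emph{linearly independent} functionals, and a single CNOT layer (a depth-$O(\log n)$ parallel sum, or more carefully an invertible linear map realized in depth $O(n/\log n)$ by the standard CNOT-circuit synthesis of arbitrary $\mathrm{GL}_n(\mathbb F_2)$ elements) can reshuffle the $n$ functionals currently present into another independent $n$-tuple. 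So I would: first, partition the $N-1$ nonzero frequencies $\boldsymbol\omega$ into groups, each group being a set of at most $n$ functionals that can be produced as the state of the $n$ control wires after one linear reshuffling from the previous group; second, bound the number of such groups by roughly $(2^n-1)/n$; third, charge each group-to-group transition a CNOT-depth of $O(n/\log n)$ and each group one $R_z$-layer, which yields exactly the two bounds in Eq.~\eqref{eq:duo'}.

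Concretely, the $R_z$-depth bound $D'_z(n)\le \frac{2^n+2}{n+1}-1$ is obtained by a greedy/affine-subspace packing argument: one shows that the $2^n-1$ nonzero vectors in $\mathbb F_2^n$ can be covered by at most $\frac{2^n+2}{n+1}-1$ classes, where each class is the image under one $\mathrm{GL}_n(\mathbb F_2)$ map of the ``coordinate'' functionals $\{e_1,\dots,e_n\}$ minus what has already been handled — essentially each invertible linear map lets us discharge $n$ fresh functionals in one $R_z$-layer, and the $+1$/$n{+}1$ bookkeeping comes from handling the all-nonzero set together with the $\omega=0$ global-phase term. Then I would set $D'_z(n)$ equal to the number of $\Phi$-layers, one per class. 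For the CNOT side, between consecutive $R_z$-layers one must apply the $\mathrm{GL}_n(\mathbb F_2)$ transformation carrying the previous independent functional tuple to the next; by the asymptotically optimal CNOT-circuit synthesis (Patel--Markov--Hayes), any such transformation has depth $O(n/\log n)$. Multiplying, $D'_{\mathrm{CNOT}}(n)=D'_z(n)\cdot O(n/\log n)$, which is precisely Eq.~\eqref{eq:duo'}.

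The main obstacle I anticipate is making the ``one invertible reshuffle discharges $n$ new functionals'' step rigorous while simultaneously ensuring the reshuffles themselves are realizable in depth $O(n/\log n)$ \emph{and} that the partition of frequencies into $\le \frac{2^n+2}{n+1}-1$ classes is achievable without introducing ancillae (the lemma explicitly claims ancilla-freeness). The subtlety is that the $n$ wires after a CNOT layer carry exactly $n$ linearly \emph{dependent-free} functionals, so a class cannot be an arbitrary $n$-subset of frequencies — it must be an independent set, and the global sequence of these independent sets must itself be threadable by invertible maps starting and ending at $\ket{0^{\otimes n}}$. I would handle this by invoking the explicit recursive block construction of \cite{DUO2023SunX} (their Lemma establishing exactly this $R_z$-layer count) rather than re-deriving the packing from scratch, and merely verify that translating their construction into the traversal/activation formalism of Eq.~\eqref{eq:rucgd} preserves both depth quantities; the residual work is then just the arithmetic identity $\lceil (2^n-1)/(n+1)\rceil$-type bound giving $\frac{2^n+2}{n+1}-1$ and the observation that the ``return to $\ket{0^{\otimes n}}$'' closure costs at most one extra reshuffle, already absorbed in the stated constants.
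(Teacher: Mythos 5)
Your proposal follows essentially the same route as the paper: partition the nonzero frequency vectors into linearly independent subsets (with the count bound $\frac{2^n+2}{n+1}-1$ imported from the construction in \cite{DUO2023SunX} rather than re-derived), realize each subset by an ancilla-free CNOT network, and apply one $R_z$ layer per subset. The only correction is that the $O(n/\log n)$ depth per reshuffle is a linear-reversible-circuit \emph{depth} synthesis result (the paper cites \cite{DepthGenerate2020JiangJ}), not Patel--Markov--Hayes, which bounds CNOT \emph{count} by $O(n^2/\log n)$ rather than depth.
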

	
	\begin{proof}
		This proof combines all generate stages in \cite{DUO2023SunX} (Section 5).
		\begin{enumerate}
			\item \textbf{Minimal Sequence $S$:} The non-zero control states are partitioned into linearly independent subsets. \textbf{Linear independence} ensures that state can be achieved through CNOT networks. This step divides $\Lambda_n(\bm\chi)$ into $D'_{z}(n)$ control state layers, whose upper bound is proved by \cite{DUO2023SunX} (Appendix H).
			\item \textbf{Generate Stage:} For each divided control state layer,
			traversal depth of $S_i$ is $O(n/\log n)$, proved by \cite{DepthGenerate2020JiangJ} (Theorem 1).
			\item \textbf{Apply a rotation layer:} For each control state layer,
			apply a layer of Rz to achieve phases on each control state.
		\end{enumerate}
	\end{proof}
	
	\begin{theorem}[Depth of Diagonal Unitary Operators]
		\label{thm:duo-depth}
		For $\Lambda_n(\bm\chi)$,  its circuit depth consist of CNOT and Rz layers can be reduced to:
		\begin{equation} \label {eq: duo-bound}
			D_{\mathrm{CNOT}}(n)=
			D_{z}(n)=O\left({2^n}/n\right),
		\end{equation}
		without ancilla, which matches the asymptotic lower bound.
	\end{theorem}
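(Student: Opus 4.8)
The rotation‑depth half of the statement is essentially free. Lemma~\ref{lem:duo'} already partitions the nonzero control states into at most $\frac{2^{n}+2}{n+1}-1$ linearly independent blocks, each discharged by a single $R_z$ layer; hence $D_{z}(n)\le D'_{z}(n)=O(2^{n}/n)$ with no ancilla. What remains is (i) to bring the CNOT depth down from the $O(2^{n}/\log n)$ bound implied by Lemma~\ref{lem:duo'} to $O(2^{n}/n)$, and (ii) to record the matching lower bound. For (ii) I would invoke the standard dimension‑counting argument: an $n$‑qubit diagonal unitary is specified by $2^{n}-1$ independent real phases, while a circuit layer over $n$ qubits contributes only $O(n)$ free parameters (and exposes only $O(n)$ new linear functions of the inputs), so $\Omega(2^{n}/n)$ layers are necessary; this is exactly the bound quoted in \cite{DUO2023SunX}, and it certifies asymptotic optimality of the upper bound.

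For (i), the plan is to replace the \emph{per‑layer} ``generate stage'' underlying Lemma~\ref{lem:duo'}—which rebuilds a generic depth‑$O(n/\log n)$ CNOT network (via \cite{DepthGenerate2020JiangJ}) before every one of the $O(2^{n}/n)$ rotation layers—by a single \emph{amortized} traversal of the control‑state layers. Concretely, order the $D'_{z}(n)$ linearly independent blocks so that the invertible linear map held on the $n$ register qubits evolves incrementally: consecutive control‑state layers should differ by only $O(1)$ elementary row operations, so the CNOT circuit $S_{i}$ separating layer $i$ from layer $i+1$ in Eq.~\eqref{eq:duod} has constant average depth, and the occasional ``resynchronization'' circuits needed to keep the traversal closed are charged against the $O(2^{n})$ total CNOT budget of a Gray‑code‑type construction (cf.\ the Gray Path of Section~\ref{sec:level}). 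Summed, the CNOT sublayers then contribute $O(2^{n})$ gates spread over $n$ qubits with $\Theta(n)$ disjoint CNOTs per layer, i.e.\ depth $O(2^{n}/n)$; simultaneously each multi‑$R_z$ layer of Eq.~\eqref{eq:rucgd} carries $\Theta(n)$ rotations on distinct qubits, which is exactly what makes the $O(2^{n}/n)$ rotation‑layer count attainable. This is the reformulation, in the frequency‑vector / control‑state‑layer language of Section~\ref{sec:level4}, of the refined construction of \cite{DUO2023SunX} together with the optimizations of \cite{DUO2024ZhangS}.

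Assembling the pieces: interleave the $O(2^{n}/n)$ multi‑$R_z$ layers of Eq.~\eqref{eq:duod} with the $O(2^{n}/n)$ CNOT layers of the amortized traversal, prepend the global phase coming from $Y_{0}$, and note that the traversal is closed so no ancilla is consumed; this yields $D_{\mathrm{CNOT}}(n)=D_{z}(n)=O(2^{n}/n)$, matching the $\Omega(2^{n}/n)$ lower bound from (ii). The main obstacle is step (i): exhibiting an ordering of the linearly independent control‑state blocks together with a family of constant‑depth ``increment'' CNOT circuits and only $O(2^{n}/n)$ aggregate ``resynchronization'' depth, such that every nonzero parity function is still exposed on some register qubit at the moment its rotation layer acts. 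This simultaneous control of (a) linear‑algebraic reachability, (b) CNOT depth, and (c) rotation‑layer packing is precisely the technical heart inherited from \cite{DUO2023SunX}.
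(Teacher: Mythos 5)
Your rotation-depth observation and the lower-bound argument are fine, but the core of the theorem — getting the CNOT depth from the $O(2^n/\log n)$ of Lemma~\ref{lem:duo'} down to $O(2^n/n)$ — is exactly the part your proposal does not actually establish, and your sketch of it is not the mechanism that works. You propose to order the $D'_z(n)$ linearly independent blocks of the \emph{full} register so that ``consecutive control-state layers differ by only $O(1)$ elementary row operations.'' That cannot hold as stated: the blocks partition the $2^n-1$ nonzero parities, so no parity recurs, and between two consecutive rotation layers essentially all $n$ parities held on the register must be replaced; each replacement costs at least one row operation, so consecutive blocks differ by $\Theta(n)$ row operations, not $O(1)$. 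What one can hope for is constant \emph{depth} per transition (with $\Theta(n)$ parallel CNOTs), but exhibiting block choices and an ordering for which that is simultaneously compatible with linear-algebraic reachability and rotation packing is precisely the hard construction, which you acknowledge as ``the main obstacle'' and defer wholesale to \cite{DUO2023SunX}. A proof that leaves its central step as an obstacle inherited from a citation is a gap, not a proof.

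The paper closes this gap with a concrete divide-and-conquer mechanism that your amortized-ordering picture does not capture. It splits the register into a prefix of size $r_c=\lceil n/2\rceil$ and a suffix of size $r_t=\lfloor n/2\rfloor$, applies Lemma~\ref{lem:duo'} only to the suffix (so there are only $D'_z(r_t)\le \frac{2^{r_t}+2}{r_t+1}-1$ generate stages, each of depth $O(n/\log n)$, which is now affordable because there are few of them), and after each generate stage runs a \emph{pipelined Gray Path Stage}: a Gray-code walk over the $2^{r_c}$ prefix states is threaded concurrently into all $r_t$ suffix qubits (figure~\ref{fig:k-GP}), so each of the $2^{r_c}$ steps applies $r_t$ parallel CNOTs and $r_t$ parallel $R_z$'s at unit depth. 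After resetting the suffix with a linear map $R$, the zero-suffix phases are handled by recursing $\Lambda_{r_c}$ on the prefix, in parallel with $R$. This yields the recurrence $D_z(n)=D_z(r_c)+(2^{r_c}+1)D'_z(r_t)$ (and its CNOT analogue with an extra $O(n/\log n)$ per stage), which resolves to $O(2^n/n)$. So the amortization you want comes from pipelining single-bit Gray updates \emph{within} a stage plus recursion on half the qubits, not from cheap transitions between full independent blocks; to repair your argument you would need to either reproduce this prefix/suffix construction or supply an explicit ordering-plus-increment-circuit family with the properties you assert, which you have not done.
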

	
	\begin{proof}
		\begin{figure}[tb]
			\centering
			\includegraphics[width=\columnwidth]{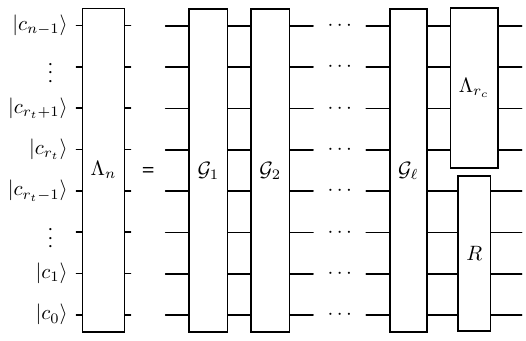}
			\caption{A circuit framework to implement a n-qubit unitary diagonal matrix, $\Lambda_{n}$, where 
				$r_t = \lfloor n/2 \rfloor$, $r_c = n - r_t = \lceil n/2 \rceil$, and $\ell_{k_t} \leq {2\binom{r_t}{k_t}}/{r_t}+2$.
				The depth of the operator $\mathcal{G}_{k_t}$ is $O(\frac 1n\binom n{k_t})$ for each ${k_t} \in [k]$, and the depth of the operator $R$ is $O(r_t / \log r_t)$.}
			\label{fig:duod}
		\end{figure}
		
		The implementation of \cite{DUO2023SunX} employs a recursive divide-and-conquer strategy, partitioning the n-qubit system into \textbf{prefix qubits} $\ket{c_c}$  of size $r_c = \lceil n/2 \rceil$ and \textbf{suffix qubits} $\ket{c_t}$  of size $r_t = \lfloor n/2 \rfloor$. The diagonal unitary $\Lambda_n$ is realized through the composition of the following steps:
		\begin{itemize}
			\item \textbf{Generate Stage:} Apply Generate State on $\ket{c_t}$ , denote as $S_i^{(c_t)}$, to traverse all non-zero states in suffix qubits.
			\item \textbf{Gray Path Stage:} For each arrived suffix state, 
			a Gray Path stage systematically extends control traversal from suffix qubits to all, with rotations applied for each state.
			From figure \ref{fig:GP}(a), an extra qubit can traversal with all state in control qubits within GP$(n)$.
			Figure \ref{fig:k-GP} demonstrates that arranging the gates in a pipeline can traverse concurrently within multiple qubits without increasing the overall circuit depth.
			The stage is called Gray Path Stage. A Gray Path Stage use $2^{r_c} $ CNOTs and $2^{r_c}$ Rz gates.
			\item \textbf{Iteration until Full Traversal:} Repeat the Generate Stage and the Gray Path Stage for $D'_z({r_t})$ times until all non-zero control states are completely traversed.
			\item \textbf{Suffix qubits Reset:} A linear reversible transformation $R$ is applied to reset suffix qubits to its initial state.
			\item \textbf{Recursive Invocation on the Prefix qubits:} Finally, the procedure \textbf{recursively executes} $\Lambda_{r_c}$ on prefix qubits to accurately apply the remaining phases within zero suffix.
		\end{itemize}
		
		Since the final two steps operate on distinct qubits, they can be executed in parallel. We take the larger one $\Lambda_{r_c}$ for depth calculation. The circuit depth satisfies
		

		\begin{equation}
			\begin{aligned} \label{eq:duo-complexity2}
				D_{z}(n) 
				&=   D'_z({r_t}) + 
				2^{r_c} D'_z({r_t}) +
				D_{z}({r_c}) \\
				~& = D_{z}({r_c}) +  (2^{r_c} + 1) D'_z({r_t})\\
				~& \leq D_{z}({r_c}) +  (2^{r_c} + 1) \left(\frac{2^{r_t}+2}{r_t+1}-1\right) ,
			\end{aligned}
		\end{equation}
		and 
		\begin{equation}
			\begin{aligned} \label{eq:duo-complexity1}
				D_{\mathrm{CNOT}}(n) 
				&=  D'_{\mathrm{CNOT}}({r_t}) 
				{+} 2^{r_c} D'_z({r_t}) 
				{+} D_{\mathrm{CNOT}}({r_c})	   \\
				~& = D_{\mathrm{CNOT}}({r_c}) 
				{+}  (2^{r_c} {+} O(\frac{n}{\log n})) D'_z({r_t}),\\
			\end{aligned}
		\end{equation}
		which resolves to the bound in Eq.~\eqref{eq: duo-bound}.
	\end{proof}
	
	\subsection{\label{sec:level4.2}Depth bound for k-sparse Diagonal Unitary Operator}
	\label{sec:depth-kduo}
	\begin{figure*}[tb]
		\centering
		\includegraphics[width=\linewidth]{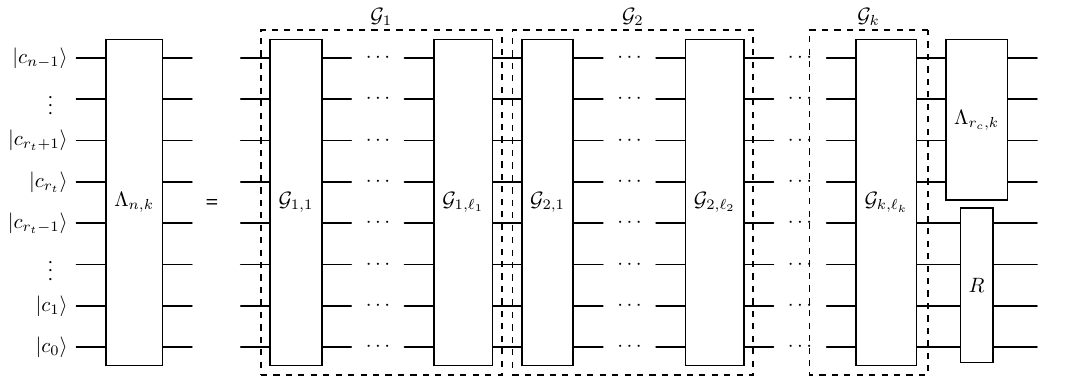}
		\caption{A circuit framework to implement a k-sparse n-qubit unitary diagonal matrix, $\Lambda_{n, k}$, where 
			$r_t = \lfloor n/2 \rfloor$, $r_c = n - r_t = \lceil n/2 \rceil$, and $\ell_{k_t} \leq {2\binom{r_t}{k_t}}/{r_t}+2$.
			The depth of the operator $\mathcal{G}_{k_t}$ is $O(\frac 1n\binom n{k_t})$ for each ${k_t} \in [k]$, and the depth of the operator $R$ is $O(r_t / \log r_t)$.}
		\label{fig:k-duo}
	\end{figure*}
	
	\begin{figure}[!t]
		\centering
		\includegraphics[width=\linewidth]{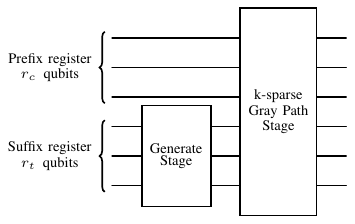}
		\caption{Implementation of operator $\mathcal{G}_{k_t, i}$. The depth of Generate Stage is $O(r_t / \log r_t)$, and the depth of k-sparse Gray Path Stage is $O(\sum_{k_c}\binom{r_c}{k_c}).$}
		\label{fig:G}
	\end{figure}
	
	We first present the depth bound for the decomposition of
	k-sparse diagonal unitary operators. The theorem summarizes the 
	depth scaling of the construction depicted in 
	figure \ref{fig:k-duo} and \ref{fig:G}. The decomposition follows the same high-level structure as Section~\ref{sec:depth-duo}, except that the suffix register must traverse only constant-weight subspace.
	
	Let $n = r_c + r_t$ be a qubit partition and decompose the weight as $k \geq k_c + k_t$.
	For each possible $k_t$, we construct a traversal layer $\mathcal{G}_{k_t}$ over all
	suffix states of weight $k_t$:
	\begin{equation}\label{eq:k-duo-restate}
		\mathrm{L\Lambda}_{n, k}
		= (\Lambda_{r_c, k}\otimes R)\prod_{k_t = k}^{1} \mathcal{G}_{k_t}.
	\end{equation}
	
	In the construction of $\mathrm{L\Lambda}_{n, k}$, modules R and Generate Stage are the same as Section~\ref{sec:depth-duo}. The remaining different modules are the main task of this section, including: creating Minimal Sequence $S$ to determine $D_z'(n,k)$, a k-sparse variant of Eq.~\eqref{eq:duo'}, and analyzing the complexity of k-sparse Gray Path Stage.
	
	Let
	\begin{equation}
		W_{k}^{(n)} = \left\{ 
		x\in\{0,1\}^{n} ~|~ \mathrm{wt}(x) = k
		\right\},|W_{k}^{(n)}| {=} \binom{n}{k}
	\end{equation}
	be all binary vectors of constant k-weight.

	\begin{lemma}
		\label{lem:min-group}
		$W_{k}^{(n)}$ can be partitioned into disjoint sets
		$G_1,$ $G_2,$ $\dots,$ $G_{\ell_{k_t}}$ such that each $G_i$ is linearly independent over $\mathbb{F}_2$, where $\ell_{k_t}$ denote the minimal number of linearly independent groups needed for
		$W_{k}^{(n)}$. Then,
		\begin{equation}\label{eq:G-bound-cor}
			\ell_{k}\le 
			\left\lceil
			\frac{\binom{n}{k}}{(n+e_{k})/2}
			\right\rceil,
		\end{equation}
		where $e_k=0$ if and only of $e_k$ is even, else $e_k=0$.  
	\end{lemma}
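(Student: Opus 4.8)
# Proof Proposal for Lemma~\ref{lem:min-group}

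The plan is to reduce the partitioning problem to a counting/averaging argument about the maximal size of a linearly independent subset of $W_k^{(n)}$, together with a greedy decomposition that meets this bound up to a ceiling. First I would establish the easy direction: since each $G_i\subseteq W_k^{(n)}\subseteq\mathbb{F}_2^n$ is linearly independent, we have $|G_i|\le n$, which already gives $\ell_k\ge \binom{n}{k}/n$. The content of the claimed bound is the sharpened denominator $(n+e_k)/2$, which must come from a parity obstruction: all vectors in $W_k^{(n)}$ have the same weight $k$, so if $k$ is even, the all-ones vector restricted to their span—more precisely, the sum of any even number of weight-$k$ vectors has even weight, and a linearly independent set of weight-$k$ vectors cannot be ``too large'' because the span's structure forces a bound below $n$. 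The key step is therefore to show that a linearly independent subset of $W_k^{(n)}$ has size at most roughly $(n+e_k)/2$... wait—that is false in general (e.g. the $k$ standard-support rotations of a single weight-$k$ vector can already be made independent up to size close to $n$). Let me reconsider: the bound being claimed is on $\ell_k$, the \emph{number of groups}, not on group sizes, so what is really needed is that $W_k^{(n)}$ can be \emph{covered} by few independent sets, i.e. an upper bound, which only requires exhibiting \emph{one} good partition.

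So the real approach is constructive. I would invoke the result of \cite{DUO2023SunX} (Appendix H) that underlies Lemma~\ref{lem:duo'}: there, the full set of nonzero states of $\mathbb{F}_2^n$ is partitioned into linearly independent groups of size close to $n$, achieving roughly $(2^n-1)/n$ groups, with the refinement giving the $(n+1)$ denominator. The task here is to run the \emph{same} construction restricted to the constant-weight slice $W_k^{(n)}$. Concretely: order the elements of $W_k^{(n)}$ and greedily pack them into groups, opening a new group only when the current vector is dependent on the current group's contents. The number of groups produced is controlled by a potential/double-counting argument: each completed group of size $s$ ``consumes'' $s$ vectors, and the constraint that prevents a group from reaching size $n$ is exactly the even-weight parity relation when $k$ is even—any linearly dependent subset of weight-$k$ vectors summing to zero must have an even number of elements, and within the slice $W_k^{(n)}$ the first forced dependency appears by size $\approx (n+e_k)/2 + 1$, not $n+1$. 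Averaging $\binom{n}{k}$ over groups of size at least $(n+e_k)/2$ (using a careful accounting that the \emph{average} group size is at least this, even if individual groups vary) yields the ceiling bound.

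I would structure the write-up as: (1) reduce to showing the \emph{average} group size in the greedy partition is at least $(n+e_k)/2$; (2) prove the parity fact—every minimal linear dependence among distinct vectors of $W_k^{(n)}$ over $\mathbb{F}_2$ involves an even number of vectors when $k$ is even (since the sum of an odd number of weight-$k$ vectors has weight $\equiv k \pmod 2$, so if $k$ is odd it cannot vanish; conversely the even-count dependencies give the slack), and combine this with a dimension count on the subspace spanned by a candidate group to bound when the first dependency is forced; (3) convert the per-group lower bound into the global $\ell_k\le\lceil \binom{n}{k}/((n+e_k)/2)\rceil$ via the pigeonhole/averaging inequality $\ell_k\cdot(\text{min avg size})\ge\binom{n}{k}$, adjusting by one ceiling for the leftover partial group.

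The main obstacle I anticipate is step (2): pinning down \emph{exactly} the threshold size at which a linear dependency is forced inside $W_k^{(n)}$, and showing it is $(n+e_k)/2$ rather than merely $O(n)$. This requires understanding the structure of the $\mathbb{F}_2$-span of constant-weight vectors—note that $W_k^{(n)}$ sits inside the even-weight code when $k$ is even, which has dimension $n-1$, immediately improving the naive bound, and a further factor-of-two improvement presumably comes from a more delicate argument about which $(n-1)$-dimensional subspaces can be generated by weight-$k$ vectors alone (the greedy group must be a subset of a coordinate-like structure). I would either reconstruct this argument in detail or, more economically, cite the constant-weight analogue if \cite{DUO2023SunX} or \cite{RangeGray2022MützeT} already contains it; given that the k-sparse section explicitly parallels the full case, I expect the cleanest route is to cite the full-rank result of \cite{DUO2023SunX} and supply only the parity-slice refinement. (As a final remark, the statement as printed has a typo—``$e_k=0$ if and only if $e_k$ is even, else $e_k=0$'' should read ``$e_k=1$ if $n$ is even, else $e_k=0$'' or similar—which I would correct in the proof so the denominator $(n+e_k)/2$ is meaningful.)
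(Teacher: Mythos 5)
There is a genuine gap, and it sits exactly where you anticipated it. Your plan hinges on the claim that inside $W_k^{(n)}$ a linear dependency is ``forced'' once a group reaches size about $(n+e_k)/2$, so that a greedy packing produces groups of at least that average size. That structural claim is false: maximal linearly independent subsets of $W_k^{(n)}$ can have size $n$ or $n-1$ (for $k=2$, the chain $e_1{+}e_2,\,e_2{+}e_3,\dots,e_{n-1}{+}e_n$ is already independent of size $n-1$), so $(n+e_k)/2$ is not a rank-type threshold at all — it is merely the average group size that the paper's particular construction guarantees. What a greedy/averaging argument would actually require is a bound on how many weight-$k$ vectors can lie inside the span of the vectors already consumed (so that later groups can still be filled), and you do not supply such a count; this is a genuinely hard counting problem on constant-weight slices of subspaces, which is precisely why the paper avoids greedy altogether. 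Your parity fact is also stated with the parity reversed: since the sum of an odd number of weight-$k$ vectors has weight $\equiv k \pmod 2$, minimal dependencies must have even cardinality when $k$ is \emph{odd} (your parenthetical reasons correctly, but the headline sentence says ``when $k$ is even''), and it is exactly the odd-$k$ case where the paper exploits this.

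The paper's proof is instead a cross-supplement induction on $n$. Split $W_k^{(n)}$ by the last bit into $A=\{x0 : x\in W_k^{(n-1)}\}$ and $B=\{y1 : y\in W_{k-1}^{(n-1)}\}$, take inductively constructed groupings of each with (modified) average sizes $L_{n-1,k}$ and $L_{n-1,k-1}$, then enlarge every $A$-group by one element of $B$ (independence is preserved because a vector ending in $1$ can never lie in the span of vectors ending in $0$), and, when $k$ is odd, enlarge every $B$-group by one element of $A$ (an odd-weight vector on the first $n-1$ bits cannot lie in the span of even-weight ones). Discarding the shortest group gives the recurrence $L_{n,k}\ge \alpha_{n,k}(L_{n-1,k}+1)+\beta_{n,k}(L_{n-1,k-1}+e_k)$, which by induction yields $L_{n,k}\ge (n+e_k)/2$ and hence $\ell_k\le\lceil\binom{n}{k}/((n+e_k)/2)\rceil$. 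So the denominator comes out of an explicit recursive merging construction, not out of any forced-dependency threshold; to repair your write-up you would have to either reproduce this induction or prove the missing subspace-counting lemma, and the latter is not available in \cite{DUO2023SunX} or \cite{RangeGray2022MützeT}. (You are right that the printed definition of $e_k$ is a typo; the intended meaning, consistent with the construction, is $e_k=1$ when $k$ is odd and $e_k=0$ when $k$ is even — it depends on $k$, not on $n$.)
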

	
	\begin{proof}
		The detailed proof is deferred to Appendix~B for clarity in the supplementary material.
	\end{proof}
	
	\begin{lemma}
		For $\tilde{k}$-weight diagonal unitary $\Lambda_{n,\tilde{k}}^{=\tilde{k}}(\bm\chi)$ (distinct to k-sparse notation), there exists a decomposition such that the depth consist of CNOT and Rz layers satisfies:
		\begin{equation}
			\begin{aligned} 
				\label{eq:kduo'}
				D'_{\mathrm{CNOT}}(n,\tilde{k})
				&= D'_{z}(n,\tilde{k}) O(\frac{n} {\log n}), \\
				D'_{z}(n,\tilde{k}) 
				&= \ell_{\tilde{k}},
			\end{aligned}
		\end{equation}
		without ancilla.
	\end{lemma}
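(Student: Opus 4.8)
The plan is to reduce the $\tilde k$-weight case to the structure already established for general diagonal unitaries in Lemma~\ref{lem:duo'}, replacing the full set of nonzero control states by the constant-weight set $W_{\tilde k}^{(n)}$. First I would observe that a $\tilde k$-weight diagonal unitary $\Lambda_{n,\tilde k}^{=\tilde k}(\bm\chi)$ assigns a nontrivial phase only to control states $c$ with $\wt(c)=\tilde k$ in the frequency domain; this follows from the vector/Walsh--Hadamard picture of Section~\ref{sec:size-trans} together with Lemma~\ref{lem:k-sparse}, which already ties $\tilde k$-weight control to $\tilde k$-weight frequency states. Hence the circuit only needs to traverse and phase the vertices of $W_{\tilde k}^{(n)}$, and all other control-state layers are vacuous.

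The key step is then the partition supplied by Lemma~\ref{lem:min-group}: $W_{\tilde k}^{(n)}$ splits into $\ell_{\tilde k}$ disjoint groups $G_1,\dots,G_{\ell_{\tilde k}}$, each linearly independent over $\mathbb{F}_2$. Linear independence is exactly the condition under which a single Generate-Stage CNOT network can simultaneously route all the $\bm c\cdot\boldsymbol\omega$ activation bits of one group into distinct activation qubits; this is the ``\textbf{Minimal Sequence} $S$'' construction of Lemma~\ref{lem:duo'}, now instantiated group-by-group. So I would define one control-state layer per group $G_i$, apply the Generate Stage (depth $O(n/\log n)$ by the cited result of Jiang~et~al.\ used in Lemma~\ref{lem:duo'}), then apply a single $R_z$ layer realizing the phases of all states in $G_i$, then uncompute. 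This directly gives $D'_z(n,\tilde k)=\ell_{\tilde k}$ layers and, since each layer's CNOT traversal has depth $O(n/\log n)$ while each $R_z$ layer has depth $1$, the relation $D'_{\mathrm{CNOT}}(n,\tilde k)=D'_z(n,\tilde k)\,O(n/\log n)$, which is precisely Eq.~\eqref{eq:kduo'}. Ancilla-freeness is inherited because each Generate Stage is reversible and gets uncomputed before the next group, exactly as in the proof of Lemma~\ref{lem:duo'}.

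The main obstacle I anticipate is the bookkeeping needed to show that restricting to constant weight does not break the Generate-Stage depth bound: the $O(n/\log n)$ figure in Lemma~\ref{lem:duo'} is stated for generating a linearly independent set of $n$-bit vectors, and one must check that a linearly independent subset of $W_{\tilde k}^{(n)}$ (which may have fewer than $n$ elements, or elements of small weight) still falls under that theorem's hypotheses — it does, since linear independence is all that result requires, and a set of size $\le n$ is handled by padding with the standard basis or by the theorem's native statement. A secondary subtlety is making sure the $R_z$ layer for a group can be applied in depth $1$: the activation qubits holding the $\bm c\cdot\boldsymbol\omega$ values for states in $G_i$ are distinct by construction of the Generate Stage, so the single-qubit $R_z$ rotations act on disjoint wires and commute. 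Once these two points are in place, the lemma follows by simply reading off the layer count $\ell_{\tilde k}$ from Lemma~\ref{lem:min-group} and the per-layer depth from the cited generation result.
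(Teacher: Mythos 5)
Your proposal matches the paper's argument, which is stated there simply as ``identical to the proof of Lemma~\ref{lem:duo'}, except that Lemma~\ref{lem:min-group} supplies the linearly independent partition of the constant-weight set'': one Generate-Stage layer of depth $O(n/\log n)$ plus one $R_z$ layer per group $G_i$, giving $D'_z(n,\tilde k)=\ell_{\tilde k}$ and $D'_{\mathrm{CNOT}}=D'_z\,O(n/\log n)$ without ancilla. Your additional checks (constant-weight frequency support, applicability of the generation depth bound to small independent sets, depth-$1$ $R_z$ layers) are consistent elaborations of the same route rather than a different one.
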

	
	\begin{proof}
		The proof is identical to that of Lemma ~\ref{lem:duo'}, except that Lemma ~\ref{lem:min-group} is used for the maximum independent set partitioning.
	\end{proof}
	
	\begin{lemma}
		\label{lem:comb-gray-depth}
		A k-sparse Gray Path stage that can be applied to traverses all $(k-k_t)$-sparse prefix configurations in parallel to all $k_t$-length suffix qubits. The stage can be implemented with depth less than
		\begin{equation*}
			3\sum_{k_c=0}^{k-k_t}\binom{r_c}{k_c}.
		\end{equation*} 
	\end{lemma}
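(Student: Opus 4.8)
The plan is to analyze the k-sparse Gray Path stage as the natural generalization of the standard Gray Path stage from figure \ref{fig:GP}(a), restricted so that the prefix register only visits configurations of Hamming weight at most $k-k_t$. Recall that in the full-weight case, a Gray Path stage over $r_c$ prefix qubits uses $2^{r_c}$ CNOTs arranged as a binary-reflected Gray code, and figure \ref{fig:k-GP} shows that when this is pipelined across all $k_t$ suffix activation qubits simultaneously the \emph{depth} equals the \emph{length} of the underlying Gray code sequence (up to a small constant), since each CNOT in the sequence can be scheduled against a different suffix branch in staggered fashion. So the task reduces to bounding the length of a closed walk that visits every binary vector of weight at most $k-k_t$ over $r_c$ coordinates, with consecutive vertices differing in a bounded number of bit flips.

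First I would invoke the weight-range Gray code results of \cite{RangeGray2018GregorP, RangeGray2022MützeT} for the set $Q_{r_c,[0,\,k-k_c^{\max}]}$ — equivalently the $(k-k_t)$-sparse code — which guarantees a (near-)Hamiltonian cycle on these $\sum_{k_c=0}^{k-k_t}\binom{r_c}{k_c}$ vertices where each step is one or a bounded constant number of bit flips. Each such bit flip, in the pipelined activation picture, costs one CNOT layer (controlled on a prefix qubit, targeting the relevant suffix activation qubit), so the total depth of the stage is at most a constant multiple of the number of vertices. Then I would account for the constants: one factor comes from steps that require two flips rather than one (the ``near-cycle'' overhead from the weight-range code), and a second small additive/multiplicative factor comes from closing the walk back to the initial configuration and from interleaving the $R_z$ rotation layer between consecutive CNOT steps. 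Collecting these, the depth is bounded by $3\sum_{k_c=0}^{k-k_t}\binom{r_c}{k_c}$, which is the claimed bound; the coefficient $3$ absorbs the two-flip transitions and the rotation-layer interleaving.

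The main obstacle I expect is making the pipelining argument precise in the sparse regime: in the full-weight case the binary-reflected Gray code has a clean recursive structure that makes the staggered scheduling across suffix branches transparent, but weight-range Gray codes from \cite{RangeGray2022MützeT} are combinatorially more delicate and do not obviously admit the same uniform recursive pipeline. I would handle this by not relying on any special structure beyond the bounded-flip property: since each transition touches only $O(1)$ prefix coordinates, the interaction graph between successive CNOT operations across the $k_t$ branches has bounded degree, so a greedy layer-by-layer schedule achieves depth within a constant factor of the sequence length regardless of the code's internal structure. A secondary subtlety is the bookkeeping for which suffix states are ``live'' — only the $k_t$ activation qubits carrying a weight-$k_t$ suffix configuration participate — but since these are fixed throughout the stage and act on disjoint activation qubits, they do not collide, so the depth is governed solely by the prefix-side walk length. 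Combining the weight-range Gray code length bound with the bounded-degree scheduling argument yields the stated inequality.
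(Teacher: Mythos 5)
Your accounting of the constant matches the paper's: each transition of the weight-range ($(k-k_t)$-sparse) Gray code over the $r_c$ prefix qubits costs one or two CNOTs, one $R_z$ layer is inserted per visited state, and summing gives at most $3\sum_{k_c=0}^{k-k_t}\binom{r_c}{k_c}$ layers; the paper likewise reduces the stage to a single GP$(r_c,k-k_t)$ from Section~\ref{sec:size-krucg} executed in parallel across the suffix qubits as in figure~\ref{fig:k-GP}(b). The gap is in how you justify that parallelization. You stagger \emph{identical} copies of the code across the suffix branches and argue that a greedy layer-by-layer schedule suffices because ``the interaction graph \dots has bounded degree.'' That degree is not bounded: two CNOTs from different branches conflict exactly when they are controlled by the same prefix qubit, and in any Gray-type code some bit is flipped a constant fraction of the steps (in the reflected code, bit $0$ is flipped every other step). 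The congestion on that one prefix qubit across the $r_t$ branches is therefore $\Theta(r_t L)$, where $L$ is the code length, so no schedule of identical staggered copies --- greedy or otherwise --- can reach depth $O(L)$ without fan-out gates; you would lose a factor of order $r_t\approx n/2$, which destroys the claimed bound.

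The missing idea (implicit in figure~\ref{fig:k-GP}(b) and explicit in the construction of \cite{DUO2023SunX} that the paper adapts) is to give each suffix branch a \emph{coordinate-permuted} copy of the same weight-range code, e.g.\ cyclically rotating the prefix bit labels by the branch index: then at every time step the branches flip pairwise distinct prefix bits, each layer uses each prefix qubit as a control at most once, and the whole stage runs at depth equal to (a small constant times) the length of a single GP$(r_c,k-k_t)$. Crucially, Hamming weight is invariant under coordinate permutation, so every rotated copy still traverses exactly the $(k-k_t)$-sparse prefix set, and the occasional two-bit-flip transitions only create pairwise collisions absorbed in the same constant. With that replacement your argument closes; without it, the pipelining step --- which you correctly identified as the main obstacle --- fails rather than being a matter of constants.
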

	
	\begin{figure}[!t]
		\centering
		\begin{minipage}[b]{\linewidth}
			\centering
			\includegraphics[width=\linewidth]{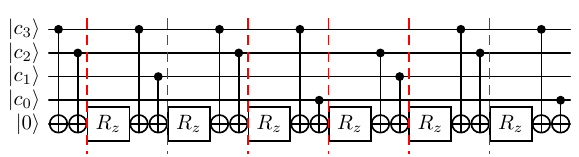}
			\caption*{(a)}
		\end{minipage}
		\begin{minipage}[b]{\linewidth}
			\centering
			\includegraphics[width=\linewidth]{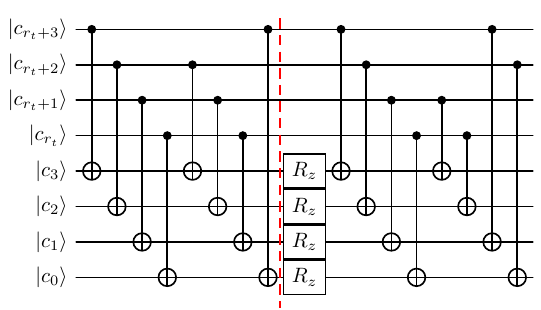}
			\caption*{(b)}
		\end{minipage}
		\caption{(a) k-sparse Gray Path GP$(n,k)$ with $n=4$ and $k=2$, using Rz as a inserted placeholder. (b) k-sparse Gray Path Stage  with $r_c=r_t=4$ and $k_c \leq 2$, which extends first two slice of (a) with combinatorial control state traversal to all suffix qubits without increasing depth. Then apply Rz gates to corresponding control states.}
		\label{fig:k-GP}
	\end{figure}
	
	\begin{proof}
		For each fixed suffix configuration of weight $k_t$, the remaining prefix is $k_c$-sparse.  
		To exhaust all prefix configurations, we employ k-sparse Gray Path GP$(n,k)$ in Section~\ref{sec:size-krucg}.
		
		Furthermore, prefix transitions can be executed \emph{in parallel} for every suffix qubit without
		increasing circuit depth, as illustrated in figure \ref{fig:k-GP}(b).
		
		Thus, the depth of k-sparse Gray Path Stage is equivalent to the depth of a GP$(r_c,k-k_t)$.
	\end{proof}
	
	\begin{theorem}[Depth of k-sparse diagonal unitary operators]
		\label{thm:ksparse-depth}
		Denote an n-qubit k-sparse diagonal unitary as $\Lambda_{n, k}(\boldsymbol{\chi})$.
		Then it admits a decomposition of the form in figure \ref{fig:k-duo}, with
		a total depth 
		\begin{equation}
			\label{eq:kduo-depth}
			D_{\mathrm{CNOT}}(n,k) = D_{z}(n,k)= O\!\left(\frac{1}{n}\sum_{i=1}^{k} \binom{n}{i}\right).
		\end{equation}
	\end{theorem}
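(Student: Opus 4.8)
The plan is to mirror the proof of Theorem~\ref{thm:duo-depth}, replacing each full‑traversal module by its constant‑weight counterpart developed in this section. I would start from the decomposition in Eq.~\eqref{eq:k-duo-restate}, $\mathrm{L\Lambda}_{n,k}=(\Lambda_{r_c,k}\otimes R)\prod_{k_t=k}^{1}\mathcal{G}_{k_t}$, with $r_c=\lceil n/2\rceil$ and $r_t=\lfloor n/2\rfloor$. The reset transformation $R$ and the Generate Stage are unchanged from Section~\ref{sec:depth-duo} (CNOT‑depth $O(r_t/\log r_t)$, no $R_z$ layer), so the only new inputs are: the number of control‑state layers inside each $\mathcal{G}_{k_t}$, which by the $\tilde k$‑weight lemma Eq.~\eqref{eq:kduo'} equals $\ell_{k_t}$, with $\ell_{k_t}\le 2\binom{r_t}{k_t}/r_t+2$ obtained by applying Lemma~\ref{lem:min-group} with $n\mapsto r_t$, $k\mapsto k_t$; and the depth of each k-sparse Gray Path Stage, at most $3\sum_{k_c=0}^{k-k_t}\binom{r_c}{k_c}$ by Lemma~\ref{lem:comb-gray-depth}.

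Next I would assemble the recurrence. Inside $\mathcal{G}_{k_t}$ each of the $\ell_{k_t}$ iterations contributes one $R_z$ layer for the pure‑suffix phases plus the k-sparse Gray Path Stage, while the Generate Stage contributes only CNOT depth; parallelizing blocks that act on disjoint registers then gives $D_z(n,k)\le\sum_{k_t=1}^{k}\ell_{k_t}\bigl(1+3\sum_{k_c=0}^{k-k_t}\binom{r_c}{k_c}\bigr)+D_z(r_c,k)$ and $D_{\mathrm{CNOT}}(n,k)\le\sum_{k_t=1}^{k}\ell_{k_t}\bigl(O(r_t/\log r_t)+3\sum_{k_c=0}^{k-k_t}\binom{r_c}{k_c}\bigr)+D_{\mathrm{CNOT}}(r_c,k)$, exactly paralleling Eqs.~\eqref{eq:duo-complexity2}--\eqref{eq:duo-complexity1}.

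The core step is collapsing the double sum. Substituting $\ell_{k_t}\le 2\binom{r_t}{k_t}/r_t+2$, the dominant term of $\sum_{k_t}\ell_{k_t}\sum_{k_c\le k-k_t}\binom{r_c}{k_c}$ is $\frac{2}{r_t}\sum_{k_t=1}^{k}\binom{r_t}{k_t}\sum_{k_c=0}^{k-k_t}\binom{r_c}{k_c}$; regrouping by $j=k_t+k_c$ and applying Vandermonde's identity $\sum_{k_t+k_c=j}\binom{r_t}{k_t}\binom{r_c}{k_c}=\binom{n}{j}$ bounds this by $\frac{2}{r_t}\sum_{j=1}^{k}\binom nj=O\!\bigl(\tfrac1n\sum_{j=1}^{k}\binom nj\bigr)$. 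The Generate‑stage term is $O\!\bigl(\tfrac{r_t}{\log r_t}\sum_{k_t}\ell_{k_t}\bigr)$, and the extremal‑term inequality $\binom nj\ge r_c\binom{r_t}{j-1}$ (a single Vandermonde summand) shows $\tfrac1n\sum_{j\le k}\binom nj=\Omega\bigl(\sum_{j\le k-1}\binom{r_t}{j}\bigr)$, so this term is absorbed into the same bound.

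Finally I would unroll the recursion. The same extremal‑term inequality shows $\tfrac1{r_c}\sum_{j\le k}\binom{r_c}{j}$ is smaller than $\tfrac1n\sum_{j\le k}\binom nj$ by a factor $\Omega(2^{k})$ (hence by a constant below $1$ for $k\ge 2$), so the tail $D_z(r_c,k),D_z(r_c/2,k),\dots$ forms a geometric series dominated by the top level, yielding $D_z(n,k)=D_{\mathrm{CNOT}}(n,k)=O\!\bigl(\tfrac1n\sum_{i=1}^{k}\binom ni\bigr)$ as in figure~\ref{fig:k-duo}. The main obstacle I anticipate is precisely this sum bookkeeping — lining up the Vandermonde regrouping with the $\ell_{k_t}$ estimate and confirming that neither the Generate stage nor the recursive tail injects a spurious $n$ or $\log n$ factor; the degenerate small‑$k$ regime (e.g.\ $k=1$, where $\Lambda_{n,1}$ is already a depth‑$O(1)$ product of commuting single‑qubit $R_z$ gates) should be dispatched separately, since there $\binom nk$ no longer dominates the partial sum.
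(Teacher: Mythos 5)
Your proposal follows essentially the same route as the paper: the decomposition of Eq.~\eqref{eq:k-duo-restate}, the layer count $\ell_{k_t}$ from Lemma~\ref{lem:min-group}, the Gray Path Stage depth from Lemma~\ref{lem:comb-gray-depth}, and the recurrences matching Eqs.~\eqref{eq:kduo-complexity2}--\eqref{eq:kduo-complexity1}. In fact you supply more detail than the paper at the final step — the Vandermonde regrouping $\sum_{k_t+k_c=j}\binom{r_t}{k_t}\binom{r_c}{k_c}=\binom{n}{j}$ and the geometric decay of the recursion tail — which the paper asserts without argument, so your write-up is a correct (and slightly more complete) version of the same proof.
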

	
	\begin{proof}
		Each layer $\mathcal{G}_{k_t}$ must traverse all $k_t$-weight suffix states.
		According to Lemma ~\ref{lem:min-group}, ~\ref{lem:comb-gray-depth} and figure \ref{fig:G},	summing over all $k_t$, the depth can be deducted as
		\begin{equation}\label{eq:kduo-complexity2}
			\begin{aligned} 
				D_{z}(n,k) 
				& \leq   D'_z({r_t,k}) {+}
				3\sum_{k_t=1}^{k} l_{k_t}\binom{r_c}{k{-}k_t} {+}
				D_{z}({r_c,k}) \\
				~& = D_{z}({r_c,k})
				+  \sum_{k_t=1}^{k} \left(3\binom{r_c}{k{-}k_t}+1\right)l_{k_t} ,
			\end{aligned}
		\end{equation}
		and 
		\begin{equation}\label{eq:kduo-complexity1}
			\begin{aligned} 
				&D_{\mathrm{CNOT}}(n,k) \\
				& \leq   D'_{\mathrm{CNOT}}({r_t,k}) {+}
				3\sum_{k_t=1}^{k} l_{k_t}
				\binom{r_c}{k{-}k_t} {+}
				D_{\mathrm{CNOT}}({r_c,k}) \\
				~& = D_{\mathrm{CNOT}}({r_c,k})
				{+}  \sum_{k_t=1}^{k} 
				\left(
				3\binom{r_c}{k{-}k_t} {+} O(\frac{n}{\log n})
				\right)l_{k_t} .
			\end{aligned}
		\end{equation}
		
		Finally, Eq.~\eqref{eq:kduo-depth} is proved.
	\end{proof}
	
	\subsection{Depth Bound for rUCG and k-rUCG Targeting on Rz}
	\label{sec:depth-ucrz}
	
	When the target operator of an rUCG or k-rUCG is a 
	rotation $R_z(\theta)$, the decomposition becomes 
	significantly simpler.
	Every controlled-$R_z$ gate is diagonal, and their product directly forms an 
	$(n{+}1)$-qubit diagonal unitary.
	Thus, the depth of such rUCGs is determined entirely by the known synthesis 
	results for diagonal unitary operators.
	
	\begin{theorem}[Depth of rUCGs targeting on Rz]
		\label{thm:rz-rucg}
		Denote an n-qubit rUCGs targeting on Rz as $\mathrm{LRz}_n(\mathbf\chi)$,
		which is equivalent to an $(n{+}1)$-qubit diagonal unitary, its optimal circuit depth 
		satisfies
		\begin{equation*}
			D_{\mathrm{CNOT}}(n+1)=
			D_{z}(n+1)=
			O\!\left(\frac{2^{n}}{n}\right).
		\end{equation*}
	\end{theorem}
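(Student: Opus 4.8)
The plan is to reduce the statement directly to Theorem~\ref{thm:duo-depth}. First I would make the embedding explicit: for a target rotation $R_z(\theta_c)=\mathrm{diag}(e^{-j\theta_c/2},\,e^{j\theta_c/2})$, every conditioned factor $C_n^i[R_z(\theta_i)]$ is diagonal in the computational basis of the combined $(n{+}1)$-qubit register (the $n$ control qubits together with the single target qubit). Hence $\mathrm{LRz}_n(\bm\chi)=\sum_{c}\ketbra{c}\otimes R_z(\theta_c)$ is itself a diagonal unitary on $n{+}1$ qubits: writing an $(n{+}1)$-bit index as a pair $(c,t)$ with $t\in\{0,1\}$ the target bit, its phase vector is $\chi'_{(c,t)}=(2t-1)\theta_c/2$, so $\mathrm{LRz}_n(\bm\chi)=\Lambda_{n+1}(\bm\chi')$ in the notation of Eq.~\eqref{eq:duo}.

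Second, I would invoke Theorem~\ref{thm:duo-depth} with $n$ replaced by $n{+}1$: the constructed ancilla-free decomposition of $\Lambda_{n+1}(\bm\chi')$ into alternating CNOT and $R_z$ layers achieves $D_{\mathrm{CNOT}}(n{+}1)=D_{z}(n{+}1)=O\!\left(2^{n+1}/(n{+}1)\right)$, and since $2^{n+1}/(n{+}1)=\Theta(2^{n}/n)$ this is exactly the claimed bound. The same circuit is a valid decomposition of $\mathrm{LRz}_n(\bm\chi)$ because its $R_z$-subcircuit realises the required phases while its CNOT-subcircuit implements the control-state traversal $S$ of Section~\ref{sec:size-trans}, now over the $(n{+}1)$-qubit control space.

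Third, for the optimality claim I would give a parameter-counting lower bound: the family $\{\mathrm{LRz}_n(\bm\chi)\}$ is parametrised by $2^{n}$ independent real angles $\theta_c$, and in any decomposition into layers of $\{\mathrm{CNOT},R_z\}$ on $n{+}1$ qubits each $R_z$ layer contributes at most $n{+}1$ continuous parameters while CNOT layers contribute none; hence $\Omega(2^{n}/n)$ $R_z$ layers are necessary, and together with the matching lower bound for diagonal unitaries in \cite{DUO2023SunX} this shows the bound is asymptotically tight. There is essentially no hard step here; the only point requiring care is that $\mathrm{LRz}_n(\bm\chi)$ spans only a structured subclass of $(n{+}1)$-qubit diagonal unitaries (the phases on the two target values are forced to be opposite), so the lower bound cannot be quoted verbatim from the diagonal-unitary case and instead rests on the direct $2^{n}$-parameter count above, while the upper bound is unaffected since the subclass is contained in the full diagonal family.
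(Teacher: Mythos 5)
Your proposal is correct and follows essentially the same route as the paper: note that every controlled-$R_z$ is diagonal, so $\mathrm{LRz}_n(\bm\chi)$ is literally an $(n{+}1)$-qubit diagonal unitary, then invoke Theorem~\ref{thm:duo-depth} at size $n{+}1$ to get $O\!\left(2^{n+1}/(n{+}1)\right)=O(2^n/n)$. Your extra parameter-counting lower bound for the restricted subclass is a reasonable addition (the paper simply asserts optimality by inheriting it from the general diagonal-unitary case and gives no separate argument for the $R_z$-target subfamily), but it does not alter the core reduction, which matches the paper's.
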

	
	\begin{theorem}[Depth of k-rUCGs targeting on Rz]
		\label{thm:rz-k-rucg}
		Denote an n-qubit k-sparse rUCG targeting Rz as $\mathrm{LRz}_{n, k}(\mathbf\chi)$.
		Then its decomposition corresponds to an $(n{+}1)$-qubit k-sparse diagonal 
		unitary with depth
		\begin{equation*}
			D_{\mathrm{CNOT}}(n+1,k+1){=}
			D_{z}(n+1,k+1)
			{=} O\!\left(\frac{1}{n}\sum_{i=1}^{k} \binom{n}{i}\right).
		\end{equation*}
	\end{theorem}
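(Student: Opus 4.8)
The plan is to mirror the proof of Theorem~\ref{thm:rz-rucg} and reduce the claim to the $k$-sparse diagonal bound of Theorem~\ref{thm:ksparse-depth}, with one extra bookkeeping step for the target qubit. First I would observe that when the target family is $\{R_z(\theta)\}$, each factor $C_{Q_i}[R_z(\theta_i)]$ in the standard $k$-rUCG form of Definition~\ref{def:standard-krucg} is diagonal in the computational basis of the $(n{+}1)$-qubit register (the $n$ controls together with the single target qubit $t$). Since diagonal operators commute, $\mathrm{LRz}_{n,k}(\bm\chi)$ is a single $(n{+}1)$-qubit diagonal unitary $\Lambda$. Expanding each $C_{Q_i}[R_z(\theta_i)]$ in the Walsh--Hadamard/parity basis exactly as in Section~\ref{sec:size-trans}, the phase it contributes is a combination of parities $(-1)^{\bm\omega\cdot\bm x}$ with $\supp(\bm\omega)\subseteq Q_i\cup\{t\}$; hence every nonzero frequency of $\Lambda$ has Hamming weight at most $k{+}1$, so $\Lambda$ is a $(k{+}1)$-sparse diagonal unitary on $n{+}1$ qubits and thus an instance of the class synthesized by Theorem~\ref{thm:ksparse-depth}.

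Next I would invoke the construction behind Theorem~\ref{thm:ksparse-depth} (figure~\ref{fig:k-duo}, Lemma~\ref{lem:min-group}, Lemma~\ref{lem:comb-gray-depth}) but track the frequency set of $\Lambda$ precisely rather than through the generic $(k{+}1)$-sparse count. The key structural fact is that any frequency of $\Lambda$ avoiding the target qubit $t$ is supported inside some $Q_i$, hence has weight $\le k$: so the nonzero frequencies split into at most $\sum_{i=1}^k\binom ni$ frequencies among the $n$ control qubits plus at most $\sum_{i=0}^k\binom ni$ frequencies of the form $\{t\}\cup S$ with $S$ drawn from the $n$ control qubits; in particular there are only $\binom nk$, not $\binom{n+1}{k+1}$, frequencies of the top weight $k{+}1$. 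Feeding this count into the linear-independence partition of Lemma~\ref{lem:min-group} — in which each independent group has size $\Theta(n)$ — gives a number of control-state layers $D_z = O\!\big(\tfrac1n\cdot\#\text{frequencies}\big)=O\!\big(\tfrac1n\sum_{i=1}^k\binom ni\big)$, and the same divide-and-conquer balance as equations~\eqref{eq:kduo-complexity2}--\eqref{eq:kduo-complexity1} (with the Vandermonde sum $\sum_{k_t}\binom{r_c}{k-k_t}\binom{r_t}{k_t}=\binom nk$ now carrying the pinned coordinate $t$) yields $D_{\mathrm{CNOT}}(n{+}1,k{+}1)=D_z(n{+}1,k{+}1)=O\!\big(\tfrac1n\sum_{i=1}^k\binom ni\big)$, which is the claim. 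Equivalently, starting from the black-box output $O\!\big(\tfrac1{n+1}\sum_{i=1}^{k+1}\binom{n+1}{i}\big)$ and expanding by Pascal's rule gives $\sum_{i=1}^{k+1}\binom{n+1}{i}=2\sum_{i=1}^k\binom ni+\binom{n}{k+1}+1$, and the structural fact is exactly what allows replacing the offending $\binom{n}{k+1}$ by $\binom nk$.

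The hard part will be this last point. A naive application of Theorem~\ref{thm:ksparse-depth} treating $\Lambda$ as an arbitrary $(k{+}1)$-sparse diagonal on $n{+}1$ qubits leaves a $\binom{n}{k+1}$ contribution, which for $k=o(n)$ dominates $\sum_{i=1}^k\binom ni$ and breaks the stated bound. The fix is to carry the observation ``every top-weight frequency contains $t$'' through both the group-count of Lemma~\ref{lem:min-group} and the Gray-Path-Stage depth of Lemma~\ref{lem:comb-gray-depth}, so the recursion never pays for frequencies heavier than a genuine $k$-rUCG can generate; concretely, one designates $t$ as a distinguished member of the suffix register in the partition of figure~\ref{fig:k-duo}, so that every recursive stage handles control-weight at most $k$. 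Once that accounting is in place, $D_{\mathrm{CNOT}}=D_z$ and the asymptotics follow exactly as in Theorem~\ref{thm:ksparse-depth}, while the corresponding $\mathrm{LU}(\bm\chi)$-style vector representation is supplied by Corollary~\ref{cor:krucg-vector}.
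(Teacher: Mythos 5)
Your reduction is sound and, at the level of strategy, it is the same as the paper's: the paper offers no separate proof of this theorem, only the remark preceding it — controlled-$R_z$ gates are diagonal, so $\mathrm{LRz}_{n,k}(\bm\chi)$ is an $(n{+}1)$-qubit $(k{+}1)$-sparse diagonal unitary to be handled by Theorem~\ref{thm:ksparse-depth}. Where you go beyond the paper is in noticing that this black-box citation does not literally yield the stated bound: substituting $(n{+}1,k{+}1)$ into Eq.~\eqref{eq:kduo-depth} gives $O\bigl(\tfrac{1}{n}\sum_{i=1}^{k+1}\binom{n+1}{i}\bigr)$, and by Pascal's rule this carries a $\binom{n}{k+1}/n$ term that exceeds $\tfrac1n\sum_{i=1}^{k}\binom{n}{i}$ by a factor of roughly $n/k$ whenever $k=o(n)$. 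Your repair — every weight-$(k{+}1)$ frequency of the induced diagonal must contain the target qubit $t$, since the control-indicator part of $C_{Q_i}[R_z(\theta_i)]$ only produces parities supported inside $Q_i$ (weight $\le k$), so there are at most $\binom{n}{k}$ top-weight frequencies, and $t$ can be pinned into the suffix register so that Lemma~\ref{lem:min-group} and Lemma~\ref{lem:comb-gray-depth} are only ever invoked at control weight $\le k$ — is precisely the bookkeeping needed for the theorem's constant to follow, and it is absent from the paper. What remains only sketched in your plan is that the pinned-frequency family still partitions into linearly independent groups of average size $\Theta(n)$ and that the Gray-Path stages accommodate the pinned coordinate without extra depth; these should be checked explicitly, or else the theorem should be read with the weaker, directly citable bound $O\bigl(\tfrac{1}{n+1}\sum_{i=1}^{k+1}\binom{n+1}{i}\bigr)$.
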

	
	These bounds also apply to target rotations which are Clifford-equivalent to $R_z$, such as 
	$R_x$ and $R_y$, up to a constant number of local Clifford gates:
	\begin{equation*}
		R_x(\theta)=H\, R_z(\theta)\,H,
		\qquad
		R_y(\theta)=S^\dagger H\, R_z(\theta)\,H S.
	\end{equation*}
	
	\subsection{Shared-Target Controlled Gates}
	\label{sec:depth-stcg}
	\begin{figure}[!t]
		\centering
		\begin{minipage}[b]{\linewidth}
			\centering
			\includegraphics[width=\linewidth]{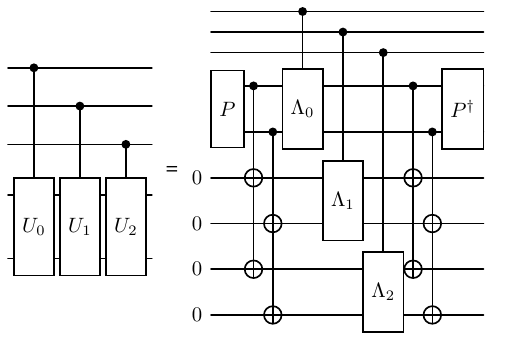}
			\caption*{(a)}
			\label{fig:stcg1}
		\end{minipage}
		\begin{minipage}[b]{\linewidth}
			\centering
			\includegraphics[width=\linewidth]{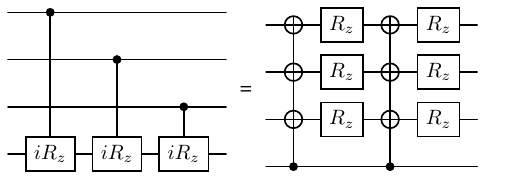}
			\caption*{(b)}
			\label{fig:stcg2}
		\end{minipage}
		\caption{Depth optimize for share-target controlled gates, with target operators commutative. (a) For controlled any $U$. \cite{STCU2001GreenF}  (b) For controlled single-qubit diagonal, namely controlled-$iR_z$. \cite{STCRz2025ZiW} Note that CNOTs with the same control is called quantum fan-out gate\cite{Fanout2021TakahashiY}, and the depth is $\log n$ for n CNOTs.}
		\label{fig:stcg}
	\end{figure}
	
	Diagonal unitary and rUCG share the same control traversal sequence and circuit structure, which also applies to the k-sparse version. As shown in Eq.~\eqref{eq:rucgd} and \eqref{eq:duod}, the construction of a diagonal unitary corresponds to an rUCG in which each controlled gate $CU(Y_{\omega})$ is replaced by an single-qubit phase $\exp(jY_{\omega})$.  So the implementation of rUCG can rely on the previous of diagonal unitary operator.
	
	The core issue is that $R_z$ gates, even when applied to different qubits, can all be executed in a single quantum layer. In contrast, CU gates cannot be executed in parallel when they share the same target qubits. As defined in Section~\ref{sec:rucg-represent}, the controlled-$U(\cdot)$ targets a gate set $\mathcal{U}$ endowed with the structure of a 2-divisible Abelian group.
	
	\begin{lemma}[Depth bound for Shared-Target Controlled Gates]
		\label{lemma:stcg}
		A sequence of controlled-$U(\cdot)$ gates with shared $m$ target qubits can be implemented in depth of $O(\log n)$ CNOTs and O(1) U-close gates, using $O(mn)$ ancillae.
	\end{lemma}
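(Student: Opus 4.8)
The plan is to use the Abelian hypothesis (R1) to pass to a common eigenbasis, turning every shared‑target controlled gate into a controlled \emph{diagonal} gate, and then to parallelize these diagonal pieces by fanning out copies of the $m$‑qubit target register into the $O(mn)$ ancillae, so that the required phases are produced by phase kickback and the copies uncompute cleanly.

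\textbf{Step 1 (normalisation).} I first reduce to at most $n$ single‑controlled gates with pairwise distinct controls. In the depth‑optimized form Eq.~\eqref{eq:rucgd} each $CU$ in a \(\text{multi-CU}_i\) layer is controlled by one activation qubit $s_p^{(i)}$; whenever two factors share a control $a$ I merge them via the homomorphism, $C_a[U(\nu_1)]C_a[U(\nu_2)]=C_a[U(\nu_1{+}\nu_2)]$ (using commutativity and $\nu_1{+}\nu_2\in D_2$). So without loss of generality the object to synthesise is $V=\prod_{i=1}^{r}C_{a_i}[U(\mu_i)]$ with $r\le n$, $a_1,\dots,a_r$ distinct, acting on the shared $m$‑qubit register $\ket t$.

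\textbf{Step 2 (common diagonalisation).} Since $\mathcal U$ is Abelian, the $U(\mu_i)$ are simultaneously diagonalisable: fix one unitary $W$ on the target with $D_i:=W^\dagger U(\mu_i)W$ diagonal for all $i$. Then $V=(I^{\otimes n}\otimes W)\bigl(\prod_i C_{a_i}[D_i]\bigr)(I^{\otimes n}\otimes W^\dagger)$, and $W$, $W^\dagger$ and each controlled‑$D_i$ all lie in the U‑closed family of Definition~\ref{def:Uprimitive}, being obtained from $\mathcal G_{CU}$ by diagonalisation, hence are admissible costed primitives. \textbf{Step 3 (fan‑out kickback).} Allocate $r$ ancilla blocks $A_1,\dots,A_r$ of $m$ qubits each (total $\le mn$), all in $\ket{0^{\otimes m}}$. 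For each target index $j\in[m]$, copy $t_j$ into $A_{1,j},\dots,A_{r,j}$ by a binary‑tree fan‑out of CNOTs of depth $\lceil\log_2 r\rceil$; the $m$ trees act on disjoint qubits, so the whole copy step has CNOT‑depth $O(\log n)$. On a basis state this yields $\ket t\ket t_{A_1}\cdots\ket t_{A_r}$. Now apply each $C_{a_i}[D_i]$ with its target \emph{redirected to block $A_i$}; the pairs $(a_i,A_i)$ are disjoint, so all $r$ of them run in a single U‑close layer. Because $D_i$ is diagonal, $C_{a_i}[D_i]$ on $\ket{a_i}\ket t_{A_i}$ only attaches the phase $e^{j a_i\phi_{i,t}}$ and leaves $A_i$ in state $\ket t$; the inverse fan‑out (depth $O(\log n)$) then returns every $A_i$ to $\ket{0^{\otimes m}}$ while the accumulated phase $\prod_i e^{j a_i\phi_{i,t}}$ stays on the real target. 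By linearity over superpositions of $\ket a\ket t$ the net map is exactly $\prod_i C_{a_i}[D_i]$, and conjugating by $I^{\otimes n}\otimes W^{(\dagger)}$ recovers $V$. Tallying: CNOT‑depth $O(\log n)$ (two fan‑out rounds), U‑close‑gate depth $O(1)$ ($W^\dagger$, one controlled‑$D$ layer, $W$), ancillae $O(mn)$.

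The crux I expect is the correctness bookkeeping in Step 3: one must check that after the controlled‑diagonal layer every copy register still equals the real target bit‑for‑bit, so that the inverse fan‑out disentangles the ancillae and deposits the phase on the data register — this is precisely where diagonality, equivalently the passage to a common eigenbasis licensed by (R1), is indispensable, since a non‑diagonal $U(\mu_i)$ would scramble the copy and break the uncomputation (a Toffoli‑based variant that copies controlled on $a_i$ and then applies a bare $U$‑close gate works too, at the cost of another $O(mn)$ ancillae). The depth accounting for fan‑out is the routine $\log$‑depth quantum fan‑out estimate already noted in figure~\ref{fig:stcg}, and the claim that $W,W^\dagger$ and the controlled‑$D_i$ are legitimate U‑close primitives is immediate from Definition~\ref{def:Uprimitive}; for a target family that is already diagonal (e.g.\ $\mathcal U=\{R_z(\theta)\}$ with $m=1$) Step 2 is vacuous and the construction collapses to the controlled‑$iR_z$ fan‑out of figure~\ref{fig:stcg}(b).
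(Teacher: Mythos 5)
Your proposal is correct and is essentially the paper's own argument: the paper's proof is just a pointer to figure~\ref{fig:stcg}(a) (the fan-out parallelization of commuting shared-target controlled gates from \cite{STCU2001GreenF}) together with the remark that the Abelian property (R1) permits simultaneous diagonalization, which is precisely the construction you spell out (common eigenbasis $W$, log-depth CNOT fan-out of the target into $O(mn)$ ancilla copies, one parallel layer of controlled-diagonal U-close gates via phase kickback, inverse fan-out). Your write-up simply makes the bookkeeping explicit; no gap.
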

	
	\begin{proof}
		See figure \ref{fig:stcg}(a). Note that all $U(\cdot)$ can be diagonalized simultaneously, since $\mathcal{U}$ is Abelian.
	\end{proof}
	
	Furthermore, When the target operators follow stricter constraints, ancilla may not be required.
	
	\begin{definition}
		$\mathcal{U}$ is  said to admit a simultaneous diagonal–tensor decomposition if there exist fixed unitary A and B such that:
		\begin{equation*}
			U(\alpha)= A \bigotimes_{i=0}^{n-1} e^{j\theta_j(\alpha)Z} B,
		\end{equation*}
		where $e^{j\theta_j(\alpha)Z}$ is a single-qubit diagonal.
	\end{definition}
	
	\begin{lemma}
		\label{lemma:stcgv}
		If $\mathcal{U}$ admits a simultaneous diagonal–tensor decomposition with unitary A, B, a sequence of controlled-$U(\cdot)$ gates with shared $m$ target qubits can be implemented in without ancillae.
	\end{lemma}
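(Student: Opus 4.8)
The plan is to exploit the diagonal–tensor structure to factor the whole shared-target sequence through a single pair of basis-change unitaries, after which the residual diagonal action can be absorbed into the existing ancilla-free diagonal-unitary machinery of Section~\ref{sec:depth-duo}. Concretely, write the sequence as $\prod_j C_{Q_j}[U(\alpha_j)]$ acting on a common $m$-qubit target register. Using $U(\alpha)=A\,(\bigotimes_{i} e^{j\theta_i(\alpha)Z})\,B$ and the fact that controlling a conjugated gate $C_Q[AWB]$ is \emph{not} simply $A\,C_Q[W]\,B$ in general, the first step is to observe that the identity branch must also be fixed: since $U(0)=I$ forces $A\,(\bigotimes_i e^{j\theta_i(0)Z})\,B=I$, i.e. $B=(\bigotimes_i e^{j\theta_i(0)Z})^{-1}A^{-1}$, so after the substitution $\theta_i(\alpha)\mapsto\theta_i(\alpha)-\theta_i(0)$ we may assume $B=A^{-1}$ and $\theta_i(0)=0$. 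Then $C_Q[U(\alpha)]=A\,C_Q[\bigotimes_i e^{j\theta_i(\alpha)Z}]\,A^{-1}$ with the $A$, $A^{-1}$ on the target register only, and since $A$ is fixed (independent of $j$) all the conjugating factors between consecutive controlled gates cancel telescopically, leaving $A\bigl(\prod_j C_{Q_j}[\bigotimes_i e^{j\theta_i(\alpha_j)Z}]\bigr)A^{-1}$ — one fixed $A$ at the start and one fixed $A^{-1}$ at the end, contributing only $O(1)$ to the depth.

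The second step handles the inner product $\prod_j C_{Q_j}[\bigotimes_i e^{j\theta_i(\alpha_j)Z}]$. Each factor is a tensor product over the $m$ target qubits of single-qubit controlled-$R_z$ gates, and controlled-$R_z$ gates are diagonal; hence the entire inner product is a diagonal unitary on the $(n+m)$-qubit register formed by the control qubits together with the $m$ target qubits. By Theorem~\ref{thm:duo-depth} (or Theorem~\ref{thm:ksparse-depth} / Theorem~\ref{thm:rz-k-rucg} in the $k$-sparse case), this $(n+m)$-qubit diagonal unitary admits an ancilla-free decomposition into $\{$CNOT$, R_z\}$ of the stated depth. Splicing $A$ and $A^{-1}$ back on yields an ancilla-free circuit for the original shared-target sequence, which is exactly the claim. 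For the sparsity-refined statement one notes that if every $C_{Q_j}$ is $k$-sparse in the control register, the induced diagonal unitary on the $(n+m)$-qubit register is $(k+m)$-sparse, so Theorem~\ref{thm:ksparse-depth} applies with $n\mapsto n+m$, $k\mapsto k+m$; since $m$ is a fixed constant this does not change the asymptotic order.

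The main obstacle is the bookkeeping in the first step: verifying that the conjugators genuinely telescope requires that $A$ and $B$ act only on the target register and are \emph{the same} for every $j$ — which is precisely what the simultaneous diagonal–tensor decomposition guarantees — and that the controlled gate commutes past the conjugator in the right way, i.e. $C_Q[AWA^{-1}] = (I_Q\otimes A)\,C_Q[W]\,(I_Q\otimes A^{-1})$, which holds only because $A$ has no support on the control qubits and the identity branch is $A I A^{-1}=I$ after the $\theta_i(0)=0$ normalization. Once that normalization is in place the rest is a direct appeal to the diagonal-unitary depth theorems, so the proof is short; the care lies entirely in reducing to the normalized form $B=A^{-1}$, $\theta_i(0)=0$ without loss of generality and in confirming that the constant overhead from $A$, $A^{-1}$, and the enlargement $m$ of the register does not perturb the asymptotics.
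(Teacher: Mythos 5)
Your first step is sound and in fact more careful than the paper's own terse proof: using $U(0)=I$ to normalize $B=A^{-1}$ (after absorbing the offsets $\theta_i(0)$) and then telescoping the fixed, target-only conjugators so that the whole sequence becomes $A\bigl(\prod_j C_{Q_j}[\bigotimes_i e^{j\theta_i(\alpha_j)Z}]\bigr)A^{-1}$ is exactly the reduction the paper leaves implicit.

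The gap is in your second step. Lemma~\ref{lemma:stcgv} is meant as the ancilla-free counterpart of Lemma~\ref{lemma:stcg}: it must reproduce the $O(\log n)$ depth for a layer of up to $n$ commuting controlled gates sharing the $m$ target qubits, because that is exactly how it is invoked in Theorems~\ref{thm:rucg-depth} and~\ref{thm:krucg-depth} (the CU depth per control-state layer is multiplied by $\log n$, not by $n$). The paper gets this by rewriting each controlled single-qubit diagonal as a controlled-$R_z$ plus an $R_z$ on the control and then using the ancilla-free shared-target controlled-$R_z$ construction of figure~\ref{fig:stcg}(b)~\cite{STCRz2025ZiW}, in which the common-target CNOTs form fan-out gates of depth $O(\log n)$. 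Your step instead hands the layer to Theorem~\ref{thm:duo-depth} or Theorem~\ref{thm:ksparse-depth}. But for a single layer the induced diagonal on the $(n+m)$-qubit register has each phase term supported on one activation qubit and one target qubit (it is $2$-sparse, not $(k+m)$-sparse as you state, since the target diagonal is a tensor product of single-qubit $Z$-diagonals), so Theorem~\ref{thm:ksparse-depth} only yields depth $O(n+m)=O(n)$ --- no better than executing the $n$ controlled gates sequentially, and a factor $n/\log n$ short of what the downstream depth theorems require; quoting Theorem~\ref{thm:duo-depth} wholesale gives the even weaker bound $O(2^{n+m}/(n+m))$. Hence your argument establishes only the literal ``no ancillae'' clause, which is trivial on its own, and misses the substantive content of the lemma, namely logarithmic-depth parallelization of the shared-target layer; the missing ingredient is the fan-out-based controlled-$R_z$ parallelization, not the general diagonal-unitary synthesis theorems.
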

	
	\begin{proof}
		A controlled single-qubit diagonal can be implemented by a controlled-Rz gate and a controlled-phase(equivalent to a Rz gate).
		As shown in \ref{fig:stcg}(b), Depth optimize for shared-target controlled single-qubit diagonal do not need ancilla.
	\end{proof}
	
	\subsection{Depth Bounds for rUCGs and k-rUCGs}
	\label{sec:depth-rucg}
	The shared-target analysis in the previous subsection allows us to
	translate the low-depth architecture of diagonal-unitary circuits into
	the setting of rUCGs.  We now
	formalize the resulting depth bounds for rUCGs and k-rUCGs.
	
	All results in this subsection follow the same structural template.
	The control traversal sequence $S$ contributes the CNOT depth, while
	the controlled-$U$ layer following each step contributes the CU depth.
	The only difference among the four settings is the degree of
	parallelizability of these latter gates.
	
	\begin{theorem}[Depth of rUCGs]
		\label{thm:rucg-depth}
		Denote an rUCG as $\mathrm{LU}_n(\boldsymbol{\chi})$. Then the decomposition achieves
		\begin{equation*}
			D_{\mathrm{CNOT}} = O(2^n / n),\quad
			D_{\mathrm{CU}} = O(2^n \log n / n),
		\end{equation*}
		using at most $O(mn)$ ancillae.
	\end{theorem}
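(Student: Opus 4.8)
The plan is to transport the low-depth diagonal-unitary architecture of Theorem~\ref{thm:duo-depth} to the rUCG setting, exploiting the observation of Section~\ref{sec:depth-stcg} that an rUCG and a diagonal unitary share the \emph{same} control-state traversal: in the circuit form of Eq.~\eqref{eq:rucg-circuit} only the objects inserted between consecutive traversal steps differ --- single-activation-qubit gates $CU(Y_\omega)$ for an rUCG versus single-qubit phases $\exp(jY_\omega)$ for a diagonal unitary. Concretely, I would first reuse the Generate-Stage / Gray-Path-Stage construction underlying Theorem~\ref{thm:duo-depth}: partition the frequency-domain control states into $D_z(n)=O(2^n/n)$ linearly independent control-state layers, realize the corresponding traversal sequence $S$ with CNOT depth $O(2^n/n)$, and arrange that after each traversal step the relevant values $\boldsymbol c\cdot\boldsymbol\omega$ sit on distinct activation qubits. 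Following the accounting convention stated just before the theorem, this traversal is precisely what $D_{\mathrm{CNOT}}$ measures, so that contribution is already $O(2^n/n)$ with no further work.

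Next I would address the one genuine difference from the diagonal case: within a control-state layer the DUO circuit applies up to $n$ mutually parallel $R_z$ rotations (depth $1$), whereas an rUCG must apply up to $n$ controlled-$U(\cdot)$ gates that all act on the \emph{same} $m$-qubit target register and hence cannot be run in parallel as written. Here I invoke Lemma~\ref{lemma:stcg}: since $\mathcal U$ is an Abelian $2$-divisible group, all $U(\cdot)$ occurring in a layer can be simultaneously diagonalized, so a layer of at most $n$ shared-target controlled-$U$ gates compiles into $O(\log n)$ CNOTs plus $O(1)$ U-closed gates using $O(mn)$ ancillae, which are uncomputed (also in $O(\log n)$ depth) and recycled for the next layer. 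Substituting this per-layer cost --- which $D_{\mathrm{CU}}$ is defined to collect, including the $O(\log n)$ auxiliary CNOTs --- into the layered template gives $D_{\mathrm{CU}} = D_z(n)\cdot O(\log n) = O(2^n\log n/n)$, while $D_{\mathrm{CNOT}} = O(2^n/n)$ is unchanged and the ancilla count is $O(mn)$ because the same auxiliary register is reused across all $D_z(n)$ layers.

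The main obstacle I anticipate is not the arithmetic but verifying that the shared-target gadget interfaces cleanly with the traversal: one must check that the $O(mn)$ ancillae and the fan-out/fan-in CNOTs of Lemma~\ref{lemma:stcg} live on qubits disjoint from both the control register being traversed and the activation qubits carrying the $\boldsymbol c\cdot\boldsymbol\omega$ values, so that inserting a gadget between two successive $S_i$ neither corrupts the traversal state nor serializes against it, and that the gadget together with its uncomputation fits in $O(\log n)$ depth per layer --- which relies on each control-state layer carrying at most $n$ controlled gates, exactly the linear-independence bound from the partition step. A secondary point is that for multi-qubit targets ($m>1$) the ``$\log n$'' in Lemma~\ref{lemma:stcg} must not hide an $m$-dependent depth factor; since that lemma already asserts $O(\log n)$ CNOT depth with the factor $m$ absorbed only into the ancilla count, I would cite it directly. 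Finally, I would remark that when $\mathcal U$ admits a simultaneous diagonal–tensor decomposition one may invoke Lemma~\ref{lemma:stcgv} instead and drop the ancillae, recovering the ancilla-free variants used in Theorems~\ref{thm:rz-rucg} and~\ref{thm:rz-k-rucg}.
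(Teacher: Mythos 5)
Your proposal follows essentially the same route as the paper's own (much terser) proof: the CNOT depth is inherited from the diagonal-unitary construction of Theorem~\ref{thm:duo-depth}, and the per-layer shared-target controlled-$U$ gates are compiled via Lemma~\ref{lemma:stcg} (with Lemma~\ref{lemma:stcgv} as the ancilla-free variant), yielding the extra $O(\log n)$ factor in $D_{\mathrm{CU}}$ and the $O(mn)$ ancilla bound. Your write-up is in fact more careful than the paper's, particularly on ancilla reuse and on keeping the gadget qubits disjoint from the traversal and activation registers.
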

	
	\begin{proof}
		The depth of CNOT is the same as that of Theorem ~\ref{thm:duo-depth}. The depth of CU needs to be multiplied by a coefficient of $log n$ due to Lemma ~\ref{lemma:stcg} and Lemma ~\ref{lemma:stcgv}.
	\end{proof}
	
	\begin{theorem}[Depth of k-rUCGs]
		\label{thm:krucg-depth}
		Denote a k-rUCG as $\mathrm{LU}_{n, k}(\boldsymbol{\chi})$.  Then the decomposition achieves
		\begin{equation*}
			D_{\mathrm{CNOT}}
			= O\!\left(\frac{1}{n}\sum_{i=0}^{k}\binom{n}{i}\right), 
			D_{\mathrm{CU}}
			= O\!\left(\frac{\log n}{n}\sum_{i=0}^{k}\binom{n}{i}\right),
		\end{equation*}
		using at most $O(mn)$ ancillae.
	\end{theorem}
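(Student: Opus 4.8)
The plan is to mirror the proof of Theorem~\ref{thm:rucg-depth}, replacing the dense diagonal-unitary backbone with its $k$-sparse counterpart from Theorem~\ref{thm:ksparse-depth}. First I would invoke Lemma~\ref{lem:k-sparse} to argue that the frequency-domain control states of a $k$-rUCG all have Hamming weight at most $k$; hence the control-state traversal sequence $S=\{S_0,\dots,S_{D-1}\}$ that implements $\mathrm{LU}_{n,k}(\boldsymbol\chi)$ can be taken to be exactly the CNOT backbone built in the proof of Theorem~\ref{thm:ksparse-depth}, i.e.\ the cascade of Generate Stages and $k$-sparse Gray Path Stages of figures~\ref{fig:k-duo}--\ref{fig:G}, since that circuit already visits precisely the constant-weight-$\le k$ control states and nothing else. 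This immediately yields a partition of the nonzero frequency states into $O\!\big(\frac{1}{n}\sum_{i=1}^{k}\binom{n}{i}\big)$ control-state layers together with a CNOT depth of the same order for $S$ alone.

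Second, I would account for the controlled-$U$ layers. Where the $k$-sparse diagonal construction inserts, after each traversal step, one layer of mutually parallel single-qubit phases $\exp(jY_\omega)$, the $k$-rUCG instead inserts a layer of controlled-$U(Y_\omega)$ gates all sharing the $m$ target qubits. By Lemma~\ref{lemma:stcg} (or Lemma~\ref{lemma:stcgv} when $\mathcal U$ admits a simultaneous diagonal--tensor decomposition) each such layer is realizable in depth $O(\log n)$ CNOTs plus $O(1)$ U-closed gates, using $O(mn)$ ancillae. Thus each of the $D_z(n,k)=O\!\big(\frac{1}{n}\sum_{i=1}^{k}\binom{n}{i}\big)$ control-state layers contributes $O(\log n)$ to the CU depth and an additive $O(\log n)$ to the CNOT depth. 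Summing over layers gives $D_{\mathrm{CU}}=O\!\big(\frac{\log n}{n}\sum_{i=0}^{k}\binom{n}{i}\big)$, while the extra CNOT contribution is of the same order $O\!\big(\frac{\log n}{n}\sum_{i=0}^{k}\binom{n}{i}\big)$, which is absorbed into the $O\!\big(\frac{1}{n}\sum_{i=0}^{k}\binom{n}{i}\big)$ already coming from the Generate Stages (whose per-layer CNOT depth $O(n/\log n)$ dominates $O(\log n)$), so $D_{\mathrm{CNOT}}=O\!\big(\frac{1}{n}\sum_{i=0}^{k}\binom{n}{i}\big)$ is preserved. Since the fan-out ancillae of Lemma~\ref{lemma:stcg} can be reset and reused across layers, $O(mn)$ total ancillae suffice. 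The mismatch between $\sum_{i=0}^{k}$ and $\sum_{i=1}^{k}$ (the $i=0$ term being the global factor $U(Y_0)$) is a harmless constant, stated as in Theorem~\ref{thm:krucg-gatecount}.

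The main obstacle I anticipate is the scheduling bookkeeping: in the diagonal case the rotation layers commute freely past the CNOT backbone (as exploited in Lemma~\ref{lem:comb-gray-depth}), whereas a controlled-$U$ with a nontrivial $m$-qubit target occupies the target register for $O(\log n)$ layers, so one must verify that the $k$-sparse Gray Path Stage's concurrent traversal of distinct suffix qubits still interleaves correctly with these thicker CU blocks—equivalently, that the CU block attached to control-state layer $i$ can be overlapped with the CNOT backbone of layer $i+1$ rather than serialized. This is where I would spend the most care; the resolution is that the shared-target construction of Lemma~\ref{lemma:stcg} reads the activation qubits only to control operations on \emph{disjoint} ancillae, so the CU block of layer $i$ and the traversal step $S_{i+1}$ act on disjoint wires and can be pipelined, keeping the depth accounting additive as claimed.
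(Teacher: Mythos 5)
Your overall route is the same as the paper's: use Lemma~\ref{lem:k-sparse} to restrict the frequency-domain control states to weight $\le k$, reuse the CNOT backbone of the $k$-sparse diagonal construction (Theorem~\ref{thm:ksparse-depth}) as the traversal sequence $S$, and replace each layer of parallel $R_z$ gates by a shared-target controlled-$U$ layer handled via Lemma~\ref{lemma:stcg} (or Lemma~\ref{lemma:stcgv}); this is exactly the paper's two-line argument (CNOT depth as in Theorem~\ref{thm:ksparse-depth}, CU depth multiplied by $\log n$).

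The one step that does not hold as you state it is the ``absorption'' of the fan-out CNOTs into $D_{\mathrm{CNOT}}$. Each control-state layer carries a shared-target gadget of CNOT depth $O(\log n)$, and there are $\Theta\!\bigl(\frac{1}{n}\sum_{i=1}^{k}\binom{n}{i}\bigr)$ such layers, so these CNOTs total $O\!\bigl(\frac{\log n}{n}\sum_{i=1}^{k}\binom{n}{i}\bigr)$ --- a factor $\log n$ \emph{larger} than the claimed $D_{\mathrm{CNOT}}$, hence not absorbable into it. Your justification (that the per-layer Generate-Stage CNOT depth $O(n/\log n)$ dominates $O(\log n)$) conflates Generate Stages with control-state layers: there are only $O\!\bigl(\sum_{k_t}\ell_{k_t}\bigr)$ Generate Stages, while the vast majority of control-state layers occur inside the $k$-sparse Gray Path Stages, whose backbone contributes only $O(1)$ CNOT depth per layer. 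The consistent accounting --- and the one the paper implicitly uses --- is to charge the entire shared-target gadget, fan-outs included, to the controlled-$U$ side; that is precisely why $D_{\mathrm{CU}}$ carries the extra $\log n$ factor while $D_{\mathrm{CNOT}}$ refers to the traversal backbone alone. With that bookkeeping corrected (or with a genuine argument that the fan-out CNOTs of layer $i$ pipeline behind the backbone step $S_{i+1}$ so the $\log n$ is hidden --- something you gesture at in your last paragraph but do not establish), your proof coincides with the paper's.
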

	
	\begin{proof}
		The depth of CNOT is the same as that of Theorem ~\ref{thm:ksparse-depth}. The depth of CU needs to be multiplied by a coefficient of $\log n$ due to Lemma ~\ref{lemma:stcg} and Lemma ~\ref{lemma:stcgv}.
	\end{proof}
	
	\subsection{Section Discussion}
	\label{sec:depth-summary}
	This section addresses circuit depth of the rUCG decomposition framework. We develop a new depth-optimized decomposition for rUCGs in Section~\ref{sec:depth-rucg}, by systematically adapting the low-depth architecture of diagonal unitary operators in Section~\ref{sec:depth-duo}. Extending to the k-sparse version, we optimize the depth of k-sparse diagonal unitaries in Section~\ref{sec:depth-kduo}, which can be adapted to solve k-rUCG in Section~\ref{sec:depth-rucg}.
	
	Section~\ref{sec:depth-ucrz} and \ref{sec:depth-stcg} discuss different objective operators, including Rz, arbitrary unitaries, and unitaries with simultaneous diagonal–tensor decomposition. In practical applications, only the deep optimization of share target controlled gate cannot be directly solved by the rUCG model; targeted research is needed based on the specific gate.
	
	\section{\label{sec:level5}Examples}
	As a model, rUCG (including k-rUCG) provides optimization for a wide variety of quantum circuit types. 
	In this section, we an example to help understand how to apply our decomposition framework to optimize quantum circuits.
	
	\begin{figure}[!t]
		\centering
		\def\underlayer#1{\gategroup[4,steps=1,style={draw=none}, label style={label position=below,yshift=-0.2cm}]{#1}}
		\def\underlayertwo#1{\gategroup[4,steps=2,style={dashed, inner sep=0pt}, label style={label position=below,yshift=-0.35cm}]{#1}}
		\begin{minipage}[b]{\linewidth}
			\centering
			\includegraphics[width=\linewidth]{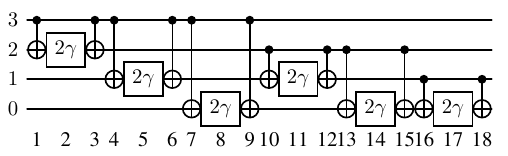}
			\caption*{(a)}
			\label{fig:qaoa1}
		\end{minipage}
		\begin{minipage}[b]{\linewidth}
			\centering
			\includegraphics[width=\linewidth]{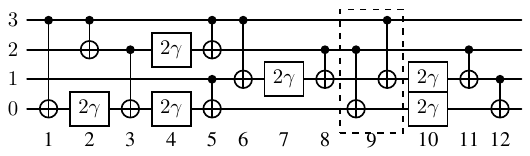}
			\caption*{(b)}
			\label{fig:qaoa2}
		\end{minipage}
		\begin{minipage}[b]{\linewidth}
			\centering
			\includegraphics[width=\linewidth]{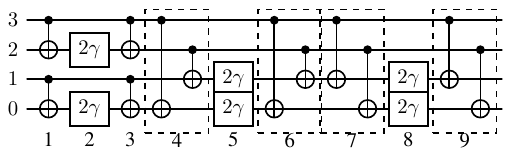}
			\caption*{(c)}
			\label{fig:qaoa3}
		\end{minipage}
		\caption{QAOA sub-circuit on 4 qubits (a) depth 18 by \cite{QAOACircuit2021LykovD}. (b) depth 12 by \cite{DUO2024ZhangS}. (c) depth 9 by our method. $R_z(-2\gamma)$ is marked as $2\gamma$ for short.}
		\label{fig:qaoa}
	\end{figure}
	The quantum approximate optimization algorithm (QAOA) \cite{QAOA2014FarhiE,QAOA2022BhartiK} has emerged as a promising approach for solving combinatorial optimization problems in the NISQ era, with its effectiveness partially hinging on the circuit depth required for implementing its ansatz. In particular, QAOA instances defined on complete graphs pose a significant challenge, as the corresponding diagonal operators typically require a quadratic number of two-qubit entangling gates. While existing synthesis techniques such as the Walsh-based or Gray-code-based constructions yield asymptotically optimal gate counts for diagonal unitary operators, the resulting circuit depth remains a bottleneck for near-term devices with limited coherence time.
	
	For a QAOA circuit defined on a complete graph\cite{QAOACircuit2021LykovD}, we investigate the sub-circuit between Hadamard and $R_x$ gates, which refers to a $\Lambda_{n,2}$ defined in Eq.~\eqref{eq:k-duo-restate}:
	
	\begin{equation}\label{eq:qaoa}
		\Lambda_{n,2}(\bm{\chi}) = \sum_{c=0}^{N-1} e^{j\chi_c}\ket c\bra c,
	\end{equation}
	where $\chi_c=\gamma\sum_{0\leq i_1< i_2< n}(-1)^{c_{i_1}\oplus c_{i_2}}$, which can be transformed into
	\begin{equation}\label{eq:qaoa2}
		\chi_c = \gamma\sum_{\substack{\omega\in[N]\\\mathrm{wt}(\omega)=2}} (-1)^{\vdot{c}{\omega}}.
	\end{equation}
	
	Then we apply circuit optimization for k-rUCG in Section~\ref{sec:level4.2} by the following steps.
	
	(1) Confirm that the k-sparse diagonal unitary operator form a k-sparse diagonal unitary operator. 
	
	(2) Term $\bm{\chi}$ as the target vector, and apply transform in Section~\ref{sec:size-trans} on it. For Eq.~\eqref{eq:qaoa2}, we get $X_{\omega}=N\gamma$, then derive $Y_{\omega}=-2\gamma$ .
	
	(3) Plan 2-weight control state traversal. $\mathcal{G}_2$ becomes a $\Lambda_{r_t,2}$ on suffix qubits, and $\mathcal{G}_1$ does not require a control state traversal. By derivation, the depth of CNOT is no more than $2n$, with n layers of control states reserved.
	
	(4) Insert $R_z(-2\gamma)$ to each correspond control state.
	
	An example is shown in figure \ref{fig:qaoa}(c) when $n=4$. We obtain a circuit with a shorter depth of 9, compared to the original one of depth 18 in figure \ref{fig:qaoa}(a), and the optimized one of depth 12 in figure \ref{fig:qaoa}(b). figure \ref{fig:qaoa_line} reveals that
	our strategy reduces the circuit depth by an order of magnitude.
	
	\begin{figure}[!t]
		\centering
		\includegraphics[width=\linewidth]{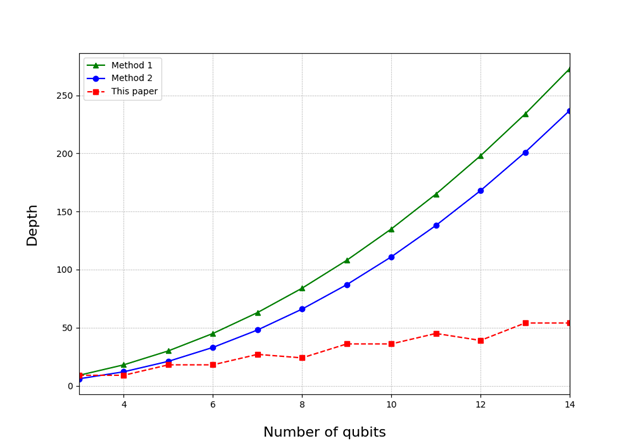}
		\caption{Depth comparison for three method to implement QAOA sub-circuit.}
		\label{fig:qaoa_line}
	\end{figure}

	\begin{table*}[htbp]
		\caption{
			Comparison of gate count, circuit depth, and ancilla requirements for various UCG decomposition methods.
		}
		\label{tab:ucg}
		\centering
		\begin{tabular}{lcccc}
			\toprule
			\textbf{Method} & \textbf{Gate Type} & \textbf{Gate Count} & \textbf{Circuit Depth} & \textbf{Ancilla} \\
			\midrule
			UCG (brute-force) with \cite{nCtrlDepth2024NieJ} 
			& Arbitrary $U$ 
			& $O(n2^n)$ basic,\; $O(2^n)$ CU 
			& $O(2^n \log n)$ basic,\; $O(2^n)$ CU 
			& 0 or 1 \\
			
			Uniformly controlled rotation \cite{UCR2004MöttönenM} 
			& $R_z$ or $R_y$ 
			& $O(2^n)$ basic
			& $O(2^n)$ basic 
			& 0 \\
			
			Diagonal unitary \cite{DUO2023SunX} 
			& Phase only 
			& $O(2^n)$ basic
			& $O(2^n/n)$ basic 
			& 0 \\
			
			rUCG (depth unoptimized)
			& Abelian, 2-divisible
			& $O(2^n)$ basic,\; $O(2^n)$ CU
			& $O(2^n)$ basic,\; $O(2^n)$ CU
			& 0 \\
			
			rUCG (depth optimized)
			& Abelian, 2-divisible
			& $O(2^n)$ basic,\; $O(2^n)$ CU
			& $O(2^n/n)$ basic,\; $O(2^n \log n / n)$ U-close
			& $[0,mn]$ \\
			\bottomrule
		\end{tabular}
	\end{table*}
	
	\begin{table*}[htbp]
		\caption{
			Comparison of k-rUCG decompositions.
			Here $\Sigma_k f(i)$ denotes $\sum_{i=0}^k f(i)\binom{n}{i}$.
		}
		\label{tab:k-ucg}
		\centering
		\begin{tabular}{lcccc}
			\toprule
			\textbf{Method} & \textbf{Gate Type} & \textbf{Gate Count} & \textbf{Circuit Depth} & \textbf{Ancilla} \\
			\midrule
			k-UCG (brute-force) \cite{nCtrlDepth2024NieJ} 
			& Arbitrary $U$
			& $O\!\left(\Sigma_k\, i\right)$ basic,\;
			$O\!\left(\Sigma_k\right)$ CU
			& $O\!\left(\Sigma_k \log i\right)$ basic,\;
			$O\!\left(\Sigma_k i\right)$ CU
			& 0 \\
			
			k-Diagonal unitary
			& Phase only
			& $O\!\left(\Sigma_k\right)$ basic
			& $O\!\left(\Sigma_k / n\right)$ basic
			& 0 \\
			
			k-rUCG
			& Abelian, 2-divisible
			& $O\!\left(\Sigma_k\right)$ basic,\;
			$O\!\left(\Sigma_k\right)$ CU
			& $O\!\left(\Sigma_k / n\right)$ basic,\;
			$O\!\left(\Sigma_k \log n / n\right)$ CU
			& $[0,mn]$ \\
			\bottomrule
		\end{tabular}
	\end{table*}
	%
	%
	%
	%
	%
	
	\section{{\label{sec:level7}}Conclusion}	
	This work proposes a unified and optimized model rUCG for the decomposition of UCGs. Decomposition framework based on rUCG yields improvements in both gate count and circuit depth.
	
	In terms of gate count, rUCGs can be implemented using both $O(2^n)$ CNOT and U-primitive gates. Further, k-rUCGs can be implemented using both $O\left(\sum_{i=0}^{k}\binom{n}{i}\right)$ CNOT and U-primitive gates. The best optimization was obtained without further decomposing the controlled operator.
	
	More significantly, this work presents a depth-optimized decomposition strategy. This reduces the CNOT depth from $O(2^n)$ to $O(2^n/n)$ and gate depth from $O(2^n)$ U-primitive gates to $O((2^n \log n)/n)$ U-close gates. Similarly, for the k-rUCG, gate count is reduced by a factor of n and depth are is reduced by $O(n/\log n)$.
	
	Our comparative analysis, summarized in Tables~\ref{tab:ucg} and~\ref{tab:k-ucg}, illustrates different algorithms applied to UCG or UCG-like operators. As demonstrated by the results, method proposed in this work consistently achieves the lowest gate count and circuit depth among all considered approaches.
	
	In addition to the theoretical framework, we have demonstrated the practical effectiveness of our techniques
	for QAOA circuits on complete graphs, we provide a depth reduction that scales quadratically compared to existing constructions. 
	
	Future Work. Two primary directions emerge from this study. First, a systematic investigation into how the specific algebraic properties of different gate types within the structured set $\mathcal{U}$ influence the decomposition complexity could refine our general framework. Second, exploring depth-optimization strategies through the strategic introduction of additional ancillary qubits presents another promising avenue for achieving more efficient circuit implementations.
	
	%
	
	%
	%
	%
	
	\section*{Acknowledgments}
	This work is supported by the Quantum Science and Technology-National Science and Technology Major Project (Grant No. 
	2021ZD0302901), National Natural Science Foundation of China (Grant No. 62071240) and Natural Science Foundation of Jiangsu 
	Province, China (Grant No. BK20220804).
	%
	%
	
	%
	\bibliographystyle{IEEEtran}
	\bibliography{a1}

@article{UC1G2024AllcockJ,
   author = {J. Allcock and J. Bao and J. F. Doriguello and A. Luongo and M. Santha},
   title = {Constant-depth circuits for boolean functions and quantum memory devices using multi-qubit gates},
   journal = {Quantum},
   note = {{Art. no.} 1530},
  year = {2024},
   volume = {8},
   
}

@article{QSP2024AraujoIF,
   author = {I. F. Araujo and C. Blank and I. C. S. Araújo and A. J. Da Silva},
   title = {Low-rank quantum state preparation},
   journal = {IEEE Transactions on Computer-Aided Design of Integrated Circuits and Systems},
   year = {2024},
   volume = {43},
   number = {1},
   pages = {161-170},
}

@article{UCmG2025Bee-LindgrenM,
   author = {M. Bee-Lindgren and et al.},
   title = {Controlled gate networks: Theory and application to eigenvalue estimation},
   journal = {The European Physical Journal A},
   note = {{Art. no.} 263},
  year = {2025},
   volume = {61},
   number = {11},
   
}

@article{UC1G2005BergholmV,
   author = {V. Bergholm and J. J. Vartiainen and M. Möttönen and M. M. Salomaa},
   title = {Quantum circuits with uniformly controlled one-qubit gates},
   journal = {Physical Review A},
   note = {{Art. no.} 052330},
  year = {2005},
   volume = {71},
   number = {5},
   
}

@article{QAOA2022BhartiK,
   author = {K. Bharti and et al.},
   title = {Noisy intermediate-scale quantum algorithms},
   journal = {Reviews of Modern Physics},
   note = {{Art. no.} 015004},
  year = {2022},
   volume = {94},
   number = {1},
   
}

@article{nCtrlDepth2022daSilvaAJ,
   author = {A. J. da Silva and D. K. Park},
   title = {Linear-depth quantum circuits for multiqubit controlled gates},
   journal = {Physical Review A},
   note = {{Art. no.} 042602},
  year = {2022},
   volume = {106},
   number = {4},
   
}

@article{QAOA2014FarhiE,
   author = {E. Farhi and J. Goldstone and S. Gutmann},
   title = {A quantum approximate optimization algorithm},
   journal = {arXiv preprint arXiv:1411.4028},
   year = {2014},

}

@inproceedings{QSVT2019GilyénA,
   author = {A. Gilyén and Y. Su and G. H. Low and N. Wiebe},
   title = {Quantum singular value transformation and beyond: Exponential improvements for quantum matrix arithmetics},
   booktitle = {Proceedings of the 51st annual ACM SIGACT symposium on theory of computing},
   year = {2019},
   pages = {193-204},
}

@article{QRAM2008GiovannettiV,
   author = {V. Giovannetti and S. Lloyd and L. Maccone},
   title = {Architectures for a quantum random access memory},
   journal = {Physical Review A},
   note = {{Art. no.} 052310},
  year = {2008},
   volume = {78},
   number = {5},
   
}

@article{kUCR2025GonzalesA,
   author = {A. Gonzales and et al.},
   title = {Efficient sparse state preparation via quantum walks},
   journal = {npj Quantum Information},
   note = {{Art. no.} 143},
  year = {2025},
   volume = {11},
   number = {1},
   
}

@article{STCU2001GreenF,
   author = {F. Green and S. Homer and C. Moore and C. Pollett},
   title = {Counting, fanout, and the complexity of quantum acc},
   journal = {arXiv preprint quant-ph/0106017},
   year = {2001},

}

@article{RangeGray2018GregorP,
   author = {P. Gregor and T. Mütze},
   title = {Trimming and gluing gray codes},
   journal = {Theoretical Computer Science},
   year = {2018},
   volume = {714},
   pages = {74-95},
}

@article{QSP2024IaconisJ,
   author = {J. Iaconis and S. Johri and E. Y. Zhu},
   title = {Quantum state preparation of normal distributions using matrix product states},
   journal = {npj Quantum Information},
   note = {{Art. no.} 15},
  year = {2024},
   volume = {10},
   number = {1},
   
}

@inproceedings{DepthGenerate2020JiangJ,
   author = {J. Jiang and et al.},
   title = {Optimal space-depth trade-off of cnot circuits in quantum logic synthesis},
   booktitle = {Proceedings of the Fourteenth Annual ACM-SIAM Symposium on Discrete Algorithms},
   year = {2020},
   pages = {213-229},
}

@article{kUCR2024LiL,
   author = {L. Li and J. Luo},
   title = {Nearly optimal circuit size for sparse quantum state preparation},
   journal = {arXiv preprint arXiv:2406.16142},
   year = {2024},

}

@article{QSVT2025LiZ,
   author = {Z. Li and X.-M. Zhang and C. Yang and G. Zhang},
   title = {Binary tree block encoding of classical matrix},
   journal = {IEEE Transactions on Quantum Engineering},
   year = {2025},
   pages = {1-18},
}

@article{QSVT2025LiuD,
   author = {D. Liu and W. Du and L. Lin and J. P. Vary and C. Yang},
   title = {An efficient quantum circuit for block encoding a pairing hamiltonian},
   journal = {Journal of Computational Science},
   note = {{Art. no.} 102480},
  year = {2025},
   volume = {85},
   
}

@article{UCH2019LowGH,
   author = {G. H. Low and I. L. Chuang},
   title = {Hamiltonian simulation by qubitization},
   journal = {Quantum},
   note = {{Art. no.} 163},
  year = {2019},
   volume = {3},
   
}

@inproceedings{QAOACircuit2021LykovD,
   author = {D. Lykov and Y. Alexeev},
   title = {Importance of diagonal gates in tensor network simulations},
   booktitle = {2021 IEEE Computer Society Annual Symposium on VLSI (ISVLSI)},
   year = {2021},
   pages = {447-452},
}

@article{QSP2023MelnikovAA,
   author = {A. A. Melnikov and A. A. Termanova and S. V. Dolgov and F. Neukart and M. R. Perelshtein},
   title = {Quantum state preparation using tensor networks},
   journal = {Quantum Science and Technology},
   note = {{Art. no.} 035027},
  year = {2023},
   volume = {8},
   number = {3},
   
}

@article{UCR2004MöttönenM,
   author = {M. Möttönen and J. J. Vartiainen and V. Bergholm and M. M. Salomaa},
   title = {Quantum circuits for general multiqubit gates},
   journal = {Physical Review Letters},
   note = {{Art. no.} 130502},
  year = {2004},
   volume = {93},
   number = {13},
   
}

@article{RangeGray2022MützeT,
   author = {T. Mütze},
   title = {Combinatorial gray codes-an updated survey},
   journal = {arXiv preprint arXiv:2202.01280},
   year = {2022},

}

@article{DUO2014NakataY,
   author = {Y. Nakata and M. Koashi and M. Murao},
   title = {Generating a state t-design by diagonal quantum circuits},
   journal = {New Journal of Physics},
   note = {{Art. no.} 053043},
  year = {2014},
   volume = {16},
   number = {5},
   
}

@article{nCtrlDepth2024NieJ,
   author = {J. Nie and W. Zi and X. Sun},
   title = {Quantum circuit for multi-qubit toffoli gate with optimal resource},
   journal = {arXiv preprint arXiv:2402.05053},
   year = {2024},

}

@article{QRAM2019ParkDK,
   author = {D. K. Park and F. Petruccione and J.-K. K. Rhee},
   title = {Circuit-based quantum random access memory for classical data},
   journal = {Scientific Reports},
   note = {{Art. no.} 3949},
  year = {2019},
   volume = {9},
   number = {1},
   
}

@inproceedings{QSVT2025PuramV,
   author = {V. Puram and K. M. George and J. P. Thomas},
   title = {Block encoding of non-unitary matrices into quantum framework},
   booktitle = {2025 IEEE International Conference on Quantum Control, Computing and Learning (qCCL)},
   year = {2025},
   pages = {100-106},
}

@article{QSP2019SandersYR,
   author = {Y. R. Sanders and G. H. Low and A. Scherer and D. W. Berry},
   title = {Black-box quantum state preparation without arithmetic},
   journal = {Physical Review Letters},
   note = {{Art. no.} 020502},
  year = {2019},
   volume = {122},
   number = {2},
   
}

@article{Search2020SatohT,
   author = {T. Satoh and Y. Ohkura and R. Van Meter},
   title = {Subdivided phase oracle for nisq search algorithms},
   journal = {IEEE Transactions on Quantum Engineering},
   year = {2020},
   volume = {1},
   pages = {1-15},
}

@article{CombGray1997SavageC,
   author = {C. Savage},
   title = {A survey of combinatorial gray codes},
   journal = {SIAM review},
   year = {1997},
   volume = {39},
   number = {4},
   pages = {605-629},
}

@article{Walsh1969ShanksJL,
   author = {J. L. Shanks},
   title = {Computation of the fast walsh-fourier transform},
   journal = {IEEE Transactions on Computers},
   year = {1969},
   volume = {100},
   number = {5},
   pages = {457-459},
}

@article{DUO2023SunX,
   author = {X. Sun and G. Tian and S. Yang and P. Yuan and S. Zhang},
   title = {Asymptotically optimal circuit depth for quantum state preparation and general unitary synthesis},
   journal = {IEEE Transactions on Computer-Aided Design of Integrated Circuits and Systems},
   year = {2023},
   volume = {42},
   number = {10},
   pages = {3301-3314},
}

@article{Fanout2021TakahashiY,
   author = {Y. Takahashi and S. Tani},
   title = {Power of uninitialized qubits in shallow quantum circuits},
   journal = {Theoretical Computer Science},
   year = {2021},
   volume = {851},
   pages = {129-153},
}

@article{UCADD2004MeterRV,
   author = {R. Van Meter and K. M. Itoh},
   title = {Fast quantum modular exponentiation},
   journal = {arXiv preprint quant-ph/0408006},
   year = {2004},

}

@article{QRAM2021VerasTMLd,
   author = {T. M. L. d. Veras and I. C. S. d. Araujo and D. K. Park and A. J. d. Silva},
   title = {Circuit-based quantum random access memory for classical data with continuous amplitudes},
   journal = {IEEE Transactions on Computers},
   year = {2021},
   volume = {70},
   number = {12},
   pages = {2125-2135},
}

@article{DUO2024ZhangS,
   author = {S. Zhang and K. Huang and L. Li},
   title = {Depth-optimized quantum circuit synthesis for diagonal unitary operators with asymptotically optimal gate count},
   journal = {Physical Review A},
   note = {{Art. no.} 042601},
  year = {2024},
   volume = {109},
   number = {4},
   
}

@article{QSP2022ZhangXM,
   author = {X.-M. Zhang and T. Li and X. Yuan},
   title = {Quantum state preparation with optimal circuit depth: Implementations and applications},
   journal = {Physical Review Letters},
   note = {{Art. no.} 230504},
  year = {2022},
   volume = {129},
   number = {23},
   
}

@article{STCRz2025ZiW,
   author = {W. Zi and J. Nie and X. Sun},
   title = {Shallow quantum circuit implementation of symmetric functions with limited ancillary qubits},
   journal = {IEEE Transactions on Computer-Aided Design of Integrated Circuits and Systems},
   year = {2025},
   volume = {44},
   number = {8},
   pages = {3060-3072},
}
	\appendices
	
	\section{Proof of Lemma~\ref{lem:k-sparse}: Control states of k-rUCG are all k-sparse}
	\label{app:k-sparse}
	
	This section proves that control states derived for the $\mathrm{LU}_{n,k}$ via transformation in Section~\ref{sec:size-trans} are all k-sparse. That is, for each $\wt(s^{(i)}_{\mathrm{a}}) \leq k$.
	
	Let's start with a single k-control gate, with $Q$ as control sets ($|Q| \le k$) , and $U(\mu)$ as target gate. Treat the gate as a k-rUCG, with target vector $\boldsymbol{\chi}$ and frequency vector $\boldsymbol{Y}$. According to Eq.~\eqref{eq:target-vector-kg} and proof of Theorem~\eqref{thm:krucg-subclass},
	\begin{equation*}
		\chi_i = 
		\begin{cases}
			\mu, & \text{if $Q \subset \supp(i)$},\\[2mm]
			0, & \text{otherwise},
		\end{cases}		
	\end{equation*}
	given $\supp(i)=\{t~|~i_t=1\}$.
	With at most $k$ bits fixed and at least $n-k$ bits free, there are exactly $2^{\,n-k}$ such $i$.
	
	A non-zero control state $s$ ($s\neq0$) is irrelevant if its associated controlled unitary is the identity $I$; equivalently, the corresponding entry of the frequency vector $Y_s$ is zero. Hence, to establish the claim, it suffices to show that any $s$ satisfies $\wt(s) > k$ necessarily yields a zero component in $Y_s$.
	
	According to Eq.~\eqref{eq:walsh-hadamard}\eqref{eq:frequecy-vector},
	\begin{equation*}
		Y_s=-\frac{2}{N}\sum_{i=0}^{N-1}(-1)^{\boldsymbol{s}\cdot\mathbf{i}}\,\chi_i,
	\end{equation*}
	where $\boldsymbol{s}\cdot\mathbf{i}=s_0i_0\oplus s_1i_1\oplus\cdots\oplus s_{n-1}i_{n-1}$.

	\subsection*{Step 1: Parameterizing all $i$ with $\chi_i=1$}
	
	If $\chi_i=1$, then $i_t=1$ for all $t\in Q$.  
	For the complementary index set $\overline Q$, the bits of $i$ are free.
	Thus every such $i$ can be uniquely written as
	\begin{equation*}
		i_t =
		\begin{cases}
			1, & t\in Q,\\
			y_j, & t\notin Q,
		\end{cases}
	\end{equation*}
	where $y\in\{0,1\}^{\,n-k}$ ranges freely.  
	Hence,
	
	\begin{equation*}
		Y_s=-\frac{2}{N}\sum_{\sum_{y\in\{0,1\}^{n{-}k}}}(-1)^{\boldsymbol{s}\cdot\mathbf{i}(y)}\,\chi_i,
	\end{equation*}
	where $\mathbf{i}(y)$ denotes $i$ with fixed bits 1s and free bits $y$.
	
	\subsection*{Step 2: Separating contributions from $Q$ and $\overline Q$}
	
	Decompose the mod-$2$ inner product:
	\begin{equation*}
		\boldsymbol{s}\cdot\mathbf{i}
		= \sum_{t\in Q} s_t\cdot 1 \;+\; \sum_{t\notin Q} s_t y_t
		= \Big(\sum_{t\in Q}s_t\Big) \;+\; (s_{\overline Q}\cdot y).
	\end{equation*}
	Therefore,
	\begin{equation*}
		(-1)^{\boldsymbol{s}\cdot\mathbf{i}}
		= (-1)^{\sum_{j\in Q}s_j}\,
		(-1)^{s_{\overline Q}\cdot y},
	\end{equation*}
	and so
	\begin{equation*}
		Y_s
		= -\frac2N(-1)^{\sum_{j\in Q}s_j} \cdot
		\sum_{y\in\{0,1\}^{n{-}k}}
		(-1)^{s_{\overline Q}\cdot y}.
	\end{equation*}
	
	\subsection*{Step 3: Evaluating the character sum}
	
	The inner sum factorizes coordinate-wise:
	\begin{equation*}
		\sum_{y\in\{0,1\}^{n-k}}
		(-1)^{s_{\overline Q}\cdot y}
		= \prod_{j\notin Q} \big( 1 + (-1)^{s_j} \big).
	\end{equation*}
	
	Each factor satisfies
	\begin{equation*}
		1 + (-1)^{s_j}
		=
		\begin{cases}
			2, & s_j=0,\\
			0, & s_j=1.
		\end{cases}
	\end{equation*}
	
	Thus, if there exists \(j\notin Q\) with \(s_j=1\), then one factor is \(0\) and the entire product equals \(0\). Meanwhile, $Y_s$ is then deduced to be 0.
	
	The condition $s_j=0$ for all $j\notin Q$ is equivalent to
	\begin{equation*}
		\operatorname{supp}(s)\subseteq Q
		\quad\Longrightarrow\quad
		\wt(s) \le |Q| \le k.
	\end{equation*}
	
	Therefore, whenever $\wt(s)>k$, the above condition cannot hold, and
	thus \(Y_s=0\).
	
	Back to real k-rUCGs, instead a single k-control gate, the target vector of the former is sum of the latter. Therefore, the conclusion whether a entry is 0 can be inherited.
	
	This completes the proof.
	\qed
	
	\section{Proof of Lemma~\ref{lem:min-group}: Construction of Maximum Independent Constant-Weight Prefix Sets}
	\label{app:min-group}
	
	This appendix provides the technical details underlying the grouping of constant-weight
	vectors in $\mathbb{F}_2^{n}$.
	
	\subsection{Modified Average Group Size}
	For
	\begin{equation*}
		W_{k}^{(n)} = \left\{ 
		x\in\{0,1\}^{n} ~|~ \mathrm{wt}(x) = k
		\right\},|W_{k}^{(n)}| {=} \binom{n}{k},
	\end{equation*}
	given the partition
	\(
	W_{k}^{(n)} = \bigcup_i G_i
	\)
	into linearly independent sets, the modified average group size $L_{n,k}$ is defined as
	\begin{equation*}
		L_{n,k}
		=\frac{\sum_{i\neq \text{shortest}} |G_i|}
		{(\text{number of groups})-1}.
	\end{equation*}
	This avoids instability caused by partially filled groups in recursive construction.
	Boundary values:
	\begin{equation*}
		L_{n,1}=n,\qquad L_{n,n-1}=n.
	\end{equation*}
	
	\subsection{Cross-Supplement Construction}
	Split the set by the last qubit:
	\begin{equation*}
		A=\{x0~|~x\in W_{k}^{(n-1)}\},\qquad
		B=\{y1~|~y\in W_{k-1}^{(n-1)}\}.
	\end{equation*}
	Inductively assume $A$ and $B$ have groupings with average sizes
	$L_{n-1,k}$ and $L_{n-1,k-1}$ respectively.
	
	Two supplement operations are performed:
	
	1) Move $B$ element to $A$ groups: since vector that end in 1 cannot be canceled out by any linear combination of vectors that end in 0, linear independence is preserved.
	
	2) \textbf{When $k$ is odd}, move $A$ element to $B$ groups: When ignoring the last bit, provided any odd-weight vector cannot lie in the span of even-weight vectors, linear independence is preserved too.
	
	Ignoring the shortest group yields the recurrence in the main text:
	\begin{equation*}
		L_{n,k}\ge
		\alpha_{n,k}(L_{n-1,k}+1)
		+\beta_{n,k}(L_{n-1,k-1}+e_{k}),
	\end{equation*}
	where $e_k=0$ if and only of $e_k$ is even, else $e_k=0$.
	
	\subsection{Lower Bound on group count}
	Induction on $n$ along the recurrence completes the derivation of the bound
	\begin{equation*}
		L_{n,k}\ge \frac{n+e_{k}}{2}.
	\end{equation*}
	Finally, the bound on the number of groups
	\begin{equation*}
		\ell_{k}\le 
		\left\lceil\frac{\binom{n}{k}}{(n+e_{k})/2}\right\rceil,
	\end{equation*}
	follows directly from the definition.
	
\end{document}